\def\@biblabel#1{}
\newtheorem{theorem}{Theorem}[section]
\newtheorem{lemma}{Lemma}[section]
\newtheorem{proposition}{Proposition}[section]
\newenvironment{proof}[1][Proof]{\textbf{#1.} }{\ \rule{0.5em}{0.5em}}
\newcommand{\Rmnum}[1]{\expandafter\@slowromancap\romannumeral #1@}
\def\tr{\;\mathrm{tr}\;}
\def\sign{\mathrm{sign}\, }
\def\<{\langle}
\def\>{\rangle}
\def\G{{\cal G}}
\begin{document}

\title{\bf Bayesian precision matrix estimation for graphical Gaussian models with edge and vertex symmetries}
\author{H\'{e}l\`{e}ne Massam, Qiong Li and Xin Gao  \\
{Department of Mathematics and Statistics, York University, Toronto, Canada}}

\date{}
\maketitle

\noindent {\bf ABSTRACT}: Graphical Gaussian models with edge and vertex symmetries were introduced by \citet{HojLaur:2008} who also gave an algorithm to  compute the maximum likelihood estimate of the precision  matrix for such models. In this paper, we take a Bayesian approach to the estimation of the precision matrix. We consider only those models where the symmetry constraints are imposed on the precision matrix and which thus form a natural exponential family with the precision matrix as the canonical parameter.

 We first identify the Diaconis-Ylvisaker conjugate prior for these models and develop a scheme to sample from the prior and posterior distributions. We thus obtain estimates of the posterior mean of the precision matrix.

 Second, in order to verify the precision of our estimate, we derive the explicit analytic expression of the expected value of the precision matrix when the graph underlying our model is a tree, a complete graph on three vertices and a decomposable graph on four vertices with various symmetries. In those cases, we compare our estimates with the exact value of the mean of the prior distribution. We also verify the accuracy of our estimates of the posterior mean on simulated data for graphs with up to thirty vertices and various symmetries.

\noindent {\bf KEY WORDS}: Conditional independence, symmetries, covariance estimation, trees, Diaconis-Ylvisaker conjugate priors, Metropolis-Hastings.

\section{Introduction}
Given an undirected  graph $G=(V,E)$ where $V$ is the set of vertices and $E\subset V\times V$ is the set of undirected edges denoted $(i,j)$, a graphical Gaussian model is a family of Gaussian distribution for $X=(X_v,\;v\in V)$ where the conditional independences between the components of $X$ can be represented by means of a graph as follows:
$$(i,j)\not \in E\Rightarrow X_i\perp X_j\mid X_{V\setminus\{i,j\}}.$$

The graphical Gaussian models with edge and vertex symmetries, which we here call the colored graphical Gaussian model, have been introduced in  \cite{HojLaur:2008}. These models are defined as graphical Gaussian models with three different types of symmetry constraints: equality of specified entries of the  inverse of the covariance matrix $K$, equality of specified entries of the correlation matrix  or equality of specified entries of $K$ generated by a subgroup of the automorphism group of $G$. These models are called respectively RCON, RCOR and RCOP models. In this paper, we consider only RCON   models which form a natural exponential family with the precision matrix $K$ as the canonical parameter. The model can be represented by colored graphs, where edges or vertices have the same coloring if the corresponding elements of the precision matrix are equal.

 \cite{HojLaur:2008}  proposed an algorithm to compute the maximum likelihood estimates of $K$. However, to the best of our knowledge, there is no work for Bayesian estimates of $K$. Since the RCON model is a natural exponential family, we use the \cite{dy79} (henceforth abbreviated DY) conjugate prior for $K$. This yields a distribution similar to the DY conjugate prior for graphical Gaussian models but with the symmetry constraints mentioned above. We will therefore call this conjugate prior the colored $G$-Wishart.

   Our sampling scheme is an adaptation of the independent
   Metropolis-Hastings (henceforth abbreviated MH) algorithm for the $G$-Wishart proposed by \cite{Mitsa:2011}.
 In the case of regular graphical Gaussian models (not colored), the DY  conjugate prior for $K$ is the so called $G$-Wishart and there are  a number of  sampling schemes for this distribution: see \cite{Pic:2000}, \cite{Mitsa:2011}, \cite{DobLR:2011}, \cite{Wang:2012}, \cite{Len:2013}, the more recent ones being generally more efficient than the preceding ones. However, by the very nature of these sampling schemes, only \cite{Mitsa:2011} and \cite{DobLR:2011} could be adapted to the colored $G$-Wishart. Moreover, we found that adapting \cite{DobLR:2011} leads to significant autocorrelation. The sampling scheme that we propose in Section 3 of this paper is therefore an adaptation to the colored $G$-Wishart of the sampling scheme for the $G$-Wishart given by   \cite{Mitsa:2011}.

In order to judge the accuracy of our sample, we need to either know the exact value of the expected value of the precision matrix or we need to proceed by simulations. We will do both. The RCON models for $X=(X_v, v\in V)$ that we consider in this paper are natural exponential families with density of the form
$$f(X;K)\propto \exp\{ \langle K, XX^t \rangle -\frac{1}{2}\log |K|\}$$
 where  $|K|$ is the determinant of the precision matrix  and $\langle A, B\rangle=\tr AB$ denotes the inner product of the two symmetric matrices $A$ and $B$. Let ${\cal G}$ be the colored version of $G$ to be defined in Section 2. The DY conjugate prior for $K$ is then of the form
 $$p(K;\delta, D)=\frac{1}{I_{\cal G}(\delta,D)}\exp \{-\frac{1}{2}(\langle K, D\rangle-(\delta-2)\log |K|)\}$$
 which is itself an exponential family in $K$ and thus the expected value of $K$ is given by the derivative of $\log I_{\cal G}(\delta,D)$ with respect to the canonical parameter.
 The difficulty is, of course, to compute  the normalizing constant $I_{\cal G}(\delta,D)$.

  Even for the uncolored $G$-Wishart, until recently, one did not know how to obtain the analytic expression of $I_G(\delta,D)$ unless $G$ was decomposable. Using an iterative method and special functions, \citet{uhler:2014} seem to have solved this very difficult problem but their results  do not extend to our colored $G$-Wishart with symmetry constraints which, as we shall see, add another level of difficulty. We  do give, in Section 4, the explicit analytic expression of the normalizing constant of the coloured $G$-Wishart for some special graphs $\G$: general trees,   star graphs, a complete graph on 3 vertices and a simple decomposable model on 4 vertices with various symmetry constraints. For these particular coloured graphs, the analytic expression of   $I_{\G}(\delta,D)$ allows us to verify the accuracy of the estimate of the prior mean obtained with the sampling scheme developed in Section 3 by comparing  it to the true prior mean obtained by differentiation of $I_{\G}(\delta,D)$.
The numerical results are given in Section 5.
 In Section 6, we compute the posterior mean for  two relatively high-dimensional examples where the underlying graphs are  cycles of length 20 and 30. Since we cannot compute the analytic expression of the normalizing constant in those cases, we compare the posterior mean estimate obtained through our sampling method of Section 3 with the true value of $K$ used for simulating the initial Gaussian data. Our  numerical results show good accuracy and small relative errors that decrease with sample size, as expected.

 \section{Preliminaries}
We will now recall  some definitions and concepts that we will need in the sequel.
Let $G=(V,E)$ be an undirected graph as defined in the introduction.
Let $P_G$ be the cone of $p\times p$ positive definite matrices $X$ with entry $X_{ij}=0$  whenever $(i,j)\not \in E$. It is well-known (see \cite{Laur:1996}) that the graphical Gaussian model Markov with respect to $G$ is the set of Gaussian $N(0,\Sigma)$ distributions
\begin{equation}
\label{gm}
{\cal N}_G=\{N(0,\Sigma)\mid K=\Sigma^{-1}\in P_G\}.
\end{equation}
The Diaconis-Ylvisaker conjugate prior for the parameter $K$ is the so-called $G$-Wishart distribution (see \cite{Rov:2002}) defined on $P_G$ and with density
$$p(K|\delta, D)=\frac{1}{I_{G}(\delta, D)}|K|^{(\delta-2)/2}\exp\{-\frac{1}{2} \langle K, D\rangle \},$$
where $\delta > 0$  and $D$, a symmetric positive definite $p\times p$ matrix, are the hyper parameters of the prior distribution on $K$ and $I_{G}(\delta, D)$ is the normalizing constant, namely,
$$I_{G}(\delta, D)=\int_{P_G} |K|^{(\delta-2)/2}\exp\{-\frac{1}{2} \langle K, D\rangle \}dK.$$
Let us now define the RCON model. Let  $\mathcal{V}=\{V_1,\dots,V_k\}$ form a partition  of $V=\{1,\ldots,p\}$ and let $\mathcal{E}=\{E_1,\ldots,E_l\}$ form a partition of the edge set $E$. If all the vertices belonging to an element $V_i$ of  ${\cal V}$ have the same colour, we say that ${\cal V}=\{V_1,\ldots,V_k\}$ is a colouring of $V$. Similarly if all the edges belonging to an element $E_i$ of ${\cal E}$ have the same colour, we say that ${\cal E}$ is a colouring of the edges of $ G$ and that $({\cal V}, {\cal E})$ is a coloured graph.

Consider model (\ref{gm}). If, for $K\in P_{ G}$, we impose the further restrictions that if
\newline $(C_1):\;$  $m$ is a vertex class  in ${\cal V}$, then for all $i\in m$, $K_{ii}$ are equal,
\newline $(C_2):\;$  $s$ is an edge class in $ {\cal E}$, then for all $(i,j)\in s$, the entries $K_{ij}$ of the precision matrix are equal,
 then model (\ref{gm}) becomes a coloured graphical Gaussian model called the RCON$({\cal V, \cal E})$ model.

For the computation of the analytic expression of  $I_{G}(\delta, D)$, we will need two special functions, the Bessel function of the third kind and the hypergeometric function  $_{p}F_{q}$. The Bessel function of the third kind  is defined as
\begin{equation*}
\label{KB}
K_{\lambda}(z)=\int_0^{\infty}u^{2\lambda-1}e^{-\frac{z}{2}(\frac{1}{u^2}+u^2)}du.\end{equation*}
For some special values of $\lambda$,  the Bessel function can be given explicitly
\begin{eqnarray*}
&&K_{1/2}(z)=\sqrt{\frac{\pi}{2}}z^{-1/2}e^{-z},\;\;
K_{3/2}(z)=\sqrt{\frac{\pi}{2}}(z^{-1/2}+z^{-3/2})e^{-z},\;\;\\
&&K_{5/2}(z)=\sqrt{\frac{\pi}{2}}(z^{-1/2}+3z^{-3/2}+3z^{-5/2})e^{-z}.
\end{eqnarray*}
We will also use the classical formula
\begin{equation*}
\label{K}
(\frac{p}{q})^{\frac{\lambda}{2}}K_{\lambda}(\sqrt{pq})=\int_0^{\infty}u^{2\lambda-1}e^{-\frac{1}{2}(\frac{p}{u^2}+qu^2)}du.
\end{equation*}
The hypergeometric function $_{p}F_{q}$ is defined by the power series:
$$_{p}F_{q}(a_{1},\ldots, a_{p};b_{1},\ldots,b_{q};z)=\sum\limits^{\infty}_{k=0}\frac{(a_{1})_{k}\cdots (a_{p})_{k}}{(b_{1})_{k}\cdots (b_{p})_{k}}\frac{z^{k}}{k!}$$
where
\[ (a)_{k} = \left\{
   \begin{array}{l l}
     1 & \quad \text{if $n=0$ }\\
     a(a+1)\cdots (a+n-1) & \quad \text{if $n>0$}\;.
   \end{array} \right.\]
The derivative of the hypergeometric function $_{p}F_{q}(a_{1},\ldots, a_{p};b_{1},\ldots,b_{q};z)$ is given by

\begin{equation}
\label{dhy}
\frac{d}{dz}[_{p}F_{q}(a_{1},\ldots, a_{p};b_{1},\ldots,b_{q};z)]=\frac{a_{1}\cdots a_{p}}{b_{1}\cdots b_{q}}  ( _{p}F_{q}(a_{1}+1,\ldots, a_{p}+1;b_{1}+1,\ldots,b_{q}+1;z))\;.
\end{equation}

\section{The coloured $G$-Wishart distribution: a sampling method }
\subsection{The coloured $G$-Wishart}
For $G$ an undirected graph, let ${\cal G}=({\cal V},{\cal E})$ denote its coloured version as defined in Section 2, and let $P_{\cal G}$ denote the cone of $p\times p$ positive definite matrices in $P_G$ which also obey the symmetry constraints of ${\cal G}$, i.e.
$$P_{\G}=\{K\in P_G\mid (C_1)\;\mbox{and}\; (C_2)\;\mbox{are satisfied}\}\;.$$

 We define the  $CG$-Wishart, i.e. the colored $G$-Wishart, to be the DY-conjugate prior for the parameter $K$ of the RCON$({\cal V, \cal E})$ model. Its density is
\begin{equation}
\label{cg}
p(K|\delta, D)=\frac{1}{I_{\cal G}(\delta, D)}|K|^{(\delta-2)/2}\exp\{-\frac{1}{2} \tr(KD)\}{\bf 1}_{P_{\cal G}}(K)
\end{equation}
where $\delta > 0$  and $D$, a symmetric
$p\times p$ matrix, are  hyper parameters  and $I_{\cal G}(\delta, D)$ is the normalizing constant, namely,
\begin{equation}
\label{norm}
I_{\cal G}(\delta, D)=\int_{P_{\cal G}} |K|^{(\delta-2)/2}\exp\{-\frac{1}{2} tr(KD)\}dK.
\end{equation}

 We will see that $I_{\G}(\delta, D)$ is finite only for $D$ in the dual cone of $P_{\cal G}$ which has to be  determined for each ${\cal G}$. We will derive the dual cones in the special cases that we consider in Section 4.

In this section, following what has been done in \cite{Mitsa:2011}, we want to derive a MH algorithm to sample from the  $CG$-Wishart. But in order to do so, following \cite{Ata:2005}, we want to express the density of the $CG$-Wishart in terms  of the Cholesky components of $K$ scaled by $D$. To this end, we consider the Cholesky decomposition of $D^{-1}$ and $K$ written as
$$D^{-1}=Q^TQ,\;\;\;K=\Phi^{T}\Phi$$
 with $Q=(Q_{ij})_{1\leq i\leq j\leq p}$ and $\Phi=(\Phi_{ij})_{1\leq i\leq j\leq p}$ upper triangular matrixes with real  positive diagonal entries
and we will use the variable
$$\Psi=\Phi Q^{-1}.$$
  We adapt the MH algorithm given in \citet{Mitsa:2011} for the $G$-Wishart to the $CG$-Wishart, using the variable $\Psi$ rather than $K$.  To do so, we first  define
 $$v_{u}(G)=\min\{(i,j):i\leq j, \;(i,j)\in u \in \mathcal{V}\cup \mathcal{E} \}$$
 where the minimum is defined according to the lexicographical order and
 $$v(G)=\bigcup \limits_{u \in \mathcal{V}\cup \mathcal{E}}v_{u}(G).$$
 We will write $K^{v(G)}=(K_{ij}|\;(i,j)\in v(G))$ for the free elements of $K$. The zero  and colouring constraints on the elements of $K$ determine the free entries $\Phi^{v(G)}=\{\Phi_{ij}:(i,j) \in v(G) \}$ and $\Psi^{v(G)}=\{\Psi_{ij}:(i,j) \in v(G) \}$ of the matrices $\Phi$ and $\Psi$ respectively. Each non-free element $\Phi_{ij}$
and $\Psi_{ij}$ with $(i,j) \notin v(G)$ is a function of the free elements $\Phi^{v(G)}$ and $\Psi^{v(G)}$ that precede it in the lexicographical order. The following two propositions give the expression of the non-free entries in function of the free ones and the free entries of $K$. The first part of each proposition can be found in \cite{Rov:2002}.

\begin{proposition}
\label{chokphi}
Let  $K=\Phi^{T}\Phi$ be an element of $P_{\cal G}$. Then the entries $\Phi_{ij}$ are such that
\begin{align}
\Phi_{ij} &=\frac{K_{ij}-\sum\limits^{i-1}_{k=1}\Phi_{ki}\Phi_{kj}}{\Phi_{ii}}, &\mbox{for}\;(i,j)\in v(G)\nonumber\\
\Phi_{1k}&=0, &\mbox{for}\;K_{1k}=0,\;k=2,\ldots,p\label{1}\\
\Phi_{ij}&=-\frac{\sum\limits^{i-1}_{k=1}\Phi_{ki}\Phi_{kj}}{\Phi_{ii}}&\mbox{for}\;K_{ij}=0,\ldots, p,\;  i \neq 1\nonumber\\
\Phi_{ij} &=\frac{\Phi_{i_{u}j_{u}}\Phi_{i_{u}i_{u}}+\sum\limits^{i_{u}-1}_{k=1}\Phi_{ki_{u}}\Phi_{kj_{u}}-\sum\limits^{i-1}_{k=1}\Phi_{ki}\Phi_{kj}}{\Phi_{ii}}&\mbox{for}\;K_{ij}\neq 0,\;(i,j) \in u \in \mathcal{V}\cup \mathcal{E},\;(i,j) \notin v(G)\label{phiij}\\
\Phi_{ii} &=|\Phi^2_{i_{u}i_{u}}+\sum\limits^{i_{u}-1}_{k=1}\Phi^2_{ki_{u}}-\sum\limits^{i-1}_{k=1}\Phi^{2}_{ki}|^{\frac{1}{2}}, &\mbox{for}\;i=1,\ldots,p.\label{phiii}
\end{align}
where $(i_{u},j_{u})=\min\{(i,j): i\leq j$ and $(i,j)\in u \in \mathcal{V}\cup \mathcal{E} \}$ in the lexicographical order.
\end{proposition}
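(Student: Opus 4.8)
The plan is to expand the Cholesky identity $K=\Phi^{T}\Phi$ entrywise and then impose, one class of restrictions at a time, the three conditions defining $P_{\mathcal G}$: the zero pattern inherited from $G$, the vertex colour classes in $\mathcal V$, and the edge colour classes in $\mathcal E$. Since $\Phi$ is upper triangular, for $i\le j$ one has $K_{ij}=\sum_{k=1}^{i}\Phi_{ki}\Phi_{kj}=\Phi_{ii}\Phi_{ij}+\sum_{k=1}^{i-1}\Phi_{ki}\Phi_{kj}$, and in particular $K_{ii}=\Phi_{ii}^{2}+\sum_{k=1}^{i-1}\Phi_{ki}^{2}$. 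Solving the first identity for $\Phi_{ij}$ gives at once the expression for $(i,j)\in v(G)$ (the case $j=i$ also covering $\Phi_{ii}$ when $(i,i)\in v(G)$), and this part is exactly the formula of \cite{Rov:2002}; substituting $K_{ij}=0$ in the same identity and isolating $\Phi_{ij}$ gives the third display line, which for $i=1$ degenerates, because of the empty sum, to $\Phi_{1k}=0$.

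For the colouring part, fix $u\in\mathcal V\cup\mathcal E$ with minimal element $(i_{u},j_{u})=v_{u}(G)$ and take $(i,j)\in u$ with $(i,j)\notin v(G)$. Constraint $(C_{1})$ or $(C_{2})$ forces $K_{ij}=K_{i_{u}j_{u}}$. Writing the Cholesky identity at both $(i,j)$ and $(i_{u},j_{u})$ and equating the two expressions for this common value gives $\Phi_{ii}\Phi_{ij}+\sum_{k=1}^{i-1}\Phi_{ki}\Phi_{kj}=\Phi_{i_{u}i_{u}}\Phi_{i_{u}j_{u}}+\sum_{k=1}^{i_{u}-1}\Phi_{ki_{u}}\Phi_{kj_{u}}$, and isolating $\Phi_{ij}$ produces (\ref{phiij}). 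When $u$ is a vertex class the same computation with $j=i$ and $j_{u}=i_{u}$ yields $\Phi_{ii}^{2}=\Phi_{i_{u}i_{u}}^{2}+\sum_{k=1}^{i_{u}-1}\Phi_{ki_{u}}^{2}-\sum_{k=1}^{i-1}\Phi_{ki}^{2}$; taking the root and recalling that, by convention, $\Phi$ has positive diagonal gives (\ref{phiii}), the identity being trivially true when $i=i_{u}$.

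What still has to be checked is that these formulas genuinely define a recursion in the lexicographic order, i.e. that every $\Phi$-entry on a right-hand side strictly precedes the entry being computed; this is the only step that is an argument rather than a rearrangement. In each partial sum the running index satisfies $k<i$ (and $k<i_{u}\le i$ in the colouring line), so $\Phi_{ki},\Phi_{kj},\Phi_{ki_{u}},\Phi_{kj_{u}}$ all precede $(i,j)$; since $(i,j)\notin v(G)$ while $(i_{u},j_{u})=v_{u}(G)\in v(G)$, one has $(i_{u},j_{u})<(i,j)$ strictly, hence also $(i_{u},i_{u})\le(i_{u},j_{u})<(i,j)$, so $\Phi_{i_{u}j_{u}}$ and $\Phi_{i_{u}i_{u}}$ are already available; and the divisor $\Phi_{ii}$ is obtained beforehand from (\ref{phiii}) — or, for $(i,i)\in v(G)$, from the first line — whose right-hand side again involves only $\Phi_{ki}$ with $k<i$ together with the already-computed $\Phi_{i_{u}i_{u}}$. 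Finally, for a genuine $K\in P_{\mathcal G}$ the quantity inside the absolute value in (\ref{phiii}) is positive, so the absolute value is only a device keeping the map well-defined when free entries are perturbed, as needed in the sampling scheme below. I expect the lexicographic bookkeeping to be the one genuinely delicate point, everything else being a direct manipulation of the Cholesky identity.
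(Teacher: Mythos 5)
Your proof is correct and follows essentially the same route as the paper's: expand $K_{ij}=\sum_{k=1}^{i}\Phi_{ki}\Phi_{kj}$ from the upper-triangular Cholesky factorization, impose $K_{ij}=K_{i_{u}j_{u}}$ for entries in the same colour class, and solve for $\Phi_{ij}$ (the paper simply cites \citet{Rov:2002} for the first three displayed equations rather than rederiving them). Your additional check that the right-hand sides only involve entries preceding $(i,j)$ in the lexicographic order is a worthwhile point the paper leaves implicit, but it does not change the argument.
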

\begin{proof}
The first three equations can be found in \citet{Rov:2002}. We will only prove \eqref{phiij} since \eqref{phiii}  will follow immediately from  it.
For all $(i,j)\in u \in \mathcal{V}\cup \mathcal{E}$ and $(i,j)\neq (i_{u}, j_{u})\in u$, by \eqref{1}, we have that
$K_{i_{u}j_{u}}=\sum\limits^{i_{u}}_{k=1}\Phi_{ki_{u}}\Phi_{kj_{u}}$
and in general
$K_{ij}=\sum\limits^{i}_{k=1}\Phi_{ki}\Phi_{kj}.$
Since $K_{ij}=K_{i_{u}j_{u}}$, it follows that
$$\Phi_{i_{u}j_{u}}\Phi_{i_{u}i_{u}}+\sum\limits^{i_{u}-1}_{k=1}\Phi_{ki_{u}}\Phi_{kj_{u}}=\Phi_{ii}\Phi_{ij}+\sum\limits^{i-1}_{k=1}\Phi_{ki}\Phi_{kj}.$$
Equations \eqref{phiij} and \eqref{phiii} follow then immediately.
\end{proof}
\begin{proposition}
\label{chokpsi}
For $K=Q^T(\Psi^T\Psi)Q\in P_{\cal G}$ with $\Psi$ and $Q$ as defined above, the entries $\Psi_{ij}$ of $\Psi$ are as follows:
\begin{align}
\Psi_{rs} &=\sum\limits_{j=r}^{s-1}-\Psi_{rj}\frac{Q_{js}}{Q_{ss}}+\frac{\Phi_{rs}}{Q_{ss}}\;\;\;\;\mbox{for}\; (r,s)\in v(G),\;r\neq s\;,\label{2}\\
\Psi_{ss}&=\frac{\Phi_{ss}}{Q_{ss}}\;\;\;\;\mbox{for}\;(r,s)\in v(G),\;r=s\;,\nonumber\\
\Psi_{rs}&=\sum\limits_{j=r}^{s-1}-\Psi_{rj}\frac{Q_{js}}{Q_{ss}}-\sum\limits^{r-1}_{i=1}(\frac{\Psi_{ir}+\sum\limits^{r-1}_{j=i}\Psi_{ij}\frac{Q_{jr}}{Q_{rr}}}{\Psi_{rr}})(\Psi_{is}+\sum\limits^{s-1}_{j=i}\Psi_{ij}\frac{Q_{js}}{Q_{ss}})\;\;\;\;\mbox{for}\;K_{rs}=0,\;r\neq 1\;,\nonumber\\
\Psi_{1s}&=\sum\limits_{j=1}^{s-1}(-\Psi_{1j}\frac{Q_{js}}{Q_{ss}})\;\;\;\;\mbox{for}\;K_{1s}=0\;,\nonumber\\
\Psi_{rs}&=\frac{\Phi_{i_{u}j_{u}}\Phi_{i_{u}i_{u}}+\sum\limits^{i_{u}-1}_{k=1}\Phi_{ki_{u}}\Phi_{kj_{u}}-\sum\limits^{r-1}_{k=1}\Phi_{kr}\Phi_{ks}}{\Phi_{rr}Q_{ss}}-\sum\limits^{s-1}_{j=r}\Psi_{rj}\frac{Q_{js}}{Q_{ss}}\;\;\;\;\mbox{for}\;K_{rs}\neq 0,\label{psirs}\\
&\; \hspace{9cm}(r,s) \in u \in \mathcal{V} \cup \mathcal{E}, (r,s)\notin v(G)\;,\nonumber\\
\Psi_{ss}&=\frac{|\Phi^2_{i_{u}i_{u}}+\sum\limits^{i_{u}-1}_{k=1}\Phi^2_{ki_{u}}-\sum\limits^{r-1}_{k=1}\Phi^{2}_{ks}|^{\frac{1}{2}}}{Q_{ss}}\;\;\;\;\mbox{for}\;s=1,\ldots,p\;.\label{psiss}
\end{align}
 \end{proposition}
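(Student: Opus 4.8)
The plan is to derive the whole proposition from the single matrix identity $\Phi=\Psi Q$, which is immediate from the definition $\Psi=\Phi Q^{-1}$, by feeding into it the formulas for the entries of $\Phi$ already established in Proposition~\ref{chokphi}. First I would read $\Phi=\Psi Q$ entrywise: since $\Psi$ and $Q$ are both upper triangular, for $r\le s$ one obtains
$$\Phi_{rs}=\sum_{j=r}^{s}\Psi_{rj}Q_{js}=\Psi_{rs}Q_{ss}+\sum_{j=r}^{s-1}\Psi_{rj}Q_{js}\;.$$
Solving for $\Psi_{rs}$ gives $\Psi_{rs}=\Phi_{rs}/Q_{ss}-\sum_{j=r}^{s-1}\Psi_{rj}Q_{js}/Q_{ss}$ when $r<s$, and $\Psi_{ss}=\Phi_{ss}/Q_{ss}$ on the diagonal; these are exactly \eqref{2} and the line following it. This representation also makes the recursion well posed, since the right-hand side involves only $\Phi_{rs}$ together with entries $\Psi_{rj}$ that precede $(r,s)$ lexicographically, and shows $\Psi_{ss}=\Phi_{ss}/Q_{ss}>0$.

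Next I would substitute, case by case, the expression for the relevant entry of $\Phi$ supplied by Proposition~\ref{chokphi} into these two formulas. When $K_{1s}=0$ we have $\Phi_{1s}=0$, so the first formula collapses to $\Psi_{1s}=-\sum_{j=1}^{s-1}\Psi_{1j}Q_{js}/Q_{ss}$. When $K_{rs}=0$ with $r\neq1$, I would insert $\Phi_{rs}=-\sum_{k=1}^{r-1}\Phi_{kr}\Phi_{ks}/\Phi_{rr}$ and then apply $\Phi=\Psi Q$ once more to each factor, namely $\Phi_{kr}=Q_{rr}\bigl(\Psi_{kr}+\sum_{j=k}^{r-1}\Psi_{kj}Q_{jr}/Q_{rr}\bigr)$, similarly for $\Phi_{ks}$, and $\Phi_{rr}=\Psi_{rr}Q_{rr}$; the $Q_{rr}$ and $Q_{ss}$ factors then cancel and leave precisely the displayed formula for this case. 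For a colour-class entry, i.e. $(r,s)\in u\in\mathcal V\cup\mathcal E$ with $K_{rs}\neq0$ and $(r,s)\notin v(G)$, I would plug \eqref{phiij} directly into $\Psi_{rs}=\Phi_{rs}/Q_{ss}-\sum_{j=r}^{s-1}\Psi_{rj}Q_{js}/Q_{ss}$, which yields \eqref{psirs} with no further rewriting; and for a diagonal entry I would substitute \eqref{phiii} into $\Psi_{ss}=\Phi_{ss}/Q_{ss}$ to obtain \eqref{psiss} (recalling $r=s$ there).

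All of these manipulations are routine algebra. The one place I expect to have to be careful is the case $K_{rs}=0$, $r\neq1$: one must keep track of which diagonal factor ($Q_{rr}$ or $Q_{ss}$) is pulled out of $\Phi_{kr}$ versus $\Phi_{ks}$ and verify that division by $\Phi_{rr}=\Psi_{rr}Q_{rr}$ removes both, so that no free $Q_{rr}$ survives in the final expression and it matches the stated formula exactly. As the text already notes for Proposition~\ref{chokphi}, the displays not involving a colour class are essentially those of \citet{Rov:2002}, so strictly only the colour-class lines \eqref{psirs}--\eqref{psiss} (and the precise form of the zero-constraint lines) are new, but the plan above establishes all of them uniformly.
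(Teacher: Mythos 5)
Your proposal is correct and follows essentially the same route as the paper: read $\Phi=\Psi Q$ entrywise as $\Phi_{rs}=\Psi_{rs}Q_{ss}+\sum_{j=r}^{s-1}\Psi_{rj}Q_{js}$, substitute the expressions for $\Phi_{rs}$ from Proposition~\ref{chokphi}, and solve for $\Psi_{rs}$. The only difference is that the paper writes out this computation only for the new colour-class case \eqref{psirs} (with \eqref{psiss} following), while you carry the same substitution through the zero-constraint and first-row cases as well; your check that the $Q_{rr}$ factors cancel against $\Phi_{rr}=\Psi_{rr}Q_{rr}$ in the $K_{rs}=0$, $r\neq1$ case is the right thing to verify and does work out.
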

\begin{proof}
 We will prove \eqref{psirs} and therefore \eqref{psiss}. Since $\Phi=\Psi Q$, for $r \neq s$, we have
 $$\Phi_{rs}=\Psi_{rs}Q_{ss}+\sum\limits^{s-1}_{j=r}\Psi_{rj}Q_{js}.$$
On the other hand, by \eqref{2}, we have
$$\Phi_{rs}=\frac{\Phi_{i_{u}j_{u}}\Phi_{i_{u}i_{u}}+\sum\limits^{i_{u}-1}_{k=1}\Phi_{ki_{u}}\Phi_{kj_{u}}-\sum\limits^{r-1}_{k=1}\Phi_{kr}\Phi_{ks}}{\Phi_{rr}}.$$
It then follows that

$$\Psi_{rs}Q_{ss}+\sum\limits^{s-1}_{j=r}\Psi_{rj}Q_{js}=\frac{\Phi_{i_{u}j_{u}}\Phi_{i_{u}i_{u}}+\sum\limits^{i_{u}-1}_{k=1}\Phi_{ki_{u}}\Phi_{kj_{u}}-\sum\limits^{r-1}_{k=1}\Phi_{kr}\Phi_{ks}}{\Phi_{rr}}$$
which implies \eqref{psirs} and \eqref{psiss}.
\end{proof}

Next we compute the Jacobian of the change of variable from $K$ to $\psi^{v(G)}$ in two steps.

\begin{lemma}
\label{jac1}
Let $K$ be in $P_{\cal G}$. Let $v_i^G$ be the number $j\in \{i,\ldots,p\}$ such that $(i,j)\not\in v(G)$. Then the Jacobian of the change of variable $K^{v(G)}\rightarrow \Phi ^{v(G)}$  as defined above is
$$\det (J(K^{v(G)}\rightarrow \Phi^{v(G)}))=2^{|\mathcal{V}|}\prod\limits^{p}_{i=1}\Phi_{ii}^{p-i+1-v_i^G}$$
where  $|\mathcal{V}|$ is the number of vertex color class of $\G$.
\end{lemma}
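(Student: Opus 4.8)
The plan is to compute the Jacobian by ordering the free variables lexicographically and exhibiting the change of variable $K^{v(G)} \to \Phi^{v(G)}$ as a triangular map, so that the determinant is simply the product of the diagonal partial derivatives $\partial K_{ij}/\partial \Phi_{ij}$ over $(i,j) \in v(G)$. First I would record, from Proposition \ref{chokphi} and the basic Cholesky identity, that for $(i,j) \in v(G)$ with $i < j$ we have $K_{ij} = \sum_{k=1}^{i}\Phi_{ki}\Phi_{kj} = \Phi_{ii}\Phi_{ij} + (\text{terms in } \Phi_{k\ell} \text{ with } k<i)$, while for a diagonal free variable $K_{ii}$ we have $K_{ii} = \sum_{k=1}^{i}\Phi_{ki}^2 = \Phi_{ii}^2 + (\text{terms with } k<i)$. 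The key structural point is that every entry $\Phi_{k\ell}$ appearing in the "lower order" remainder — whether it is itself free or is a dependent entry determined by \eqref{phiij}–\eqref{phiii} — is a function only of free $\Phi$-variables $\Phi_{rs}$ with $(r,s)$ strictly preceding $(i,j)$ in the lexicographic order. Hence, listing the free $K$-entries and the free $\Phi$-entries in the same lexicographic order, the Jacobian matrix is lower triangular.

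Next I would read off the diagonal entries. For an off-diagonal free entry $(i,j) \in v(G)$, $\partial K_{ij}/\partial \Phi_{ij} = \Phi_{ii}$; for a diagonal free entry $(i,i)$, $\partial K_{ii}/\partial \Phi_{ii} = 2\Phi_{ii}$. The factor $2$ occurs exactly once for each diagonal free entry of $\Phi$, and there is precisely one diagonal free entry per vertex colour class (namely at $v_u(G)$ for $u \in \mathcal{V}$), which accounts for the factor $2^{|\mathcal{V}|}$. It remains to count, for each fixed row index $i$, how many free entries $(i,j)$ with $j \ge i$ there are: this is exactly the number of $j \in \{i,\dots,p\}$ with $(i,j) \in v(G)$, i.e. $(p-i+1) - v_i^G$ by the definition of $v_i^G$ as the count of $j$ in that range with $(i,j)\notin v(G)$. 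Each such free entry contributes one factor of $\Phi_{ii}$ to the determinant (the diagonal free entry contributing $2\Phi_{ii}$, already split into the $2$ and a $\Phi_{ii}$). Multiplying over all $i$ gives $2^{|\mathcal{V}|}\prod_{i=1}^{p}\Phi_{ii}^{\,p-i+1-v_i^G}$.

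The main obstacle — and the step deserving the most care — is justifying the triangularity claim when dependent entries $\Phi_{k\ell}$ with $(k,\ell)\notin v(G)$ are present: one must check that equations \eqref{phiij} and \eqref{phiii} never reintroduce a free variable $\Phi_{rs}$ with $(r,s)$ at or beyond the current position $(i,j)$. This follows because in \eqref{phiij} the entry $\Phi_{ij}$ is expressed through $\Phi_{i_u j_u}$ and through sums $\sum_{k=1}^{i_u - 1}$ and $\sum_{k=1}^{i-1}$, all of whose indices $(k,\cdot)$ have first coordinate strictly smaller than $i$, and through $\Phi_{i_u i_u}, \Phi_{ii}$ on the diagonal; since $(i_u,j_u) = v_u(G) \preceq (i,j)$ and in fact $(i_u,j_u) \in v(G)$ precedes $(i,j)$, an easy induction on the lexicographic order shows every $\Phi$-entry used is a function of strictly earlier free entries. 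Once this is in place, the determinant of a triangular matrix is the product of its diagonal, and the counting above finishes the proof.
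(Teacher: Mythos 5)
Your proposal is correct and follows essentially the same route as the paper's proof: order the free entries lexicographically, observe that the Jacobian is triangular because $\partial K_{ij}/\partial\Phi_{ks}=0$ for $(k,s)>(i,j)$, read off the diagonal entries $2\Phi_{ii}$ (diagonal free entries, one per vertex colour class) and $\Phi_{ii}$ (off-diagonal free entries), and count exponents via $|\{j\ge i:(i,j)\in v(G)\}|=p-i+1-v_i^G$. Your explicit verification that the dependent entries $\Phi_{k\ell}$ with $(k,\ell)\notin v(G)$ only involve strictly earlier free variables is a point the paper leaves implicit, but it is the same argument.
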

\begin{proof}
Order the elements of both matrices $K$ and $\Phi$ according to the lexicographic order.
For $(i,j)\in v(G)$, differentiating \eqref{1} yields
\begin{align}
&\frac{\partial K_{ii}}{\partial \Phi_{ii}}=2\Phi_{ii}\;,&\frac{\partial K_{ii}}{\partial \Phi_{ks}}=0\;\mbox{for}\; (k,s)>(i,i)\;,\\
&\frac{\partial K_{ij}}{\partial \Phi_{ij}}=\Phi_{ii}\;,& \frac{\partial K_{ij}}{\partial \Phi_{ks}}=0\;\mbox{for}\;(k,s)>(i,j),\;\;i\neq j.
\end{align}
Therefore, the Jacobian is an upper-triangular matrix and its determinant is the product of the diagonal elements.
The lemma then follows immediately from the fact that  for $i\in \{1,\ldots,p\}$ given, the cardinality of the set $\{(i,j)\in v(G),\; (i,j)\geq (i,i)\}$ is $p-i+1-v_i^G$.
\end{proof}

\begin{lemma}
\label{jac2}
Let  $K$ be in $P_{\cal G}$. Let  $d_{i}^{G}=\mid\{j:j\leq i,\;(j,i)\notin v(G)\}\mid$.
The Jacobian of the change of variable $\Phi^{v(G)}\rightarrow\Psi^{v(G)}$ where $\Phi$ and $\Psi$ are as defined above  is
$$\det (J(\Phi^{v(G)} \rightarrow \Psi^{v(G)}))=\prod\limits^{p}_{i=1}Q_{ii}^{i-d_{i}^{G}}.$$
\end{lemma}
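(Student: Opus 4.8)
The proof parallels that of Lemma~\ref{jac1}. Since $Q$ is upper triangular with positive diagonal and $\Psi=\Phi Q^{-1}$, we have $\Phi=\Psi Q$, that is, for $r\leq s$,
\[
\Phi_{rs}=\Psi_{rs}Q_{ss}+\sum_{j=r}^{s-1}\Psi_{rj}Q_{js}.
\]
Order the pairs $(i,j)$ with $i\leq j$ lexicographically and regard each $\Phi_{rs}$, $(r,s)\in v(G)$, as a function of the free coordinates $\Psi^{v(G)}$. The key preliminary observation is that, by induction on the lexicographic order, every entry $\Psi_{rj}$ --- free or not --- is a function of the free coordinates $\Psi_{r'j'}$ with $(r',j')\leq (r,j)$ only. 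This is clear when $(r,j)\in v(G)$; when $(r,j)\notin v(G)$ it follows from the formulas of Proposition~\ref{chokpsi} for the non-free entries (the equations for $K_{rs}=0$, for $K_{1s}=0$, \eqref{psirs} and \eqref{psiss}), since each of them expresses $\Psi_{rj}$ in terms of $\Psi$-entries with index $<(r,j)$ and of entries $\Phi_{ab}$ with $(a,b)<(r,j)$, while the latter are, again via $\Phi=\Psi Q$, functions of the $\Psi$-coordinates indexed by pairs $\leq(a,b)<(r,j)$.

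Given this, for $(r,s)\in v(G)$ the displayed expression for $\Phi_{rs}$ involves only free coordinates that are $\leq (r,s)$, and among them only $\Psi_{rs}$ enters through the term $\Psi_{rs}Q_{ss}$: each $\Psi_{rj}$ with $j<s$ occurring in the sum depends, when non-free, solely on coordinates $<(r,j)<(r,s)$, hence not on $\Psi_{rs}$. Consequently, with $\Psi^{v(G)}$ and $\Phi^{v(G)}$ both listed in lexicographic order, the Jacobian matrix $J(\Phi^{v(G)}\rightarrow\Psi^{v(G)})$ is triangular, and its diagonal entries are $\partial\Phi_{rs}/\partial\Psi_{rs}=Q_{ss}$ for $(r,s)\in v(G)$ (the diagonal case $\Phi_{ss}=\Psi_{ss}Q_{ss}$ included). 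Therefore
\[
\det\bigl(J(\Phi^{v(G)}\rightarrow\Psi^{v(G)})\bigr)=\prod_{(r,s)\in v(G)}Q_{ss}=\prod_{i=1}^{p}Q_{ii}^{\,|\{r\,:\,r\leq i,\ (r,i)\in v(G)\}|}=\prod_{i=1}^{p}Q_{ii}^{\,i-d_{i}^{G}},
\]
since $|\{r:r\leq i\}|=i$ and $d_{i}^{G}=|\{r:r\leq i,\ (r,i)\notin v(G)\}|$.

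The one delicate point is the inductive claim in the first paragraph: one must check that the non-free entries $\Psi_{rj}$ do not spoil the triangular structure, and this is precisely where the explicit recursions of Proposition~\ref{chokpsi} are used --- each non-free entry is assembled only from lexicographically earlier $\Psi$- and $\Phi$-entries. Once that is in place, both the triangularity of $J(\Phi^{v(G)}\rightarrow\Psi^{v(G)})$ and the identification of its diagonal entries with the $Q_{ss}$ are immediate, and what remains is the bookkeeping of exponents carried out above.
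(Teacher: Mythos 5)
Your proof is correct and follows essentially the same route as the paper: differentiate $\Phi_{rs}=\Psi_{rs}Q_{ss}+\sum_{j=r}^{s-1}\Psi_{rj}Q_{js}$, observe that the Jacobian is upper triangular in the lexicographic order with diagonal entries $Q_{ss}$, and count the free pairs $(r,i)$ with $r\le i$ to get the exponent $i-d_i^G$. The only difference is that you spell out the inductive claim that non-free entries $\Psi_{rj}$ depend solely on lexicographically earlier free coordinates, which the paper leaves implicit; this is a welcome clarification rather than a divergence.
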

\begin{proof}
Order the elements of both matrices $\Phi$ and $\Psi$ according to the lexicographic
order. For $(r,s) \in v(G)$, differentiating \eqref{2}, we obtain
\begin{align}
&\frac{\partial \Phi_{rs}}{\partial \Psi_{ss}}=Q_{ss}\;,&\frac{\partial \Phi_{rs}}{\partial \Psi_{ij}}=0\;\mbox{for}\;(i,j)>(r,s).
\end{align}
The Jacobian is thus an upper-triangular matrix and its determinant is the product of the diagonal elements. The lemma follows from the definition of $d_{i}^{G}.$
\end{proof}

\begin{theorem}
\label{density}
Let $\mathcal{G}=(\mathcal{V}, \mathcal{E})$ be an arbitrary $p$-dimensional colored graph. Then the density of the $CG$-Wishart distribution expressed in terms of $\Psi^{v(G)}$ is
\begin{eqnarray}
p(\Psi^{v(G)}|\delta,D) =\frac{2^{|\mathcal{V}|}}{I_{G}(\delta,D)}\prod\limits_{i=1}^{p}Q_{ii}^{p-v_{i}^{G}-d_{i}^{G}+\delta-1}\prod\limits_{i=1}^{p}\Psi_{ii}^{p-i-v_{i}^{G}+\delta-1}
e^{-\frac{1}{2}\sum\limits_{i=1}^{p}\sum\limits^{p}_{j=i}\Psi_{ij}^{2}}.\label{dens}
\end{eqnarray}
\end{theorem}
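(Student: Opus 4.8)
The plan is to start from the $CG$-Wishart density in the form \eqref{cg}, written as a function of $K$ over $P_{\cal G}$, and push it through the two successive changes of variable $K^{v(G)}\rightarrow\Phi^{v(G)}\rightarrow\Psi^{v(G)}$ whose Jacobians are supplied by Lemma \ref{jac1} and Lemma \ref{jac2}. The first step is to rewrite the two ingredients of the density in the new coordinates. For the determinant term, since $K=\Phi^T\Phi$ with $\Phi$ upper triangular with positive diagonal, $|K|=\prod_{i=1}^p\Phi_{ii}^2$; and since $\Psi=\Phi Q^{-1}$ with $D^{-1}=Q^TQ$, we get $\Phi_{ii}=\Psi_{ii}Q_{ii}$, so $|K|^{(\delta-2)/2}=\prod_i(\Psi_{ii}Q_{ii})^{\delta-2}$. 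For the trace term, $\langle K,D\rangle=\tr(\Phi^T\Phi\, Q^{-1}(Q^{-1})^T)=\tr((\Phi Q^{-1})(\Phi Q^{-1})^T)=\tr(\Psi\Psi^T)=\sum_{i\le j}\Psi_{ij}^2$, which is exactly the exponent appearing in \eqref{dens}. This is the standard Bartlett-type reduction and is routine.

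The second step is to assemble the Jacobian. Multiplying the two Jacobians from Lemmas \ref{jac1} and \ref{jac2} gives
$$\det J(K^{v(G)}\rightarrow\Psi^{v(G)})=2^{|\mathcal{V}|}\prod_{i=1}^p\Phi_{ii}^{\,p-i+1-v_i^G}\prod_{i=1}^pQ_{ii}^{\,i-d_i^G}.$$
Now substitute $\Phi_{ii}=\Psi_{ii}Q_{ii}$ again to express everything in $\Psi_{ii}$ and the fixed constants $Q_{ii}$, collect the exponents of $Q_{ii}$ coming from the determinant term, from $\Phi_{ii}^{p-i+1-v_i^G}$, and from $Q_{ii}^{\,i-d_i^G}$, and likewise collect the exponents of $\Psi_{ii}$ coming from the determinant term and from $\Phi_{ii}^{p-i+1-v_i^G}$. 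The $\Psi_{ii}$ exponent is $(\delta-2)+(p-i+1-v_i^G)=p-i-v_i^G+\delta-1$, matching \eqref{dens}; the $Q_{ii}$ exponent is $(\delta-2)+(p-i+1-v_i^G)+(i-d_i^G)=p-v_i^G-d_i^G+\delta-1$, also matching \eqref{dens}. Combining these factors with the $2^{|\mathcal{V}|}/I_G(\delta,D)$ constant and the exponential $e^{-\frac12\sum_{i\le j}\Psi_{ij}^2}$ yields exactly \eqref{dens}.

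The one genuine subtlety — and the step I expect to be the main obstacle — is bookkeeping rather than analysis: one must be certain that, under the colouring and zero constraints encoded in $v(G)$, the non-free entries $\Phi_{ij}$ and $\Psi_{ij}$ with $(i,j)\notin v(G)$ are indeed determined by Propositions \ref{chokphi} and \ref{chokpsi} as functions of only the preceding free entries, so that ordering the coordinates lexicographically really does make both Jacobian matrices upper triangular (this is what Lemmas \ref{jac1} and \ref{jac2} assert and use). Granting those lemmas, as we may since they appear earlier, the remaining work is precisely the two substitutions $\Phi_{ii}=\Psi_{ii}Q_{ii}$ and the two exponent additions described above, together with the observation that $dK$ here means $dK^{v(G)}$, the integral being over the free coordinates only. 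No blank lines, no new obstacles: the theorem follows by direct substitution.
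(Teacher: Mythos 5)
Your proposal is correct and follows essentially the same route as the paper: the paper's own proof simply cites $|K|=\prod_{i=1}^p\Phi_{ii}^2$, the identity $\langle K,D\rangle=\sum_{i\le j}\Psi_{ij}^2$, and the Jacobians of Lemmas \ref{jac1} and \ref{jac2}, exactly the three ingredients you combine. Your explicit exponent bookkeeping via $\Phi_{ii}=\Psi_{ii}Q_{ii}$ is accurate and merely spells out what the paper leaves as ``follows immediately.''
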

\begin{proof}
The expression of  $p(\Phi^{v(G)}|\delta,D)$  above  follows immediately from the fact that $|K|=\prod\limits^{p}_{i=1}\Phi_{ii}^{2}$, that $\langle K,D\rangle =\sum_{i=1}^p\sum_{j=1}^p\Psi_{ij}^2$ and from the expressions of the Jacobians given in Lemmas \ref{jac1} and \ref{jac2}.
\end{proof}
\subsection{The sampling algorithm}
We now briefly describe the MH algorithm we use to sample from the density \eqref{dens}. We first note that if we make the further change of variables
$$(\Psi_{ii}, (i,i)\in v(G), \Psi_{ij}, (i,j)\in v(G), i\neq j)\mapsto (t_{ii}=\Psi_{ii}^2, (i,i)\in v(G), \Psi_{ij}, (i,j)\in v(G), i\neq j)\;,$$
we obtain
\begin{eqnarray*}
p(t_{ii}, (i,i)\in v(G), \Psi_{ij}, (i,j)\in v(G), i\neq j|\delta,D)&\propto &\prod\limits_{i=1}^{p}t_{ii}^{\frac{p-i-v_{i}^{G}+\delta}{2}-1}e^{-\frac{1}{2}\sum\limits_{i=1}t_{ii}}e^{-\frac{1}{2}\sum\limits^p_{i=1}\sum\limits^{p}_{j=i+1}\Psi_{ij}^{2}}
\end{eqnarray*}
and we observe that $t_{ii}^{\frac{p-i-v_{i}^{G}+\delta}{2}-1}e^{-\frac{1}{2}\sum\limits_{i=1}t_{ii}}$ has the form of a $\chi^2_{p-i-v_{i}^{G}+\delta}$ distribution.

We denote by $\Psi^{[s]}$ and $\Psi^{[s+1]}$ the current state of the chain and the next state of the chain, respectively. We denote $\Psi'$ the candidate for $\Psi^{[s+1]}$. We also use the  notation
$$\Psi_{v(G)^c}=
\Big(\Psi_{ij}, (i,j)\in v(G)^c\Big) $$
where $v(G)^c$ is the complement of $v(G)$ in $V\times V$.
For $(i,j) \in v(G)$, an element $\Psi_{ij}^{[s]}$ is updated by sampling a value $\Psi'_{ij}$ from a normal distribution with zero mean and standard deviation equal to one. For $(i,i) \in v(G)$, a element $\Psi_{ii}^{[s]}$ is updated by sampling a value $(\Psi^2_{ii})'$ from a chi-square distribution with
$p-i-v_{i}^{G}+\delta$ degrees of freedom. The non-free elements of $\Psi'$ are uniquely defined by the functions in Proposition \ref{chokphi} and Proposition \ref{chokpsi}. The Markov chain moves to $\Psi'$ with probability $$\min\{\frac{h[(\Psi')_{v(G)^c}]}{h[(\Psi^{[s]})_{v(G)^c}]},1\},$$
where
$$h(\Psi_{v(G)^c}) =\prod\limits_{(i,i) \in v(G)^c}\Psi_{ii}^{p-i-v_{i}^{G}+\delta-1} \exp(-\frac{1}{2}\sum\limits_{(i,j) \in v(G)^c}\Psi_{ij}^{2}).$$
Finally, we can obtain $K^{[s]}=Q^T(\Psi^{[s]})^T\Psi^{[s]}Q$. Since $h(\Psi_{v(G)^c})$ is uniformly bounded by 1, the chain is uniformly ergodic (the strongest convergence rate in use, see \cite{MengTweed:1996}).

We now have a method to sample values of $K$ from the $CG$-distribution, whether it is as a prior or a posterior distribution and thus obtain an estimate of the posterior mean of $K$. In our MH algorithm, the candidates are drawn independently of the current samples through the proposal density. Thus, the algorithm gives an independence MH chain. Our simulation results in Section 5 will show that the chain has good mixing,  low autocorrelation and high proximity to the true $CG$-distribution.

The  sample mean will converge to the expected value of $K$. In order to verify the accuracy of our sampling algorithm, we therefore would like to have the exact value of the expected value of $K$ under the $CG$-Wishart.
This is done in the next section for some special coloured graphs.
\section{The exact expected value of $K$ in some special cases}
\subsection{The  mean of the $CG$-Wishart}
For a given ${\cal G}$, the $CG$-Wishart as defined in \eqref{cg} and \eqref{norm}  clearly form a natural exponential family of the type
$$f(K;\theta)d K=\exp \{\langle K, \theta\rangle -k(\theta)\}\mu(d K)$$
with generating measure $\mu(d K)=|K|^{(\delta-2)/2}{\bf 1}_{P_{\cal G}}(K)$, $\theta=-\frac{1}{2} D$ and cumulant generating function  $k(\delta,D)=\log I_{\cal G}(\delta,D).$ To verify the accuracy of the sampling method given in Section 3, we will compare the expected value of $K$ under the $CG$-Wishart and the  sample mean obtained from a number of iterations of our MH algorithm. From the theory of natural exponential family, we know that the mean of the $CG$-Wishart is
$$E(K)=\frac{\partial \; k(\delta,D)}{\partial \; (-\frac{1}{2} D)}=-2\frac{\partial \; k(\delta,D)}{\partial D}.$$
We therefore need to determine for which values of $\delta$ and $D$ the quantity $ I_{\cal G}(\delta,D)$ is finite and then compute the analytic expression of $ I_{\cal G}(\delta,D)$. We need also to differentiate
this expression.

 We cannot do this in general but we will now consider  several particular coloured graphs for which we can compute $I_{\G}(\delta,D)$. For the corresponding RCON models, we will see that when $\delta>0$ (except in the case of the star graph with all leaves in the same colour class where we must have $\delta\geq 1$), the normalizing constant $ I_{\cal G}(\delta,D)$ is finite when $D$ belongs to the dual  $P^*_{\G}$ of $P_{\G}$. For any open convex cone $C$ in  $R^n$, the dual of $C$ is defined as
$$C^*=\{y\in R^n\mid \langle x, y\rangle >0,\;\forall x\in \bar{C}\setminus \{0\}\}$$
where $\bar{C}$ denotes the closure of $C$.

In the remainder of this section,  for each RCON model,  we determine $P_{\G}^*$ and the value of $ I_{\cal G}(\delta,D)$.
  This will allow us, in Section 5, to verify the accuracy of our sampling method.

  All proofs for Section 4 are given in Appendix 1 in the Supplementary file.
\subsection{ Trees with vertices of different colours and edges of the same colour}
Let $V=\{1,\ldots,p\}$. Let $T=(V,E)$ be a tree with vertices of different colours and edges of the same colour. An example of such $\G$ is given  in Figure \ref{fig:1}(a). Let $a=(a_i,i=1,\ldots,p)^t$ where $a_{i}\geq 0$ and $b\in R$.
 Let $S$ be the space of symmetric $p\times p$ matrices.  We define the mapping
 \begin{equation}
 \label{map}
 m: \;(a,b)\in R^{p+1}\mapsto m(a,b)\in S\; \end{equation}
with $m(a,b)$ satisfying the conditions $$ [m(a,b)]_{ii}=a_i, \;[m(a,b)]_{ij}=b=[m(a,b)]_{ji}\;\mbox{for}\;(i,j)\in E,\;[m(a,b)]_{ij}=0 \;\mbox{for}\;(i,j)\not \in E.$$

 Let $M(\G)$ be the linear space of matrices $m(a,b)$ for $(a,b)\in R^{p+1}$.
Let $P$ be the cone of $p\times p$ symmetric positive definite matrices. Then
\begin{equation}
\label{pgt}
P_{\cal G}=M(\G)\cap P\;.
\end{equation}
\begin{proposition}
\label{pgtree}
Let $T$ be a tree as described above. The dual cone $P^*_{\cal G}$ is
\begin{equation}
\label{pgtgiven}
P^*_{\cal G}=\{m(a',b')\in M(\G)\mid a'=(a'_i, i=1,\ldots,p),\;b'\in R, \;|b'|<\frac{1}{p-1}\sum_{(i,j)\in E}\sqrt{a_i'a_j'}\}\;.
\end{equation}
\end{proposition}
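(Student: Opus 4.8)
The plan is to characterize $P^*_{\G}$ directly from the definition of the dual cone applied to the self-dual cone structure inherited from $P$. Since $P_{\G}=M(\G)\cap P$, a matrix $m(a',b')\in M(\G)$ lies in $P_{\G}^*$ if and only if $\langle X, m(a',b')\rangle > 0$ for every nonzero $X\in \overline{P_{\G}}$. Because the pairing $\langle\cdot,\cdot\rangle$ restricted to $M(\G)$ only sees the entries of $X$ on the diagonal and on the edges of $T$, the relevant quantity is $\langle m(a,b), m(a',b')\rangle = \sum_{i=1}^p a_i a_i' + 2b b' \sum_{(i,j)\in E}$ where the sum over edges has $p-1$ terms (a tree on $p$ vertices). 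So the condition to verify is that $\sum_i a_i a_i' + 2(p-1) b b' > 0$ for all $m(a,b)\in \overline{P_{\G}}\setminus\{0\}$.

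First I would describe $\overline{P_{\G}}$ concretely: $m(a,b)$ is positive semidefinite with the tree's zero pattern. For a tree, positive semidefiniteness of $m(a,b)$ is governed by a simple condition — using the fact that $T$ has no cycles, one can test positivity edge by edge. In fact, I would argue that $m(a,b)\succeq 0$ forces $a_i\geq 0$, and that the ``extreme rays'' relevant for the dual computation are the rank-one matrices supported on a single edge, i.e. matrices of the form $c\,(e_i \pm e_j)(e_i \pm e_j)^t$ — but these are only in $P_{\G}$ if their off-diagonal pattern is constant over all edges, so more useful are the boundary matrices where $b$ is as large as possible given $a$. The key extremal fact for a tree is: $m(a,b)\succeq 0$ if and only if $a_i \geq 0$ for all $i$ and there exist signs making the $2\times2$ principal minors nonnegative; for a tree this reduces to $b^2 \le a_i a_j$ being achievable simultaneously — actually the sharp constraint linking $b$ to $a$ comes from summing. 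I would compute $\inf \big(\sum_i a_i a_i'\big)\big/\big(2(p-1)|b|\big)$ over $\overline{P_{\G}}\setminus\{0\}$ and show it equals $\frac{1}{p-1}\sum_{(i,j)\in E}\sqrt{a_i' a_j'}\big/\big(2\big)$ appropriately — matching the stated bound $|b'| < \frac{1}{p-1}\sum_{(i,j)\in E}\sqrt{a_i' a_j'}$.

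The cleanest route: for fixed $a'$, the supremum of $\sum_i a_i a_i'$ subject to $m(a,b)\succeq 0$ with $b$ fixed is attained on the boundary; parametrize boundary matrices of a tree via a ``potential'' argument. For a tree, $m(a,b)\succeq 0$ with $b \neq 0$ can be written (after scaling rows/columns) so that on each edge the $2\times2$ block is singular, giving $a_i = b\, t_i$ for positive node-weights $t_i$ with $t_i t_j \geq 1$ on edges when $b>0$ (or using $1/t$). Then $\sum_i a_i a_i' = |b|\sum_i t_i a_i'$ and $2(p-1)|b| b'$ compared against this: the dual condition becomes $\sum_i t_i a_i' > 2(p-1) b'$ for all admissible $t$. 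Minimizing $\sum_i t_i a_i'$ over $t_i t_j \ge 1$ on each edge — by AM–GM along each edge and summing, $\sum_{(i,j)\in E}(t_i a_i' \cdot \text{(share)} + \cdots) \ge 2\sum_{(i,j)\in E}\sqrt{a_i' a_j'}$ — pins the infimum at $2\sum_{(i,j)\in E}\sqrt{a_i'a_j'}$, yielding exactly $b' < \frac{1}{p-1}\sum_{(i,j)\in E}\sqrt{a_i'a_j'}$, and symmetrically for $-b'$.

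I expect the main obstacle to be the edge-by-edge optimization: showing that the infimum of $\sum_i t_i a_i'$ over the constraint set $\{t_i > 0 : t_i t_j \ge 1 \ \forall (i,j)\in E\}$ equals $2\sum_{(i,j)\in E}\sqrt{a_i' a_j'}$ and is actually attained (or approached) by an admissible $t$. The lower bound is AM–GM; the matching upper bound requires exhibiting $t$ with $t_i t_j = 1$ on every edge, which is possible precisely because $T$ is bipartite (every tree is): set $t_i = \lambda$ on one side and $t_i = 1/\lambda$ on the other, then optimize $\lambda$ — but this only handles the ``balanced'' case, and for general $a'$ one needs a more careful choice. Handling the corner cases where some $a_i' = 0$ (so the matrix is on the boundary of the $a'\ge 0$ orthant) and checking strict versus non-strict inequalities at the boundary of $P_{\G}^*$ will also need care, as will confirming that $a_i' \ge 0$ is forced (it is, by pairing against $m(e_i e_i^t)$-type elements of $\overline{P_{\G}}$).
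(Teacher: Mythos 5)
Your overall strategy---reduce membership in $P^*_{\G}$ to minimizing the linear form $\sum_i a_ia'_i$ over the cross-sections $b=\pm1$ of $\overline{P_{\G}}$, then show the infimum equals $2\sum_{(i,j)\in E}\sqrt{a'_ia'_j}$---is sound and is essentially what the paper does. But there is a genuine gap in the step where you describe $\overline{P_{\G}}$. The edge-by-edge description you propose (signs making the $2\times2$ principal minors nonnegative, i.e. $a_ia_j\ge b^2$ on each edge, or in your node-weight form $a_i=|b|t_i$ with $t_it_j\ge1$) is only \emph{necessary} for positive semidefiniteness, not sufficient, as soon as the tree has a path of length two. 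For the path $1-2-3$ with $a_1=a_2=a_3=1$ and $b=1$, every edge minor vanishes but $\det m(a,b)=-1$, so $m(a,b)\notin\overline{P_{\G}}$. Your constraint set is therefore strictly larger than the true one, and minimizing over it gives a strictly smaller value: for this path one gets $\inf\sum_it_ia'_i=2\sqrt{a'_2(a'_1+a'_3)}$ (your own bipartite construction $t_2=\lambda$, $t_1=t_3=\lambda^{-1}$ attains it), which is strictly less than $2\big(\sqrt{a'_1a'_2}+\sqrt{a'_2a'_3}\big)$ since $\sqrt{x+y}<\sqrt{x}+\sqrt{y}$. Carried to its conclusion, your argument would thus produce a strictly smaller cone than \eqref{pgtgiven}; the ``more careful choice'' you flag at the end cannot repair this, because the defect is in the feasible set, not in the optimization over it.

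The missing ingredient is the correct parametrization of $\overline{P_{\G}}$, which the paper obtains from the Cholesky factorization $K=X(t,s)^TX(t,s)$ along a perfect elimination ordering of the tree: $a_j=t_j^2+b^2\sum_{i\in E_j}t_i^{-2}$ with $b=t_is_{ij}$. Each non-leaf diagonal entry must pay $b^2/t_i^2$ for \emph{every} already-eliminated neighbour $i$, which is exactly what the edge-minor conditions miss. With this parametrization, $\sum_ja_ja'_j=t_p^2a'_p+\sum_{i<p}\big(t_i^2a'_i+a'_{j_i}t_i^{-2}b^2\big)$, and a termwise AM--GM (equivalently, the paper's minimization of $AC$ subject to $B^2\le AC$) gives the infimum $2|b|\sum_{(i,j)\in E}\sqrt{a'_ia'_j}$, attained in the closure; this yields the stated bound, and a separate short argument gives strictness of the inequality and the reverse inclusion. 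You would also still need to handle $a'_i>0$ (pairing against diagonal elements of $\overline{P_{\G}}$, as you note) and the converse direction $C\subset P^*_{\G}$ explicitly.
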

We are now in a position to give the analytic expression of $I_{\cal G}(\delta,D)$.
\begin{theorem}
\label{pgtnorm:theorem}
For $\G=T$ as described above, $\delta>0$  and $D=m(a',b')\in P^*_{\cal G}$, the normalizing constant  $I_{\cal G}(\delta, D)$ is finite and equal to
\begin{equation}
\label{pgtnorm}
I_{\G}(\delta, D)=2^{\frac{\delta}{2}+p-1}\Gamma(\frac{\delta}{2})\left(\prod_{i=1}^p(a'_i)^{d_i-2}\right)^{\frac{\delta}{4}}\int_{-\infty}^{\infty}\left(\prod_{(i,j)\in E}K_{\frac{\delta}{2}}(|b|\sqrt{a'_ia'_j})\right)|b|^{\frac{p\delta}{2}}e^{-(p-1)bb'}db
\end{equation}
where  $d_i$ denotes the number of neighbours of the vertex $i$ in the tree ($V,E)$.
\newline For $\delta=1$, we have
$$I_{\G}(1,D)=(2\pi)^{\frac{p}{2}}\prod_{i=1}^p(a'_{i})^{-\frac{1}{2}}\Big([\sum_{(i,j)\in E}(a'_{i}a'_{j})-(p-1)b']^{-1}+[\sum_{(i,j)\in E}(a'_{i}a'_{j})+(p-1)b']^{-1}\Big).$$
For $\delta=3$,  Let  $\sigma_k$ the $k$th elementary function of the variables $\sqrt{a'_ia'_{j}},\;(i,j)\in E.$
We have
\begin{eqnarray*}
I_{\G}(3,D)&=&2^{\frac{p}{2}-1}\pi^{\frac{p}{2}}\prod_{i=1}^{p}(a'_{i})^{-\frac{3}{2}}\sum\limits^{p-1}_{k=0}\sigma_{k}\Gamma(k+1)
\big(
\big[\sum_{(i,j)\in E}(a'_{i}a'_{j})^{\frac{1}{2}}-(p-1)b'\big]^{-(k+1)}\\
&&\hspace{6cm}-\big[\sum_{(i,j)\in E}(a'_{i}a'_{j})^{\frac{1}{2}}+(p-1)b'\big]^{-(k+1)}
\big).
\end{eqnarray*}
\end{theorem}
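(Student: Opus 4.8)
The plan is to reduce the $(p+1)$-dimensional integral in \eqref{norm} to a single integral over the common off-diagonal value $b$, exploiting that a tree is chordal and therefore possesses a perfect elimination ordering producing no fill-in. I would parametrize $K\in P_{\G}$ by its free coordinates $(a_1,\dots,a_p,b)$, where $a_i=K_{ii}$ and $b$ is the common value of the $K_{ij}$ on the edges, so that $dK=da_1\cdots da_p\,db$ and, by \eqref{pgt}, the region of integration is $\{(a,b):m(a,b)\in P\}$. Root $T$ at a vertex $\rho$, write $\pi(i)$ for the parent of a non-root vertex $i$, and define the pivots $r_i$ recursively from the leaves up by $r_i=a_i-\sum_{c:\,\pi(c)=i}b^{2}/r_c$ (so $r_\ell=a_\ell$ at a leaf $\ell$). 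Since eliminating the vertices in a reverse breadth-first order creates no fill-in -- when $i$ is eliminated its only surviving neighbour is $\pi(i)$ -- one shows that $m(a,b)\in P$ if and only if $r_i>0$ for every $i$, that $|m(a,b)|=\prod_{i=1}^{p}r_i$, and that for fixed $b$ the map $(a_1,\dots,a_p)\mapsto(r_1,\dots,r_p)$ is triangular with unit Jacobian. Establishing these three elimination facts carefully (and pinning down the region of integration) is the main obstacle; the rest is computation.

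Under this change of variable $\langle K,D\rangle=\sum_i a_i a_i'+2(p-1)bb'$ becomes $\sum_i r_i a_i'+\sum_{c\neq\rho}b^{2}a_{\pi(c)}'/r_c+2(p-1)bb'$, which decouples the $r_i$. Hence, for $b$ fixed, $I_{\G}(\delta,D)$ splits as $e^{-(p-1)bb'}$ times a product of $p$ one-dimensional integrals. The root factor is $\int_0^{\infty}r_\rho^{\delta/2-1}e^{-\frac12 a_\rho' r_\rho}\,dr_\rho=2^{\delta/2}\Gamma(\delta/2)(a_\rho')^{-\delta/2}$. For each non-root vertex $c$ the substitution $r_c=u^{2}$ turns its factor into $2\int_0^{\infty}u^{\delta-1}e^{-\frac12(a_c' u^{2}+b^{2}a_{\pi(c)}'/u^{2})}\,du$, which by the classical identity $(p/q)^{\lambda/2}K_{\lambda}(\sqrt{pq})=\int_0^{\infty}u^{2\lambda-1}e^{-\frac12(p/u^{2}+qu^{2})}\,du$ equals $2\,(b^{2}a_{\pi(c)}'/a_c')^{\delta/4}K_{\delta/2}(|b|\sqrt{a_c' a_{\pi(c)}'})$. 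Because every edge of $T$ is $(\pi(c),c)$ for exactly one non-root $c$, this produces the product $\prod_{(i,j)\in E}K_{\delta/2}(|b|\sqrt{a_i'a_j'})$.

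It then remains to collect constants. The powers of $2$ multiply to $2^{\delta/2+p-1}$ and $\Gamma(\delta/2)$ appears once. For the powers of $a_i'$: vertex $i$ occurs as $\pi(c)$ exactly $d_i-1$ times when $i\neq\rho$ and $d_\rho$ times when $i=\rho$, and as a child $c$ exactly once iff $i\neq\rho$; combining these exponents with the $(a_\rho')^{-\delta/2}$ from the root factor yields exactly $\big(\prod_{i=1}^{p}(a_i')^{d_i-2}\big)^{\delta/4}$. The residual power of $|b|$ is $|b|^{(p-1)\delta/2}$, and what survives is the announced one-dimensional integral over $b\in\R$. Finiteness for $\delta>0$ and $D\in P^{*}_{\G}$ is read off from the two asymptotics of the Bessel function: $K_{\nu}(z)\sim\frac12\Gamma(\nu)(z/2)^{-\nu}$ as $z\to 0^{+}$ makes the $|b|^{-(p-1)\delta/2}$ singularity of $\prod_{(i,j)\in E}K_{\delta/2}$ cancel the $|b|^{(p-1)\delta/2}$ factor near $b=0$, while $K_{\nu}(z)\sim\sqrt{\pi/(2z)}\,e^{-z}$ as $z\to\infty$ makes the integrand decay like $e^{-|b|(\sum_{(i,j)\in E}\sqrt{a_i'a_j'}\mp(p-1)b')}$ as $b\to\pm\infty$; the dual-cone inequality $(p-1)|b'|<\sum_{(i,j)\in E}\sqrt{a_i'a_j'}$ of Proposition~\ref{pgtree} is precisely what makes both tails integrable (and $a_i'>0$, also forced by $D\in P^{*}_{\G}$, makes the $r_i$-integrals converge).

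Finally, the closed forms at $\delta=1$ and $\delta=3$ follow by inserting the explicit expressions for $K_{1/2}$ and $K_{3/2}$ recalled in Section~2, factoring out $e^{-|b|\sum_{(i,j)\in E}\sqrt{a_i'a_j'}}$, and expanding the remaining product over edges; for $\delta=3$ this product is $\prod_{(i,j)\in E}\big(1+|b|\sqrt{a_i'a_j'}\big)=\sum_{k=0}^{p-1}\sigma_k|b|^{k}$, with $\sigma_k$ the $k$-th elementary symmetric function of the numbers $\sqrt{a_i'a_j'}$, $(i,j)\in E$. Each resulting term reduces to the elementary integral $\int_{-\infty}^{\infty}|b|^{k}e^{-A|b|-(p-1)b'b}\,db$ with $A=\sum_{(i,j)\in E}\sqrt{a_i'a_j'}$, evaluated by splitting at $b=0$, and summing over $k$ gives the stated expressions.
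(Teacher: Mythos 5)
Your proposal is correct and takes essentially the same route as the paper: the paper also parametrizes $P_{\cal G}$ through a perfect-elimination Cholesky factorization (its coordinates $t_i$ are just the square roots of your pivots $r_i$), factors the integral into one Gamma integral at the root and one Bessel integral per edge via the classical formula for $K_\lambda$, proves convergence from the two Bessel asymptotics together with the dual-cone inequality, and evaluates $\delta=1,3$ by substituting the explicit $K_{1/2},K_{3/2}$; the only cosmetic difference is that you obtain the exponent identity $\prod_i(a_i')^{(d_i-2)\delta/4}$ by a direct parent/child count where the paper uses induction on $p$. Note that your power $|b|^{(p-1)\delta/2}$ agrees with the paper's own intermediate quantity $J_\delta(D)$, the $|b|^{p\delta/2}$ appearing in the displayed statement of the theorem being a typo.
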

\subsection{The star graph with its $n$ leaves in one colour class}
An example of star graph with its $n$ leaves in one color class and different colors for the edges and the central node is given in Figure \ref{fig:1}(b).
For $a\in R, c\in R, b=(b_1,\ldots,b_n)\in R^n$, let $L(\G)$ the linear space of matrices of the form
$$l(a,b,c)=\left[\begin{array}{ccccc}a&b_1&b_2&\ldots&b_n\\b_1&c&0&\ldots&0\\b_2&0&c&\ldots&0\\\ldots&\ldots&\ldots&\ldots&\ldots\\b_n&0&0&\ldots&c\end{array}\right].$$
It is easy to see that the determinant of $l(a,b,c)$ is
\begin{equation}\label{DK} | l(a,b,c)|=c^n\left(a-\frac{\|b\|^2}{c}\right)\end{equation}
and therefore,  $P_{\cal G}$ is the open cone
$$P_{\cal G}=\{l(a,b,c)\in L({\cal G}):\; c>0,\;a-\frac{\|b\|^2}{c}>0.\}$$
The dual cone $P^*({\cal G})$ and the normalizing constant $I_{\cal G}(\delta, D)$ are given below.
\begin{proposition}
\label{pgdaisyleaves}
For a star graph with all $n$ leaves in one colour class, the dual of $P_{\cal G}$ is
\begin{equation}
P_{\cal G}^*=\{l(a',b',c')\in L({\cal G})\mid ||b'||^2\leq na'c'\}.
\end{equation}
\end{proposition}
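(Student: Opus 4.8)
The plan is to compute $P_{\mathcal G}^*$ directly from the definition of the dual cone, namely $P_{\mathcal G}^* = \{D \in L(\mathcal G) \mid \langle K, D\rangle > 0 \text{ for all } K \in \overline{P_{\mathcal G}}\setminus\{0\}\}$. First I would write out the inner product: for $K = l(a,b,c)$ and $D = l(a',b',c')$ we have $\langle K, D\rangle = \mathrm{tr}(KD) = aa' + nc c' + 2\,b^t b'$. So the problem reduces to determining for which $(a',b',c')$ the linear functional $(a,b,c)\mapsto aa' + ncc' + 2b^tb'$ is strictly positive on the closure of the cone $\{c>0,\ ac - \|b\|^2 > 0\}$ minus the origin. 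The closure of that cone is $\{c\ge 0,\ ac \ge \|b\|^2\}$ (note this forces $a\ge 0$ as well).

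Second, I would split into the two standard directions of the argument. For necessity, I would test the functional on the extreme rays of $\overline{P_{\mathcal G}}$: taking $c\to 0$ forces $b\to 0$ and leaves $a\ge 0$ arbitrary, giving $a'\ge 0$; symmetrically, boundary points with $a\to 0$ give $c'\ge 0$; and the rank-one boundary points where $ac = \|b\|^2$ with $a,c>0$ — parametrised by writing $b$ freely, $a = \|b\|^2/c$ — give the constraint $\|b\|^2 a'/c + ncc' + 2b^tb' \ge 0$ for all $b$ and all $c>0$. Optimising over the direction of $b$ (Cauchy–Schwarz, worst case $b = -t\,b'/\|b'\|$) and then over $t>0$ and $c>0$ yields exactly $\|b'\|^2 \le n a'c'$ by the AM–GM step $\frac{a'}{c}t^2 + nc c' \ge 2t\sqrt{na'c'}$. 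For sufficiency, I would run the same inequality in reverse: given $\|b'\|^2 \le na'c'$ and any $K=l(a,b,c)$ in the closed cone, bound $2b^tb' \ge -2\|b\|\|b'\| \ge -2\|b\|\sqrt{na'c'}$, then use $\|b\|^2 \le ac$ together with AM–GM on $aa'$ and $ncc'$ to get $aa' + ncc' \ge 2\sqrt{nac\,a'c'} \ge 2\|b\|\sqrt{na'c'}$, so $\langle K,D\rangle \ge 0$; then a short separate check that equality throughout forces $K=0$ upgrades this to strict positivity on $\overline{P_{\mathcal G}}\setminus\{0\}$.

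The main obstacle is the boundary-case bookkeeping: the closure $\overline{P_{\mathcal G}}$ is not the obvious set one gets by merely relaxing the strict inequalities blindly, and one must be careful that the extreme rays tested in the necessity direction are genuinely in $\overline{P_{\mathcal G}}$, and that in the sufficiency direction the strictness is not lost (e.g. when $a'=0$ or $c'=0$, which the constraint $\|b'\|^2 \le na'c'$ then forces $b'=0$ as well, and one checks positivity by hand). The inequality manipulations themselves — Cauchy–Schwarz on $b^tb'$ and AM–GM on $a'/c$ versus $ncc'$ — are routine and I would not belabour them; the statement that $I_{\mathcal G}(\delta,D)$ is finite precisely on this cone is deferred to the companion theorem and its proof, so here I only need the convex-geometric identification of $P_{\mathcal G}^*$.
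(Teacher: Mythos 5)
Your overall strategy --- write $\langle K,D\rangle = aa'+ncc'+2b^tb'$, test the extreme rays of $\bar{P}_{\cal G}$, and optimise via Cauchy--Schwarz and AM--GM --- is the right one and essentially coincides with the paper's argument, which carries out only the necessity direction by minimising $(aa'+ncc')^2/(ac)$ over $a,c$ (your AM--GM step is that same minimisation); you add the converse inclusion, which the paper omits. The genuine problem is at the boundary $\|b'\|^2=na'c'$. Your sufficiency argument ends with ``a short separate check that equality throughout forces $K=0$,'' and that check cannot succeed. Take $a'=c'=1$ and $b'=(\sqrt n,0,\dots,0)$, so that $\|b'\|^2=na'c'$, and for $t>0$ take $K=l\big(t\sqrt n,\,(-t,0,\dots,0),\,t/\sqrt n\big)$. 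Then $ac=t^2=\|b\|^2$, so $K\in\bar{P}_{\cal G}\setminus\{0\}$, and $\langle K,D\rangle=t\sqrt n+t\sqrt n-2t\sqrt n=0$. Hence this $D$ satisfies your defining inequality but is \emph{not} in the dual cone as the paper defines it (strict positivity on $\bar{P}_{\cal G}\setminus\{0\}$); the correct identification requires the strict inequality $\|b'\|^2<na'c'$, together with $a'>0$ (which $\|b'\|^2\le na'c'$ alone does not force --- it admits $a'=c'=-1$, $b'=0$, and even $D=0$).

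You are led to the non-strict version because in the necessity direction you quietly relax the defining condition from $>0$ to $\ge 0$ on the rank-one boundary points. Those points lie in $\bar{P}_{\cal G}\setminus\{0\}$, so strict positivity is required there: the quadratic $t^2a'/c-2t\|b'\|+ncc'$ must be strictly positive for all $t>0$ and $c>0$, and minimising over $c$ and then $t$ gives $\|b'\|^2<na'c'$ directly, with no separate equality analysis needed. In fairness, the proposition as printed also states $\le$ while the paper's own proof concludes $<$, so your write-up reproduces the printed statement; but the printed statement is itself off at the boundary, and your proof cannot close the gap it creates. Fixing your argument amounts to keeping strict inequalities in the necessity step and then noting that sufficiency for the strict cone follows from your AM--GM chain, since $aa'+ncc'\ge 2\sqrt{naca'c'}> 2\|b\|\,\|b'\|\ge -2b^tb'$ whenever $K\ne 0$ (treating separately the degenerate rays with $b=0$).
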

\begin{theorem}
\label{igdaisyleaves}
For $\G$ a star graph with all $n$ leaves in the same colour class, $\delta\geq 1$ and $D=l(a',b',c')\in P_{\G}^*$, the normalizing constant of the $CG$-Wishart is
$$I_{\cal G}(\delta, D)=2^{\frac{\delta+n\delta+2}{2}}\pi^{n/2}\times a'^{(\frac{\delta}{2}-1)(n-1)}\times \frac{1}{(na'c'-\|b'\|^2)^{(\delta-1)\frac{n}{2}+1}}\times \Gamma((\delta-1)\frac{n}{2}+1)\Gamma(\frac{\delta}{2}).$$
\end{theorem}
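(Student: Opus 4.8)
The plan is to reduce $I_{\mathcal G}(\delta,D)$ to an iterated elementary integral over the free coordinates $(a,b_1,\dots,b_n,c)$ of $K=l(a,b,c)$, with reference measure $dK=da\,db_1\cdots db_n\,dc$. First I would record the two ingredients of the integrand: the determinant $|K|=c^n\bigl(a-\|b\|^2/c\bigr)$ by \eqref{DK}, and the trace $\langle K,D\rangle=\tr KD=aa'+ncc'+2\langle b,b'\rangle$ for $D=l(a',b',c')$ (only the star positions of $D$ contribute, since $K$ vanishes off the star). Hence, over the region $P_{\mathcal G}=\{c>0,\ a-\|b\|^2/c>0\}$,
$$I_{\mathcal G}(\delta,D)=\int c^{\frac{n(\delta-2)}{2}}\Bigl(a-\tfrac{\|b\|^2}{c}\Bigr)^{\frac{\delta-2}{2}}\exp\Bigl\{-\tfrac12\bigl(aa'+ncc'+2\langle b,b'\rangle\bigr)\Bigr\}\,da\,db\,dc.$$

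The key manoeuvre is the change of variable $a\mapsto t=a-\|b\|^2/c$ (Jacobian $1$, $t>0$), which decouples the three blocks and lets me integrate in the order $t$, then $b$, then $c$. The innermost integral is a gamma integral, $\int_0^\infty t^{\frac{\delta-2}{2}}e^{-a't/2}\,dt=2^{\delta/2}\Gamma(\delta/2)(a')^{-\delta/2}$, finite since $\delta\ge1>0$ and $a'>0$; here one uses that $D$ lies in the open dual cone of Proposition~\ref{pgdaisyleaves}, which (read with strict inequality on the interior, or checked directly against the rays $l(1,0,0),\,l(0,0,1),\,l(\|b\|^2,b,1)$ of $\bar P_{\mathcal G}$) forces $a',c'>0$ and $na'c'-\|b'\|^2>0$. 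Substituting $a=t+\|b\|^2/c$, the $b$-integral is the Gaussian $\int_{\mathbb{R}^n}\exp\{-\tfrac{a'}{2c}\|b\|^2-\langle b,b'\rangle\}\,db$, which by completing the square equals $(2\pi c/a')^{n/2}\exp\{c\|b'\|^2/(2a')\}$. Collecting the remaining $c$-dependence, the exponent becomes $-\tfrac c2\bigl(nc'-\|b'\|^2/a'\bigr)=-\mu c$ with $\mu=(na'c'-\|b'\|^2)/(2a')>0$, and the power of $c$ becomes $c^{n(\delta-1)/2}$; so the outermost integral is again a gamma integral, $\Gamma\bigl(\tfrac{n(\delta-1)}{2}+1\bigr)\mu^{-(\frac{n(\delta-1)}{2}+1)}$. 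This last step is exactly where $\delta\ge1$ is needed (to keep $c^{n(\delta-1)/2}$ integrable at $0$) and where $D$ on the boundary of the dual cone ($\mu=0$) makes $I_{\mathcal G}$ infinite.

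Finally I would multiply the three evaluated integrals and simplify the powers of $2$, $\pi$, $a'$ and $na'c'-\|b'\|^2$ together with the two gamma factors: one checks $2^{\delta/2}\cdot2^{n/2}\cdot2^{\frac{n(\delta-1)}{2}+1}=2^{\frac{\delta+n\delta+2}{2}}$ and $(a')^{-\delta/2-n/2+\frac{n(\delta-1)}{2}+1}=a'^{(\frac{\delta}{2}-1)(n-1)}$, which reproduces the stated formula. The computation is essentially routine; the points that require care are (i) using the correct reference measure $dK$ on the free coordinates, (ii) choosing the substitution $t=a-\|b\|^2/c$ so that the integral factorizes, and (iii) justifying the iterated integration and the finiteness claim — for which it suffices that $a',c'>0$ and $na'c'>\|b'\|^2$ on the open dual cone together with $\delta\ge1$, whereupon the integrand is nonnegative and Tonelli's theorem applies. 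The only genuine (if small) obstacle is (iii), namely pinning down that membership of $D$ in $P^*_{\mathcal G}$ supplies precisely the inequalities that make every one-dimensional integral above converge.
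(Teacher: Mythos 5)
Your proposal is correct and follows essentially the same route as the paper: the paper makes the Cholesky-type change of variables $a=r^2+\|s\|^2$, $b=ts$, $c=t^2$ (so that $r^2$ is exactly your Schur complement $t=a-\|b\|^2/c$) and evaluates the same Gaussian integral in the off-diagonal block together with two gamma integrals, arriving at the identical factorization. Your substitution has unit Jacobian and so sidesteps the paper's computation $da\,db\,dc=4rt^{n+1}\,dr\,ds\,dt$, but the decoupling mechanism, the convergence conditions $a'>0$, $c'>0$, $na'c'-\|b'\|^2>0$, and the resulting elementary integrals are the same.
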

\subsection{The star graph with all vertices in one colour class}
An example of star graph with all vertices in one color class and different colors for the edges  is given in Figure \ref{fig:1}(c).
This case is a special case of the preceding one and therefore, we have immediately that
$$P_{\cal G}=\{l(a,b,a)\in L({\cal G}) \mid a>0, a^2-||b||^2>0\}.$$
Since this is a well-known cone, called the Lorentz cone, we know also that it is self dual and therefore
$$P^*_{\cal G}=\{l(a',b',a')\in L({\cal G}) \mid a'>0, (a')^2-||b'||^2>0\}.$$
It remains to compute $I_{\cal G}(\delta, D)$.
\begin{theorem}
\label{igdaisyall}
For ${\cal G}$ the star graph with $n$ leaves and all vertices in the same colour class, $\delta>0$ and $D=l(a',b',a')\in P_{\G}^*$, the normalizing constant of the $CG$-Wishart is
\begin{eqnarray*}
I_{\G}(\delta, D)&=&\frac{2^{\frac{(n+1)\delta}{2}-1}C_n\Gamma((n+1)\frac{\delta}{2})}{(n+1)^{\frac{(n+1)\delta}{2}}(a')^{\frac{(n+1)\delta}{2}}}B(\frac{\delta}{2},\frac{n}{2}) \;_2F_1\Big((n+1)\frac{\delta}{4}, (n+1)\frac{\delta}{4}+\frac{1}{2},\frac{n+\delta}{2};u\Big)
\end{eqnarray*}
where $u=\Big(\frac{2||b'||}{(n+1)a'}\Big)^2$ and $B(\frac{\delta}{2},\frac{n}{2})$ is the Beta function with argument $(\frac{\delta}{2},\frac{n}{2})$.
\end{theorem}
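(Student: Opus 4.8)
The plan is to reduce $I_{\mathcal G}(\delta,D)$ to an explicit one-dimensional integral over the Lorentz cone and then recognise the outcome as a Gauss hypergeometric series. Write $K=l(a,b,a)$ and $D=l(a',b',a')$ with $b,b'\in\mathbb R^{n}$; the free coordinates of $K$ are $(a,b_1,\dots,b_n)$, so $dK$ is Lebesgue measure on $\mathbb R^{n+1}$. A one-line computation gives $\langle K,D\rangle=\tr(KD)=(n+1)aa'+2\langle b,b'\rangle$, and \eqref{DK} with $c=a$ gives $|K|=a^{n-1}(a^{2}-\|b\|^{2})$, so
$$I_{\mathcal G}(\delta,D)=\int_{\,a>0,\ \|b\|<a}a^{(n-1)(\delta-2)/2}(a^{2}-\|b\|^{2})^{(\delta-2)/2}\exp\Big\{-\tfrac12(n+1)aa'-\langle b,b'\rangle\Big\}\,db\,da.$$

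The key step is the substitution $b=au$, $\|u\|<1$, $db=a^{n}\,du$, which decouples $a$ from the ball constraint; after collecting the powers of $a$ one gets $a^{(n+1)\delta/2-1}$, so by Tonelli
$$I_{\mathcal G}(\delta,D)=\int_{\|u\|<1}(1-\|u\|^{2})^{(\delta-2)/2}\Big[\int_0^\infty a^{(n+1)\delta/2-1}e^{-a\lambda(u)}\,da\Big]\,du,\qquad \lambda(u)=\tfrac12(n+1)a'+\langle u,b'\rangle.$$
Since $D\in P^{*}_{\mathcal G}$ forces $\|b'\|<a'\le\tfrac12(n+1)a'$, we have $\lambda(u)\ge\tfrac12(n+1)a'-\|b'\|>0$ on the ball, so the inner integral equals $\Gamma\big(\tfrac{(n+1)\delta}{2}\big)\lambda(u)^{-(n+1)\delta/2}$; the remaining integral over the ball is finite precisely because $\delta>0$ makes $(1-\|u\|^{2})^{(\delta-2)/2}$ integrable. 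Factoring out $\big(\tfrac12(n+1)a'\big)^{-(n+1)\delta/2}$, setting $\sigma=\tfrac{(n+1)\delta}{2}$ and $\beta=\tfrac{2\|b'\|}{(n+1)a'}\in[0,1)$, and rotating the $u$-variable so that $b'$ points along the first axis, everything reduces to evaluating
$$\mathcal I(\beta):=\int_{\|u\|<1}(1-\|u\|^{2})^{(\delta-2)/2}(1+\beta u_1)^{-\sigma}\,du.$$

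To evaluate $\mathcal I(\beta)$ I would expand $(1+\beta u_1)^{-\sigma}=\sum_{k\ge0}\tfrac{(\sigma)_k}{k!}(-\beta u_1)^{k}$ (absolutely and uniformly convergent on the ball, dominated by $(1-\beta)^{-\sigma}$ since $\beta<1$), integrate term by term, drop the vanishing odd moments, and use the elementary two-fold Beta integral $\int_{\|u\|<1}(1-\|u\|^{2})^{(\delta-2)/2}u_1^{2m}\,du=\pi^{(n-1)/2}\Gamma(\tfrac\delta2)\Gamma(m+\tfrac12)\big/\Gamma\big(m+\tfrac{n+\delta}{2}\big)$. The duplication identities $(\sigma)_{2m}=4^{m}(\tfrac\sigma2)_m(\tfrac{\sigma+1}{2})_m$, $(2m)!=4^{m}m!(\tfrac12)_m$ and $\Gamma(m+\tfrac12)=\sqrt\pi\,(\tfrac12)_m$ then collapse the series into $\mathcal I(\beta)=\pi^{n/2}\tfrac{\Gamma(\delta/2)}{\Gamma((n+\delta)/2)}\,{}_2F_1\big(\tfrac\sigma2,\tfrac\sigma2+\tfrac12;\tfrac{n+\delta}{2};\beta^{2}\big)$. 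Writing $\pi^{n/2}\Gamma(\tfrac\delta2)/\Gamma(\tfrac{n+\delta}{2})=\tfrac{C_n}{2}B(\tfrac\delta2,\tfrac n2)$ with $C_n=2\pi^{n/2}/\Gamma(n/2)$ (the surface area of the unit sphere in $\mathbb R^{n}$, which is exactly the constant in the statement) and reassembling with the $\Gamma(\sigma)$ and the power of $\tfrac12(n+1)a'$ pulled out earlier, one obtains the claimed formula, with $u=\beta^{2}=\big(\tfrac{2\|b'\|}{(n+1)a'}\big)^{2}$ and $\tfrac\sigma2=(n+1)\tfrac\delta4$. (Equivalently, after integrating out $u_2,\dots,u_n$ one is left with $\int_{-1}^{1}(1-t^{2})^{\rho-1}(1+\beta t)^{-\sigma}\,dt$, which reduces to an Euler integral and is then turned into the stated ${}_2F_1$ in $\beta^{2}$ by the quadratic transformation ${}_2F_1(a,b;2b;z)=(1-z/2)^{-a}\,{}_2F_1\big(\tfrac a2,\tfrac{a+1}{2};b+\tfrac12;(\tfrac{z}{2-z})^{2}\big)$ applied at $z=-\tfrac{2\beta}{1-\beta}$.) The only real obstacle is bookkeeping: pinning down the hypergeometric parameters and checking, via the Gamma duplication formula, that the accumulated constants reassemble into $\tfrac{2^{\sigma-1}C_n\Gamma(\sigma)}{(n+1)^{\sigma}(a')^{\sigma}}B(\tfrac\delta2,\tfrac n2)$; there is no conceptual difficulty once the substitution $b=au$ is in place.
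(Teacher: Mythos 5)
Your proposal is correct and follows essentially the same route as the paper: both reduce the integral to a radial gamma integral (you via $b=au$ over the unit ball, the paper via $(u,R,\theta)$ with $a=R/\sqrt{u}$), then expand the resulting negative power $(1+\beta u_1)^{-\sigma}$ as a binomial series, kill the odd moments, evaluate the even moments by Beta integrals, and collapse the series into the stated ${}_2F_1$ using the Pochhammer duplication identities. The constants, the exponent $a^{(n+1)\delta/2-1}$, the positivity of $\lambda(u)$ on the ball, and the identification $\pi^{n/2}\Gamma(\delta/2)/\Gamma((n+\delta)/2)=\tfrac{C_n}{2}B(\tfrac{\delta}{2},\tfrac{n}{2})$ all check out.
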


\begin{figure}
         \centering
         \begin{subfigure}[b]{0.2\textwidth}
                 \includegraphics[width=\textwidth]{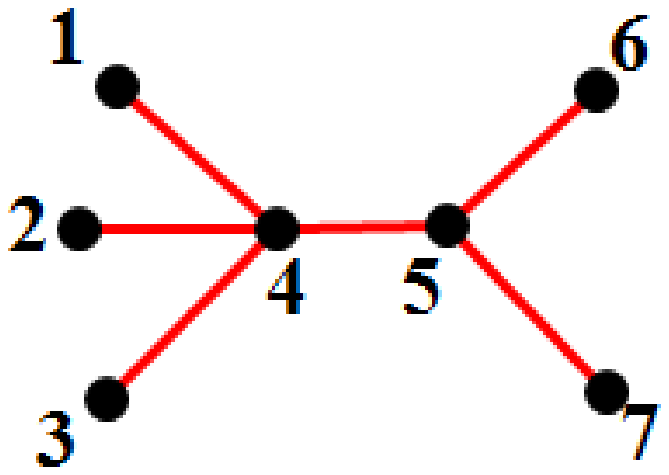}
                 \caption{}
                 \label{fig:tree}
         \end{subfigure}%
         ~ 
         \begin{subfigure}[b]{0.15\textwidth}
                 \includegraphics[width=\textwidth]{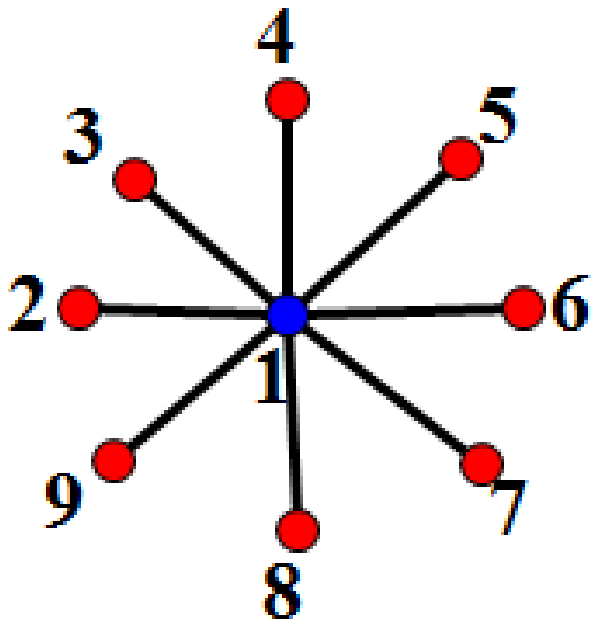}
                 \caption{}
                 \label{fig:star}
         \end{subfigure}
         ~ 
         \begin{subfigure}[b]{0.15\textwidth}
                 \includegraphics[width=\textwidth]{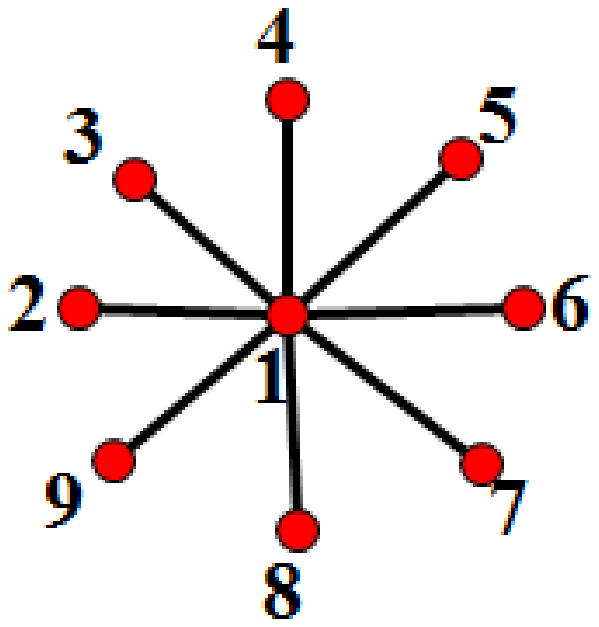}
                 \caption{}
                 \label{fig:star1}
         \end{subfigure}
         ~
         \begin{subfigure}[b]{0.15\textwidth}
                 \includegraphics[width=\textwidth]{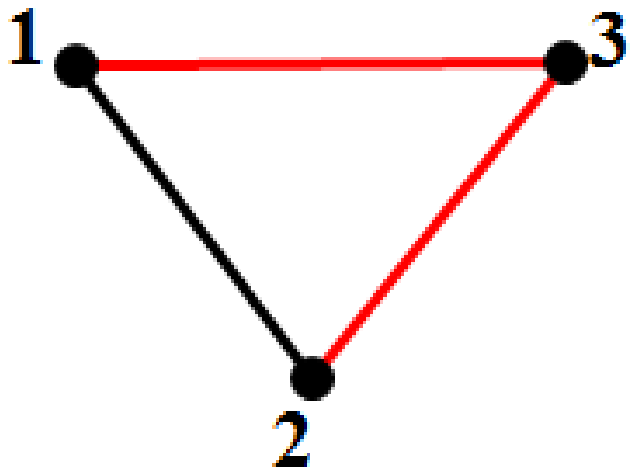}
                 \caption{}
                 \label{fig:triangle}
         \end{subfigure}
         ~
                  \begin{subfigure}[b]{0.15\textwidth}
                 \includegraphics[width=\textwidth]{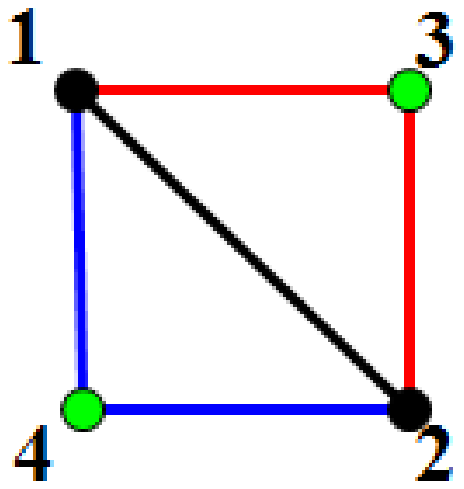}
                 \caption{}
                 \label{fig:decomposable}
         \end{subfigure}

         \caption{Black vertices and edges are all of different colours. (a) The colored tree. (b) The coloured star with the centre vertex of a different colour. (c) The coloured star with all vertices of the same colour. (d) The triangle with two edges of the same colour. (e) The decomposable graph with three different colours for the edges.}
         \label{fig:1}
\end{figure}

\subsection{A complete graph on three vertices with two edges in the same colour class}
This graph is represented in Figure \ref{fig:1}(d).
In this case, the cone $P_{\cal G}$ is the set of positive definite matrices $K=(k_{ij})_{1\leq i,j\leq 3}$ with $k_{13}=k_{23}$.

The dual cone $P^*({\cal G})$ and the normalizing constant $I_{\cal G}(\delta, D)$ are given below.
\begin{proposition}
\label{pgtriangle}
For the graph in Figure \ref{fig:1}(d), the dual of $P_{\G}$ is
\begin{eqnarray*}
P^*_{\cal G}&=&\{ D=(d_{ij})_{1\leq i,j\leq 3}\in S\mid d_{13}=d_{23},\\
&&\hspace{2cm}\;d_{ii}>0,i=1,2,3,\;d_{12}^2<d_{11}d_{22},\;4d_{13}^2<(d_{11}+d_{22}+2d_{12})d_{33}
\}.
\end{eqnarray*}
\end{proposition}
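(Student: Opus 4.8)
The plan is to compute $P^*_{\mathcal G}$ geometrically, from the rank-one generators of the closed cone $\overline{P_{\mathcal G}}=\{K\succeq 0:\ k_{13}=k_{23}\}$, after first straightening the colouring constraint into a ``missing edge'' constraint. Let $S_{\mathcal G}=\{A\in S:\ a_{13}=a_{23}\}$ be the $5$-dimensional ambient space, and let $O$ be the orthogonal matrix with columns $g_1=\tfrac1{\sqrt2}(1,-1,0)^t$, $g_2=\tfrac1{\sqrt2}(1,1,0)^t$, $g_3=(0,0,1)^t$. Put $\tilde K=O^tKO$. Since $\tilde K_{13}=g_1^tKg_3=\tfrac1{\sqrt2}(k_{13}-k_{23})$, the constraint $k_{13}=k_{23}$ is exactly $\tilde K_{13}=0$, so $K\mapsto\tilde K$ carries $P_{\mathcal G}$ bijectively onto $P_H=\{\tilde K\succ0:\ \tilde K_{13}=0\}$, the cone attached to the path $H:\,1-2-3$. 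The map $A\mapsto O^tAO$ is an isometry for the trace inner product $\langle A,B\rangle=\tr AB$ and sends $S_{\mathcal G}$ onto $\{\tilde A\in S:\ \tilde A_{13}=0\}$, so duality is preserved and $D\in P^*_{\mathcal G}$ if and only if $\tilde D=O^tDO\in P^*_H$.

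The heart of the matter is to describe $P^*_H$. The claim is that $\overline{P_H}$ is the conic hull of the rank-one matrices $vv^t$ with $v$ supported on one of the maximal cliques $\{1,2\}$, $\{2,3\}$ of $H$. For the path this follows from \cite{Laur:1996}, or directly: if $\tilde K=\bigl(\begin{smallmatrix}x&y&0\\ y&z&w\\ 0&w&u\end{smallmatrix}\bigr)\succeq 0$ with $x,u>0$, write $\tilde K=A+B$ with $A$ supported on $\{1,2\}$ and $B$ on $\{2,3\}$ by splitting $z=a+(z-a)$ with $y^2/x\le a\le z-w^2/u$; such $a$ exists because $\det\tilde K\ge 0$ gives $y^2/x+w^2/u\le z$. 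The degenerate cases $x=0$ or $u=0$ force $y=0$ or $w=0$ and are immediate. Granting this, $\tilde D\in S$ with $\tilde D_{13}=0$ lies in $P^*_H$ iff $v^t\tilde Dv>0$ for every nonzero $v$ supported on one of the two cliques, i.e.\ iff both principal submatrices $\tilde D_{\{1,2\}}$ and $\tilde D_{\{2,3\}}$ are positive definite: necessity by testing $\langle\tilde D,\cdot\rangle$ against single clique-supported rank-one matrices, and sufficiency from $\langle\tilde D,\tilde K\rangle=\langle\tilde D_{\{1,2\}},A_{\{1,2\}}\rangle+\langle\tilde D_{\{2,3\}},B_{\{2,3\}}\rangle$, which is strictly positive unless $\tilde K=0$.

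It remains to translate back through $O$. Using $\tilde D_{ij}=g_i^tDg_j$,
$$\tilde D_{\{1,2\}}=\tfrac12\begin{pmatrix}d_{11}-2d_{12}+d_{22}&d_{11}-d_{22}\\ d_{11}-d_{22}&d_{11}+2d_{12}+d_{22}\end{pmatrix},\qquad \tilde D_{\{2,3\}}=\begin{pmatrix}\tfrac12(d_{11}+d_{22}+2d_{12})&\sqrt2\,d_{13}\\ \sqrt2\,d_{13}&d_{33}\end{pmatrix},$$
the second using $d_{13}=d_{23}$. The first is positive definite iff $d_{11}>0$ and $d_{11}d_{22}-d_{12}^2>0$; the second iff $d_{11}+d_{22}+2d_{12}>0$ and $(d_{11}+d_{22}+2d_{12})d_{33}-4d_{13}^2>0$. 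Finally, $d_{22}>0$ follows from $d_{11}>0$ and $d_{12}^2<d_{11}d_{22}$; then $d_{11}+d_{22}+2d_{12}>d_{11}+d_{22}-2\sqrt{d_{11}d_{22}}\ge0$ since $d_{12}>-\sqrt{d_{11}d_{22}}$; and then $d_{33}>0$ follows from the last inequality. Hence these four conditions are equivalent to the description of $P^*_{\mathcal G}$ in the statement.

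I expect the main obstacle to be verifying that $\overline{P_H}$ is generated by its clique-supported rank-one elements, so that positivity of $\langle D,\cdot\rangle$ need only be tested against those; the remaining work is a routine reduction to $2\times2$ determinants. One should also keep track of strict versus weak inequalities, since the paper's dual cone requires $\langle D,K\rangle>0$ on all of $\overline{P_{\mathcal G}}\setminus\{0\}$, which is exactly why the two clique submatrices must be strictly positive definite.
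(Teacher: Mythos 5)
Your proof is correct, but it takes a genuinely different route from the paper's. The paper works in the Cholesky parametrization $K=AA^t$: it writes $\langle K,D\rangle$ as a block-diagonal quadratic form $a^tMa$ in the Cholesky entries, proves by a sign-flipping argument (Lemma A1) that positivity of $\langle K,D\rangle$ on $\overline{P_{\cal G}}\setminus\{0\}$ is equivalent to $M\succ 0$, and reads the stated conditions off the $2\times 2$ blocks of $M$. You instead conjugate by the orthogonal matrix $O$, which converts the colouring constraint $k_{13}=k_{23}$ into the missing-edge constraint $\tilde K_{13}=0$ of the path $1-2-3$, and then invoke (and re-prove, via the splitting $z=a+(z-a)$ with $y^2/x\le a\le z-w^2/u$) the standard fact that the dual cone of a decomposable graph consists of the matrices whose maximal-clique submatrices are positive definite; your translation back through $O$ and the reduction of the four clique conditions to the paper's list are both accurate, including the strict-versus-weak bookkeeping. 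What your approach buys is the structural observation that this RCON model is linearly isometric to an uncoloured decomposable graphical model: besides the dual cone, this would let one read off the normalizing constant of Theorem \ref{igtriangle} directly from the usual hyper-Wishart clique/separator formula, since $|\tilde D_{\{1,2\}}|=d_{11}d_{22}-d_{12}^2$, $|\tilde D_{\{2,3\}}|=\frac12[(d_{11}+d_{22}+2d_{12})d_{33}-4d_{13}^2]$ and $\tilde D_{22}=\frac12(d_{11}+d_{22}+2d_{12})$ are, up to constants, exactly the three factors appearing there. What the paper's approach buys is a self-contained elementary argument that does not rely on the decomposable-graph machinery and that transfers mechanically to the four-vertex case of Proposition \ref{pgdecomp}, where no single orthogonal change of basis removes all the colour constraints quite so cleanly.
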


\begin{theorem}
\label{igtriangle}
For $\G$ as in Figure \ref{fig:1}(d),  $\delta>0$ and $D\in P^*_{\cal G}$, the normalizing constant of the $CG$-Wishart is
\begin{eqnarray*}
I_{\cal G}(\delta, D)
 &=& 2^{\frac{3\delta+4}{2}}\pi\Gamma(\frac{\delta}{2})\Big(\Gamma(\frac{\delta+1}{2})\Big)^2(d_{11}+d_{22}+2d_{12})^{\frac{\delta}{2}}[d_{33}(d_{11}+d_{22}+2d_{12})-4d^2_{13}]^{-\frac{\delta+1}{2}}\\
&&\hspace{2cm}\; \times (d_{11}d_{22}-d^2_{12})^{-\frac{\delta+1}{2}}.
\end{eqnarray*}
\end{theorem}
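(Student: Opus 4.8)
The plan is to remove the colour constraint $k_{13}=k_{23}$ by an orthogonal change of coordinates, after which the colour structure becomes the single missing edge of a path on three vertices, so that the integral reduces to the normalizing constant of an \emph{uncoloured} $G$-Wishart for a decomposable graph. Throughout I use Proposition~\ref{pgtriangle} to read off, from the hypothesis $D\in P^*_{\mathcal G}$, the explicit inequalities on the $d_{ij}$. Let $P$ be the $3\times3$ orthogonal matrix fixing the third coordinate and rotating the first two by $\pi/4$, and set $\tilde K=P^{\top}KP$, $\tilde D=P^{\top}DP$. Expressing $x^{\top}Kx$ in the coordinates $u=(x_1+x_2)/\sqrt2$, $v=(x_1-x_2)/\sqrt2$, $w=x_3$ shows that the $(2,3)$ entry of $\tilde K$ is identically $0$, so $K\mapsto\tilde K$ is a linear isomorphism of $M(\mathcal G)$ onto the linear space of symmetric matrices with vanishing $(2,3)$ entry---i.e.\ onto the linear space of the path $G'=(2-1-3)$---carrying $P_{\mathcal G}$ onto the cone $P_{G'}$. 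With $a=\tilde K_{11}$, $c=\tilde K_{22}$, $h=\tilde K_{33}$, $e=\tilde K_{12}$, $g=\tilde K_{13}$ one computes $a=\tfrac12(k_{11}+k_{22}+2k_{12})$, $c=\tfrac12(k_{11}+k_{22}-2k_{12})$, $e=\tfrac12(k_{11}-k_{22})$, $g=\sqrt2\,k_{13}$, $h=k_{33}$, so the Jacobian of the change of variables from the five free entries of $K$ to $(a,c,h,e,g)$ is $\sqrt2$; the entries $a',c',h',e',g'$ of $\tilde D$ are given by the same formulas with $d$'s in place of $k$'s. Since $P$ is orthogonal, $|\tilde K|=|K|$ and $\langle K,D\rangle=\langle\tilde K,\tilde D\rangle=aa'+cc'+hh'+2ee'+2gg'$, so
\begin{equation*}
I_{\mathcal G}(\delta,D)=\frac1{\sqrt2}\int_{P_{G'}}|\tilde K|^{(\delta-2)/2}\exp\Big\{-\tfrac12\big(aa'+cc'+hh'+2ee'+2gg'\big)\Big\}\,da\,dc\,dh\,de\,dg .
\end{equation*}

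Next I would evaluate this integral by a Schur-complement substitution. The $\{2,3\}$-block of $\tilde K$ is $\mathrm{diag}(c,h)$; replacing $a$ by $\alpha=a-e^{2}/c-g^{2}/h$ at fixed $c,h,e,g$ (Jacobian $1$) gives $|\tilde K|=ch\,\alpha$ and turns the exponent into $-\tfrac12\big[\alpha a'+cc'+hh'+(\tfrac{a'}{c}e^{2}+2e'e)+(\tfrac{a'}{h}g^{2}+2g'g)\big]$. One then integrates in turn: over $\alpha\in(0,\infty)$, a $\Gamma(\delta/2)$-integral producing a factor proportional to $(a')^{-\delta/2}$; over $e\in\R$ and over $g\in\R$, Gaussian integrals that, upon completing the square, contribute $\sqrt{2\pi c/a'}\,\exp\{c(e')^{2}/(2a')\}$ and $\sqrt{2\pi h/a'}\,\exp\{h(g')^{2}/(2a')\}$; and finally over $c,h\in(0,\infty)$, two $\Gamma(\tfrac{\delta+1}{2})$-integrals with rate parameters $c'-(e')^{2}/a'$ and $h'-(g')^{2}/a'$. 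These last rates are strictly positive: Proposition~\ref{pgtriangle} gives $a'>0$, $a'c'-(e')^{2}=d_{11}d_{22}-d_{12}^{2}>0$ and $a'h'-(g')^{2}=\tfrac12\big[(d_{11}+d_{22}+2d_{12})d_{33}-4d_{13}^{2}\big]>0$, and together with $\delta>0$ this makes every one-dimensional integral converge and licenses Fubini.

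Collecting constants gives
\begin{equation*}
I_{\mathcal G}(\delta,D)=2^{(3\delta+3)/2}\,\pi\,\Gamma(\tfrac\delta2)\,\Gamma(\tfrac{\delta+1}{2})^{2}\,(a')^{-\delta/2-1}\big[(a'c'-(e')^{2})(a'h'-(g')^{2})\big]^{-(\delta+1)/2},
\end{equation*}
and substituting $a'=\tfrac12(d_{11}+d_{22}+2d_{12})$, $a'c'-(e')^{2}=d_{11}d_{22}-d_{12}^{2}$ and $a'h'-(g')^{2}=\tfrac12[(d_{11}+d_{22}+2d_{12})d_{33}-4d_{13}^{2}]$, then simplifying the powers of $2$ (the exponent $(3\delta+3)/2-\delta/2+(\delta+1)/2$ collapses to $(3\delta+4)/2$), produces exactly the claimed expression. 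A shorter alternative for the middle step is to invoke the classical product-over-cliques/product-over-separators formula for the $G$-Wishart constant of the decomposable graph $G'$ (cliques $\{1,2\}$, $\{1,3\}$, separator $\{1\}$) together with $\Gamma_{2}(\tfrac{\delta+1}{2})=\sqrt{\pi}\,\Gamma(\tfrac{\delta+1}{2})\Gamma(\tfrac\delta2)$, which gives the same result. The only genuine labour is the bookkeeping of the constants; the conceptual content is entirely in the observation that the $\pi/4$ rotation of the first two coordinates converts the colour constraint into the missing edge $\{2,3\}$ of a path, where the $G$-Wishart constant is classical.
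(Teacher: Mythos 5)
Your proof is correct, and it takes a genuinely different route from the paper's. The paper works directly on the coloured cone: it writes $K=AA^{t}$ in a (slightly modified) Cholesky parametrization, observes that the constraint $k_{13}=k_{23}$ forces $a_{13}=a_{23}$, computes the Jacobian $a_{22}^{1/2}a_{33}^{1/2}$ of the map $K\mapsto A$, and then evaluates the resulting iterated gamma and Gaussian integrals. You instead ``decolour'' the model first: the $\pi/4$ rotation in the $(1,2)$-plane carries $P_{\mathcal G}$ isometrically onto the cone $P_{G'}$ of the uncoloured path $2-1-3$, after which the normalizing constant is that of an ordinary $G$-Wishart on a decomposable graph and can either be computed by the Schur-complement/Fubini argument you give or simply quoted from the clique--separator factorization of Roverato. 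Your bookkeeping checks out: the Jacobian $\sqrt2$, the identities $a'c'-(e')^{2}=d_{11}d_{22}-d_{12}^{2}$ and $a'h'-(g')^{2}=\tfrac12[(d_{11}+d_{22}+2d_{12})d_{33}-4d_{13}^{2}]$, and the final power of $2$ all agree with the stated theorem; the positivity of the gamma rates follows from Proposition \ref{pgtriangle} exactly as you say. One transcription slip: in your displayed intermediate formula the factor $(a')^{-\delta/2-1}$ should read $(a')^{\delta/2}$ (the exponent is $-\delta/2-1+2\cdot\tfrac{\delta+1}{2}=\delta/2$ after pulling $a'$ into the two factors $a'c'-(e')^{2}$ and $a'h'-(g')^{2}$); your subsequent simplification of the powers of $2$ is consistent with the corrected exponent, and the final answer is right. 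What each approach buys: the paper's Cholesky computation is self-contained and extends mechanically to the four-vertex decomposable example of Theorem \ref{igdecomp}; your orthogonal-decolouring observation is conceptually cleaner, explains \emph{why} the answer has the product-over-cliques shape, and shows that this particular RCON model is linearly isomorphic to an ordinary graphical Gaussian model --- though that trick is special to symmetries implemented by permutation-invariant subspaces and does not obviously generalize to all the coloured graphs treated in Section 4.
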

\subsection{A decomposable graph with three vertex classes and three edge classes}
This graph is represented in Figure \ref{fig:1}(e). Then the cone $P_{\cal G}$ is the set of matrices of the form
$$K=\left(\begin{array}{cccc}k_{11}&k_{12}&k_{13}&k_{14}\\k_{12}&k_{22}&k_{13}&k_{14}\\
k_{13}&k_{13}&k_{33}&0\\k_{14}&k_{14}&0&k_{33}\end{array}\right).$$
\begin{proposition}
\label{pgdecomp}
For ${\cal G}$ as in Figure \ref{fig:1}(e), the dual cone is the set of matrices
\begin{eqnarray*}
P_{\cal G}^*&=&\{D=(d_{ij})_{1\leq i,j\leq 4}\in S\mid d_{23}=d_{13},d_{24}=d_{14},d_{44}=d_{33},\;d_{11}>0,\;\\
&&d_{11}d_{22}-d_{12}^2>0,\;d_{11}+2d_{12}+d_{22}>0,\;d_{33}(d_{11}+2d_{12}+d_{22})-2(d_{13}^2+d_{14}^2)>0\}\;.
\end{eqnarray*}
\end{proposition}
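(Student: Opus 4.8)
The plan is to compute $P^*_\G$ directly from the definition, after first obtaining a workable description of the closed cone $\overline{P_\G}$. Since $P_\G$ is an open cone inside the linear space $M(\G)$, its dual is naturally a cone inside $M(\G)$ as well, which accounts for the three linear relations $d_{23}=d_{13}$, $d_{24}=d_{14}$, $d_{44}=d_{33}$ appearing in the statement. With $D\in M(\G)$ one computes $\langle K,D\rangle=k_{11}d_{11}+k_{22}d_{22}+2k_{33}d_{33}+2k_{12}d_{12}+4k_{13}d_{13}+4k_{14}d_{14}$, where $K=(k_{ij})$ is written with $k_{23}=k_{13}$, $k_{24}=k_{14}$, $k_{44}=k_{33}$, $k_{34}=0$.

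The key preliminary step is an explicit description of $\overline{P_\G}=\{K\in M(\G):K\succeq0\}$. Using the colouring constraints, the associated quadratic form is $x^tKx=\big(k_{11}x_1^2+2k_{12}x_1x_2+k_{22}x_2^2\big)+k_{33}(x_3^2+x_4^2)+2(x_1+x_2)(k_{13}x_3+k_{14}x_4)$. Completing the square in $(x_3,x_4)$ shows that $K\in P_\G$ iff $k_{33}>0$ and the $2\times2$ matrix with entries $k_{ij}-\alpha$ ($1\le i,j\le2$) is positive definite, where $\alpha=(k_{13}^2+k_{14}^2)/k_{33}$; passing to the closure gives the same statement with ``positive definite'' weakened to ``positive semidefinite'', together with the limiting face $\{k_{33}=k_{13}=k_{14}=0,\ (k_{ij})_{1\le i,j\le2}\succeq0\}$ (a matrix $K\succeq0$ with $K_{33}=0$ necessarily has vanishing third and fourth rows).

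Next I would substitute this parametrization into the inner product. Writing a general $K\in\overline{P_\G}$ via $t=k_{33}\ge0$, $u=k_{13}$, $w=k_{14}$, and $M=(k_{ij})_{1\le i,j\le2}-\alpha\,\mathbf{1}\mathbf{1}^t\succeq0$ (with $\mathbf{1}=(1,1)^t$), one obtains $\langle K,D\rangle=\langle M,D_{12}\rangle+\dfrac{u^2+w^2}{t}\,(d_{11}+2d_{12}+d_{22})+4ud_{13}+4wd_{14}+2td_{33}$, where $D_{12}=(d_{ij})_{1\le i,j\le2}$. Then $D\in P^*_\G$ iff this is strictly positive for every nonzero $K\in\overline{P_\G}$, which splits into two requirements. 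Restricting to the $2\times2$ face $\{k_{33}=k_{13}=k_{14}=0\}$ forces $\langle M,D_{12}\rangle>0$ for all $M\succeq0$, $M\ne0$, i.e. $D_{12}\succ0$, which is exactly $d_{11}>0$ and $d_{11}d_{22}-d_{12}^2>0$ and implies $d_{11}+2d_{12}+d_{22}>0$. Assuming $D_{12}\succ0$, the term $\langle M,D_{12}\rangle$ is nonnegative, so the extremal case is $M=0$; minimizing the remaining expression over $u,w$ for fixed $t>0$ is an elementary quadratic minimization with minimum value $2t\big(d_{33}-2(d_{13}^2+d_{14}^2)/(d_{11}+2d_{12}+d_{22})\big)$, which is positive for all $t>0$ iff $d_{33}(d_{11}+2d_{12}+d_{22})-2(d_{13}^2+d_{14}^2)>0$. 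For the converse inclusion I would exhibit, for each failing inequality, an explicit nonzero $K\in\overline{P_\G}$ with $\langle K,D\rangle\le0$: $e_1e_1^t$ for $d_{11}>0$, a rank-one $vv^t$ with $v\in\mathrm{span}(e_1,e_2)$ for $d_{11}d_{22}-d_{12}^2>0$, and the minimizer above ($k_{33}=1$, $k_{13},k_{14}$ as chosen in the minimization, $M=0$) for the last inequality.

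The main obstacle is the second step: pinning down $\overline{P_\G}$ and, crucially, recognizing that the binding boundary directions are not merely the rank-one matrices supported on $\{1,2\}$, but also the ``curved'' directions arising from the coupling $\alpha=(k_{13}^2+k_{14}^2)/k_{33}$ --- it is exactly the slice $M=0$, $k_{33}>0$ that produces the genuinely quadratic condition $d_{33}(d_{11}+2d_{12}+d_{22})>2(d_{13}^2+d_{14}^2)$. Some care is also needed at the closure: one must check that the limiting points with $k_{33}\to0^+$ (and hence $k_{13},k_{14}\to0$) are already accounted for and that strict positivity, not just nonnegativity, holds on the relative interior.
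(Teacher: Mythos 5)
Your proof is correct, and it reaches the stated conditions by a genuinely different route from the paper. The paper parametrizes $\bar{P}_{\cal G}$ by the Cholesky factor, $K=AA^t$ (the symmetry constraints force $a_{23}=a_{13}$, $a_{24}=a_{14}$, $a_{44}=a_{33}$, $a_{34}=0$), writes $\langle K,D\rangle$ as a quadratic form $a^tMa$ in the six free Cholesky entries, and invokes a sign-flipping lemma (Lemma A1, proved for the triangle case) to show that positivity on the Cholesky domain $\{a_{ii}>0\}$ is equivalent to positive definiteness of the explicit block-diagonal $6\times 6$ matrix $M$; the four inequalities of the proposition are then just the principal-minor conditions of the blocks $d_{11}$, $\left(\begin{smallmatrix}d_{22}&d_{12}\\ d_{12}&d_{11}\end{smallmatrix}\right)$ and $\left(\begin{smallmatrix}2d_{33}&2d_{13}&2d_{14}\\ 2d_{13}&s&0\\ 2d_{14}&0&s\end{smallmatrix}\right)$ with $s=d_{11}+2d_{12}+d_{22}$. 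You instead never factor $K$: you describe $\bar{P}_{\cal G}$ intrinsically by a Schur complement in the $(3,4)$ block and minimize $\langle K,D\rangle$ face by face, so each inequality is traced to an explicit extremal direction (the face $k_{33}=k_{13}=k_{14}=0$ yields $d_{11}>0$ and $d_{11}d_{22}-d_{12}^2>0$, hence $s=\mathbf{1}^tD_{12}\mathbf{1}>0$; the slice $M=0$, $k_{33}>0$ yields $d_{33}s-2(d_{13}^2+d_{14}^2)>0$). What the paper's route buys is a purely mechanical reduction to checking $M>0$, at the price of the sign-flip lemma; what yours buys is transparency about which boundary matrices are binding and built-in witnesses for the converse inclusion, at the price of having to justify $\overline{P_{\cal G}}=M({\cal G})\cap\bar{P}$ and the degenerate face $k_{33}=0$ (which you do note forces the third and fourth rows to vanish). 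Both arguments are complete-the-square computations and, reassuringly, produce identical conditions.
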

\begin{theorem}
\label{igdecomp}
For ${\cal G}$ as in Figure \ref{fig:1}(e),  $\delta>0$ and $D\in P_{\G}^*$, the normalizing constant of the $CG$-Wishart is
\begin{eqnarray*}
I_{\cal G}(\delta, D)&=& 2^{\delta+2}\pi^{\frac{3}{2}}\Gamma(\frac{\delta}{2})\Gamma(\frac{\delta+1}{2})\Gamma(\delta)(d_{11}+d_{22}+2d_{12})^{\delta-1}(d_{11}d_{22}-d^2_{12})^{-\frac{\delta+1}{2}}\\
&&\hspace{1cm}\times [d_{33}(d_{11}+d_{22}+2d_{12})-2(d^2_{13}+d^2_{14})]^{-\delta}.
\end{eqnarray*}
\end{theorem}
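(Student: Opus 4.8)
The plan is to compute the integral $I_{\cal G}(\delta,D)=\int_{P_{\cal G}}|K|^{(\delta-2)/2}\exp\{-\frac12\langle K,D\rangle\}\,dK$ directly, exploiting the block structure of $K$. Write $K$ as in the statement. First I would observe that $K$ is built from the clique $\{1,2,3\}$ and the clique $\{1,2,4\}$ glued along the separator $\{1,2\}$; since the graph (forgetting colours) is decomposable, $|K|$ factors through the clique and separator minors. Concretely, I would compute $|K|$ by expanding along the last two rows/columns: using $k_{34}=0$, $k_{33}=k_{44}$, and the repeated entries $k_{13}=k_{23}$, $k_{14}=k_{24}$, one gets $|K| = k_{33}^2\,(k_{11}k_{22}-k_{12}^2) - k_{33}\bigl[(k_{13}^2+k_{14}^2)(k_{11}+k_{22}+2k_{12}) - \text{cross terms}\bigr]$; the key simplification is that because $k_{13}=k_{23}$ and $k_{14}=k_{24}$, the relevant quadratic form in $(k_{13},k_{14})$ involves only the combination $q:=k_{11}+2k_{12}+k_{22}$. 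So I expect $|K| = k_{33}\bigl(k_{33}(k_{11}k_{22}-k_{12}^2) - q\,(k_{13}^2+k_{14}^2)\bigr)$, or something of exactly this shape, and similarly $\langle K,D\rangle$ collapses, after using $d_{23}=d_{13}$, $d_{24}=d_{14}$, $d_{44}=d_{33}$, to $d_{11}k_{11}+d_{22}k_{22}+2d_{12}k_{12} + 2d_{33}k_{33} + 4d_{13}k_{13} + 4d_{14}k_{14}$ (up to the precise constants).

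Next I would choose coordinates adapted to this structure. The free variables of $K$ are $k_{11},k_{12},k_{22},k_{13},k_{14},k_{33}$. I would integrate out $k_{13}$ and $k_{14}$ first: for fixed $k_{11},k_{12},k_{22},k_{33}$, the $k_{13},k_{14}$ integral has integrand $\bigl(\alpha - \beta(k_{13}^2+k_{14}^2)\bigr)^{(\delta-2)/2}\exp\{-2d_{13}k_{13}-2d_{14}k_{14}\}$ over the region $\beta(k_{13}^2+k_{14}^2)<\alpha$, with $\alpha=k_{33}(k_{11}k_{22}-k_{12}^2)$, $\beta=q$. Passing to polar coordinates in $(k_{13},k_{14})$, the angular integral produces a modified Bessel or — after the radial substitution — a Beta-type integral, and I expect the $(k_{13},k_{14})$-integral to evaluate to a constant times $(d_{13}^2+d_{14}^2)$ raised to a power times $(k_{11}k_{22}-k_{12}^2)$ and $k_{33}$ to appropriate powers; this is where the factor $[d_{33}(d_{11}+d_{22}+2d_{12})-2(d_{13}^2+d_{14}^2)]^{-\delta}$ will ultimately come from, once combined with the $k_{33}$ integral. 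Then I would change variables from $(k_{11},k_{12},k_{22})$ to $(q=k_{11}+2k_{12}+k_{22},\ \det = k_{11}k_{22}-k_{12}^2,\ \cdot)$ — essentially the standard $2\times2$ Wishart-type reduction — and integrate, picking up $\Gamma(\frac{\delta}{2})\Gamma(\frac{\delta+1}{2})$ and the factors $(d_{11}+d_{22}+2d_{12})^{\delta-1}(d_{11}d_{22}-d_{12}^2)^{-(\delta+1)/2}$; this is exactly the colored-triangle computation of Theorem \ref{igtriangle} reappearing, which is a good consistency check. Finally the remaining one-dimensional integral over $k_{33}$ against the linear exponent $2d_{33}k_{33}$ and the accumulated power of $k_{33}$ will be a Gamma integral yielding $\Gamma(\delta)$ and the last bracket to the power $-\delta$.

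The main obstacle I anticipate is purely organizational bookkeeping: getting the exponents and the numerical constant ($2^{\delta+2}\pi^{3/2}$) exactly right, because the order of integration matters for how the cross-term $2(d_{13}^2+d_{14}^2)$ gets packaged with $d_{33}$ and $q$, and one must track the Jacobians of the linear change of variables $(k_{11},k_{12},k_{22})\mapsto$ new coordinates and of the polar substitution carefully. A secondary subtlety is justifying that the region of integration really does decompose as claimed — i.e.\ that positive-definiteness of $K$ is equivalent to $k_{33}>0$, $k_{11}k_{22}-k_{12}^2>0$ (equivalently $q>0$ together with the determinant condition), and $k_{33}(k_{11}k_{22}-k_{12}^2) > q(k_{13}^2+k_{14}^2)$ — which is the determinant factorization above combined with the leading principal minors; the matching condition on the dual side, Proposition \ref{pgdecomp}, is what guarantees every one of these nested integrals converges, so I would invoke it to justify finiteness as each integral is performed. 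Once the region and the factorization of $|K|$ and $\langle K,D\rangle$ are pinned down, the rest is three successive classical integrals (a Beta/polar integral, a $2\times2$-Wishart integral, a Gamma integral), and the stated formula follows by collecting terms.
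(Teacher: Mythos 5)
There is a genuine gap. Your plan is to integrate directly in the free coordinates $(k_{11},k_{12},k_{22},k_{13},k_{14},k_{33})$ and you assert that, with the other variables fixed, the $(k_{13},k_{14})$-integral
$\int_{\beta(k_{13}^2+k_{14}^2)<\alpha}\bigl(\alpha-\beta(k_{13}^2+k_{14}^2)\bigr)^{(\delta-2)/2}e^{-2d_{13}k_{13}-2d_{14}k_{14}}\,dk_{13}dk_{14}$
is a Beta/polar integral evaluating to a product of powers of $d_{13}^2+d_{14}^2$, $k_{11}k_{22}-k_{12}^2$ and $k_{33}$. It is not: because the exponential is \emph{linear} in $(k_{13},k_{14})$ while the determinant factor is a \emph{power} of $\alpha-\beta\rho^2$, passing to polar coordinates leaves a Hankel-type integral whose value is a modified Bessel function of $\sqrt{(d_{13}^2+d_{14}^2)\,\alpha/\beta}$, i.e.\ a non-separable function of the remaining variables. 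The coupling $[d_{33}(d_{11}+d_{22}+2d_{12})-2(d_{13}^2+d_{14}^2)]^{-\delta}$ in the answer can only arise from a completion of squares that trades a term quadratic in an integration variable against a linear one; in the raw $K$-coordinates no such quadratic term exists in the exponent, so your three "successive classical integrals" do not decouple and the computation stalls at special functions. (A secondary slip, which you partly hedged: the determinant factorization involves $k_{11}+k_{22}-2k_{12}$, not $k_{11}+2k_{12}+k_{22}$; the plus-sign combination $d_{11}+d_{22}+2d_{12}$ lives on the dual side only.)

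The missing idea — and the one the paper uses — is a (modified) Cholesky change of variables $K=AA^{T}$ with $A_{ii}=\sqrt{a_{ii}}$, $A_{ij}=-a_{ij}$ for $i<j$. Matching entries forces $a_{34}=0$, $a_{13}=a_{23}$, $a_{14}=a_{24}$, $a_{33}=a_{44}$, the Jacobian is $a_{22}^{1/2}a_{33}$, the determinant becomes the monomial $|K|=a_{11}a_{22}a_{33}^2$, and the exponent becomes a sum of independent pieces: linear in $a_{11},a_{22},a_{33}$ and genuinely quadratic in $a_{12},a_{13},a_{14}$ with cross terms $2d_{13}a_{13}\sqrt{a_{33}}$, $2d_{14}a_{14}\sqrt{a_{33}}$, $d_{12}a_{12}\sqrt{a_{22}}$. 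The integral then factors into a Gamma integral in $a_{11}$, three Gaussian integrals in $a_{12},a_{13},a_{14}$ whose completed squares deposit $\exp\{2(d_{13}^2+d_{14}^2)a_{33}/(d_{11}+d_{22}+2d_{12})\}$ and $\exp\{d_{12}^2a_{22}/(2d_{11})\}$ onto the $a_{33}$ and $a_{22}$ exponents, and two final Gamma integrals producing $\Gamma(\delta)$, $\Gamma(\frac{\delta+1}{2})$ and the stated brackets. Your global picture of where each factor comes from is right, and your description of the positivity region is essentially correct, but without this change of variables (or an equivalent reparametrization) the nested integrals you describe cannot be carried out in closed form.
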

\section{Numerical experiments when we know the exact mean}
In order to illustrate the performance of our MH algorithm, we conduct a numerical experiment for each of the colored graph (a) - (e) shown in Figure \ref{fig:1}. In each case,  for a given $D$ and $\delta$, we first derive $\log I_{G}(\delta, D)$, then the prior mean $E(K)$ under the $CG$-Wishart by differentiating $\log I_{G}(\delta, D)$ with respect to $-\frac{D}{2}$. We then sample from the $CG$-Wishart. We run the chain for 5000 iterations and discard the first 1000 samples as burn in. Our estimate $\hat{K}$ for $K$ is the average $\hat{K}=\frac{\sum_{i=1001}^{i=5000}\hat{K}_{i}}{4000}$ of the remaining 4000 iterations $\hat{K}_{i}, i=1001,\ldots, 5000$.  For arbitrary $K$ and $K'$ we define the normalized mean square error ($nmse$) between $K$ and $K'$ to be
$$nmse(K,K')=\frac{||K-K'||_2^2}{||K'||^2_2}$$
where $||K||_2^2$ is the sum of the squares of the entries of $K$.
We repeat the previous experiment 100 times, obtain  $\hat{K}^j,j=1,\ldots,100$ and compute
$$\overline{nmse}(E(K),\hat{K})=\frac{1}{100}\sum_{j=1}^{100}nmse(E(K),\hat{K}^j)$$
where $E(K)$ is obtained by differentiation of $\log I_{\cal G}(\delta, D)$ with respect to $-\frac{D}{2}$ at  our given $D$ and $\delta$.

For each graph in Figure \ref{fig:1}, for an arbitrary $j\in \{1,\ldots,100\}$, we give the trace plot of $\log |K^j_i|, \; i=1000,\ldots,5000$. The traceplot shows that the chain seems to be mixing well. We also provide the autocorrelation plot with time-lag $h$ for $\log |K^j_i|, \; i=1000,\ldots,5000$ in function of $h$ where, for an arbitrary given $j$, we define the autocorrelation coefficient for $Y_i=\log |K^j_i|, \; i=1000,\ldots,5000$ to be
$$R_h=\frac{\sum_{i=1000}^{5000-h}\;(Y_i-\bar{Y})(Y_{i+h}-\bar{Y})}{\sum_{i=1000}^{5000}\;(Y_i-\bar{Y})^2}.$$
 The autocorrelation plots indicate that the samples have  a low autocorrelation.
 The numerical values of the matrices $D$, $E(K)$ and $\hat{K}$ as well as the traceplot and autocorrelation plot of $\log (|K|)$ for all five graphs in Figure \ref{fig:1} are given in Appendix 2 in the Supplementary file.
  An overview of calculations and results are given in
Table \ref{table:1} which,  for all different five colored graphs in Figure \ref{fig:1}, shows the parameter $\delta$ we chose for the prior distribution, $\log I_{\cal G}(\delta,D)$ and the normalized mean square errors. In order to obtain the mean $E(K)$ of the $CG$-Wishart for the graph in Figure \ref{fig:1}(c), we use formula \eqref{dhy} to get the derivative of the hypergeometric function $_{p}F_{q}(a_{1},\ldots, a_{p};b_{1},\ldots,b_{q};z)$. We see that the normalized mean square error is of the order of $10^{-3}$ or less except for
the star graph with all leaves of the same colour in Figure \ref{fig:1}(b).

\begin{table}
\begin{tabular}{|c|c|c|c|}
\hline
$\G$ &$\delta$ &$\log I_{\cal G}(\delta,D)$& $\overline{nmse}(E(K),\hat{K})$\\
\hline
Fig. \ref{fig:1}(a)& 1 &$-\frac{1}{2}\sum\limits_{i=1}^{7}\log a'_{i}+\log\left[\frac{1}{\sum\limits^{6}_{i=1}(a'_{i}a'_{j_{i}})^{\frac{1}{2}}-6b'}-\frac{1}{\sum\limits^{6}_{i=1}(a'_{i}a'_{j_{i}})^{\frac{1}{2}}+6b'}\right]$&0.0069\\
\hline
&&&\\
Fig. \ref{fig:1}(b)&3&$\frac{7}{2}\log a'-9\log(8a'c'-\|b'\|^2)$&0.0187\\
\hline
&&&\\
Fig. \ref{fig:1}(c)&3&$-15\log a'+\log\; _2F_1\left(\frac{15}{2},8;6;\frac{\|b'\|^2}{25a'^2}\right)$&0.0064\\
\hline
&&&\\
Fig. \ref{fig:1}(d)&3&$\frac{3}{2}\log d-2\log(d_{33}d-4d^2_{13})-2\log(d_{11}d_{22}-d^2_{12})$&0.0005\\
\hline
&&&\\
Fig. \ref{fig:1}(e)&3&$2\log d-3 \log(d_{33}d-2d^2_{13}-2d^2_{14})-2\log(d_{11}d_{22}-d^2_{12})$&0.0009\\
&&&\\
\hline
\end{tabular}
\caption{For the graphs of Fig. \ref{fig:1} and $\delta$ given: analytic expression of $\log I_{\cal G}(\delta,D)$ where $d=d_{11}+d_{22}+2d_{12}$ and value of $\overline{nmse}(E(K), \hat{K})$  averaged over  100 experiments. }
\label{table:1}
\end{table}

\vspace{2mm}



\section{The posterior mean from simulated data: $p=20, \;p=30$}
In this section, in order to assess the accuracy of our sampling method for larger graphs, we  generate data from a $N(0, K^{-1})$ distribution with $K$ given in $P_{\G}$. We take the $CG$-Wishart with $\delta=3$ and $D=I$ as the prior distribution of $K$. Clearly the posterior distribution will be $CG$-Wishart with parameters $\delta+n$ and $I+nS$ where $S$ is the sample covariance matrix. We will use this posterior and our sampling method of Section 3 to compute the posterior mean  $E(K|S)$ as an estimate of $K$.



\begin{figure}
         \centering
         \begin{subfigure}[b]{0.3\textwidth}
                 \includegraphics[width=\textwidth]{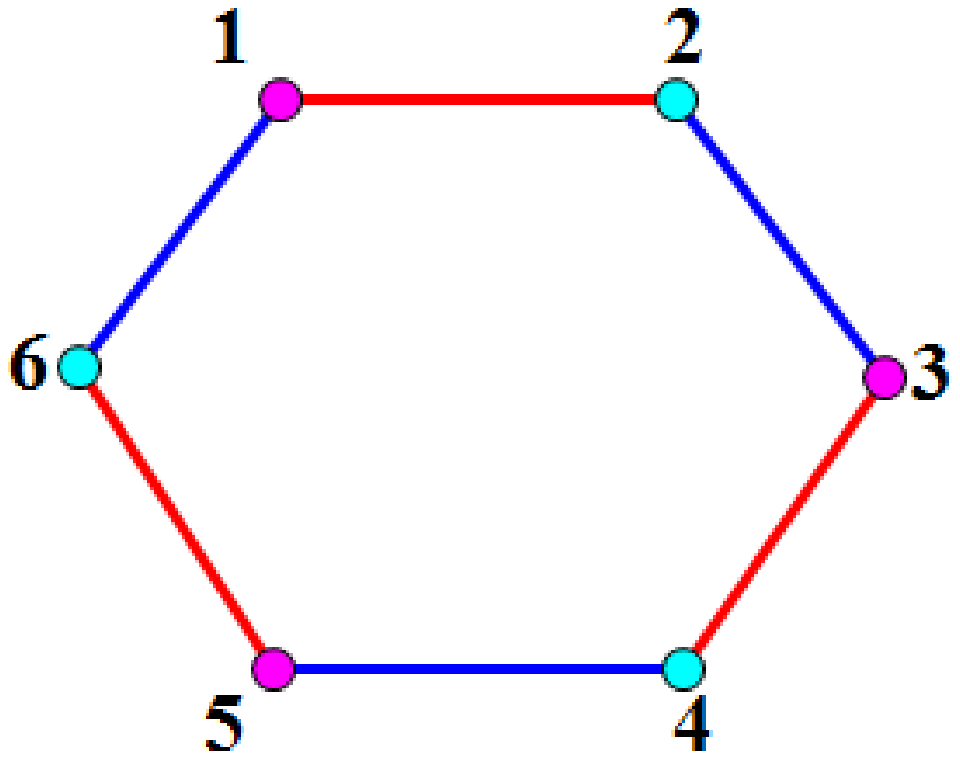}
                 \caption{}
                 \label{fig:vertex and edge}
         \end{subfigure}%
         ~ 
         \begin{subfigure}[b]{0.3\textwidth}
                 \includegraphics[width=\textwidth]{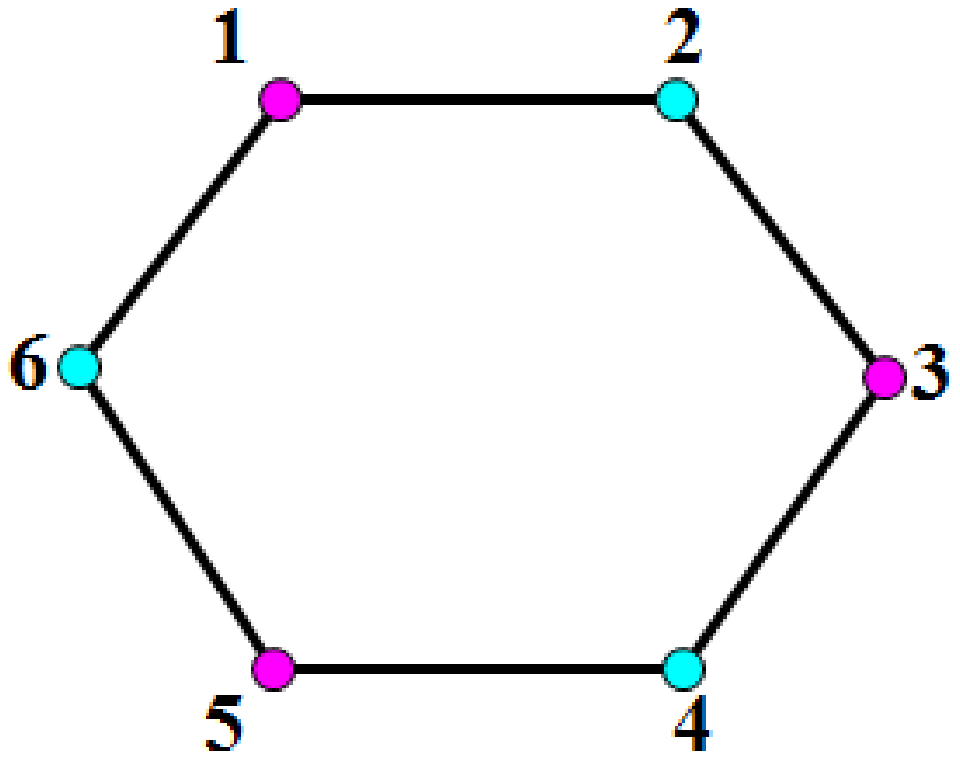}
                 \caption{}
                 \label{fig:vertex}
         \end{subfigure}
         ~ 
         \begin{subfigure}[b]{0.3\textwidth}
                 \includegraphics[width=\textwidth]{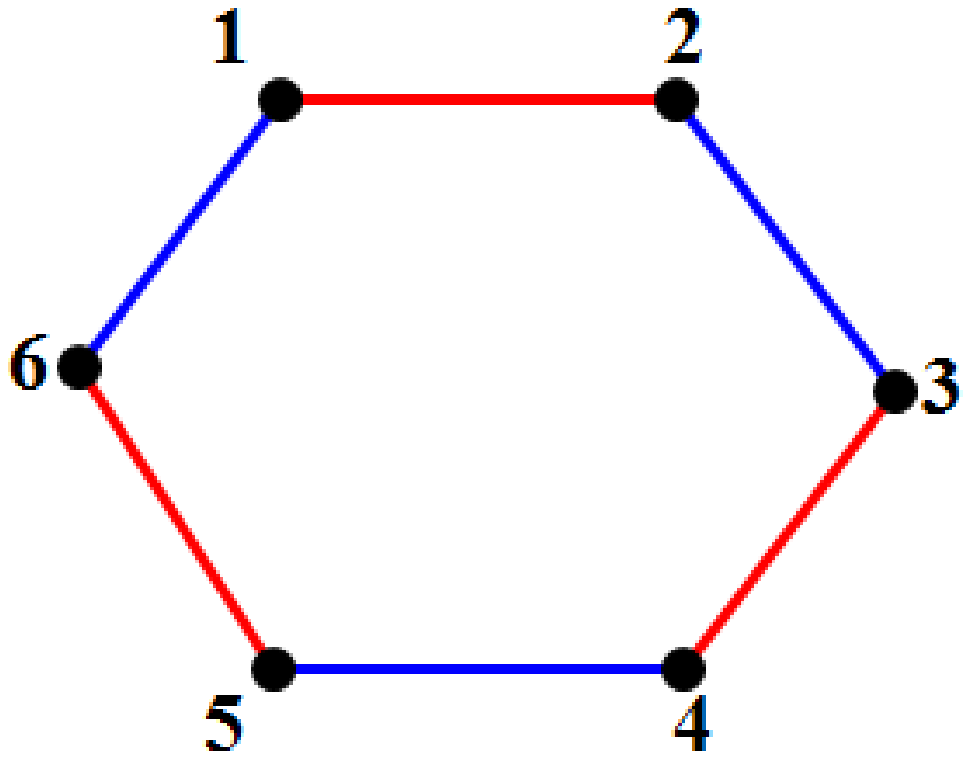}
                 \caption{}
                 \label{fig:edge}
         \end{subfigure}

         \caption{Cycles of length 6 with the three different patterns of colouring that we use for the cycles of length $p=20$ and $p=30$. Black vertices or edges indicate different arbitrary colours.}
         \label{fig:7}

\end{figure}

We run our experiment with six different coloured graphs. For three of them, the skeleton is a cycle of order $p=20$ and for the other three, the skeleton is a cycle of order $p=30$. For each cycle of order $p$, we give three different patterns of colouring which, for the sake of saving space, are illustrated in Figure \ref{fig:7} for $p=6$. The values for the entries of $K$ for all three types of graphs are as follows:
 \begin{eqnarray*}
 &&K_{ii}=0.1, \;i=1,3,\ldots, 2p-1, \;\;K_{ii}=0.03.\;i=2,4,\ldots, p,\;\; \\
 &&K_{i,i+1}=K_{i+1,i}=0.01,\;i=1,2,\ldots, p-1,\;\;K_{1p}=K_{p1}=0.01.
 \end{eqnarray*}
 Though, for convenience, we chose, for all the models, the same values for the entries  $K_{ij}, i\not =j$ to be all equal to $.01$ and  for $K_{ii}$ to have two different values $.1$ and $.03$, in our computations, we used, of course, in each case, the model represented by each of the respective graphs.
 For each graph, we generated 100 datasets from the $N(0,K^{-1})$ distribution.
The posterior mean estimates are based on  5000 iterations after the first 1000 burn-in iterations. We denote $\hat{K}=(\hat{K}_{ij})_{1\leq i,j\leq p}$  the posterior mean estimate.

Table \ref{table:2} shows $\overline{nmse}(K,\hat{K})$ for the three colored models on the simulated examples when $p=20 $ and $p=30$, averaged over 100 simulations. Standard errors are indicated in parentheses.
Computations were performed on a 2 core 4 threads with i5-4200U, 2.3 GHZ chips and 8GB of RAM, running on Windows 8.  We also give in Table \ref{table:2} the average computing time per simulation in minutes.

\begin{table}[h!]
\centering
\caption{$\overline{nmse}(K,\hat{K})$ for the three colored models when $p=20$ and $p=30$.}
\begin{tabular}{ c| cc|cc}
     \hline
      &\multicolumn{2}{c}{$p=20$} &\multicolumn{2}{|c}{$p=30$}\\
     $\cal G$ & $ \overline{nmse}(K,\hat{K})$ & Time/sim   & $ \overline{nmse}(K,\hat{K})$ & Time/sim\\ \hline
      Fig.2 (a) &  0.005 (0.003) &19.425   &  0.040 (0.021)&86.423\\
      Fig.2 (b) &  0.011 (0.003) &18.739  &  0.033 (0.011)&82.876 \\
      Fig.2 (c) &  0.039 (0.021) &16.410   &  0.080 (0.033)&82.563\\
    \hline
   \end{tabular}
\label{table:2}
\end{table}
In Table \ref{table:3}, for the graph of Fig. \ref{fig:7} (a) with $p=20$ and $p=30$, we give the values of the entries of $K$ together with their batch standard errors.
\begin{table}[h!]
\centering
\caption{The average estimates and batch standard errors for $K$ in Fig.2 (a).}
\begin{tabular}{ c|cccc}
$p$&$K_{11}$&$K_{12}$&$K_{1p}$&$K_{22}$\\
\hline
      $20$ &  0.1040 (0.0005) &0.0103 (0.0002)   &  0.0104 (0.0002)&0.0313 (0.0001)\\
      $30$ &  0.1223 (0.0009) &0.0121 (0.0004)  &  0.0125 (0.0004)&0.0361 (0.0003) \\
   \end{tabular}
\label{table:3}
\end{table}
For the other models, average values of the entries together with batch standard errors are given in  Appendix 3 in the Supplementary file.
\vspace{3mm}

{\it Remark 1.} At this point, we ought to make an important remark. In Section 5, we proved that the $CG$-Wishart was proper for $D\in P_{\G}^*$. When we compute the posterior mean in this section or more generally for any colored graph, even if $D>0$ belongs to $P_{\G}^*$,  the hyperparameter $D+nS$ does not usually belong to $P^*_{\G}$ of course and yet the integral $I_{\G}(\delta+n, D+nS)$ converges. This is due to the fact that we can write $S$ as
$$S=S_1+S_2$$
where $S_1$ is the projection of $S$ on the subspace of $p\times p$ matrices with fixed zeros according to $\G$ and equal entries for edges and vertices in the same colour class and $S_2$ belongs to its orthogonal complement.
Since $K\in P_{\G}$, we have
$$0<\langle K,S\rangle=\langle K, S_1\rangle$$
and, since the inequality above is true for any $K\in P_{\G}$, it follows that $S_1$ belongs to $P_{\G}^*$. It follows also that
$I_{\G}(\delta+n,D+nS)$ is finite.
\vspace{2mm}

{\it Remark 2.} For the computation of the posterior mean following our sampling scheme of Section 3, we may wonder  whether we should take $Q$ to be such that $Q^tQ=(D+nS)^{-1}$ or $Q^tQ=(D+nS_1)^{-1}$. We take $Q$ to be such that $Q^tQ=(D+nS)^{-1}$ to use all the information given by the data.

\appendix

\section*{Appendix 1}
\subsection*{Proofs of Section 4}
\noindent {\bf Proof of Proposition \ref{pgtree}}
\newline Let $M$ be the set of $p\times p$ matrices.
Let
\footnotesize
$${\cal T}_{G}=\{X  \in M\mid X_{ij}=0,\;\mbox{for}\; i<j,\;X_{ij}=s_{ij}\neq 0, \;\mbox{for}\;i>j, (i,j)\in E,\; X_{ii}=t_i>0, i=1,\ldots,p\}$$
\normalsize
be the set of upper triangular matrices with positive diagonal elements and nonzero entries $X_{ij}, i>j$ only for $(i,j)\in E$.
The vector ${s}=(s_{ij}, (i,j)\in E)$ belongs to $R^{p-1}$ since a tree with $p$ vertices has $p-1$ edges and $t=(t_i,i=1,\ldots,p)$  belongs to $R^p$.
It is well-known (see \cite{Paul:1989} and \cite{Rov:2000}) that we can find a perfect elimination scheme enumeration of the vertices of $T$ such that, with this enumeration,
  $K\in P_{\cal G}$ can be written as $K=X(t,{s})^TX(t,{s})$ with $X(t,{s})\in {\cal T}_G$.
 Then for $K=K(a,b)$ as in \eqref{pgt}  we have
$$a_j=t_j^2+\sum_{i\in E_j}s_{ij}^2,\ b=t_is_{ij},$$
where $(t,s)$ is the Cholesky parametrization of $K\in P_{\cal G}$. We can also parametrize $K\in P_{\cal G}$ with $(t,b)\in (0, +\infty)^p\times R$
using
\begin{equation}\label{CRR}
a_j=t_j^2+b^2\sum_{i\in E_j}\frac{1}{t_i^2}.
\end{equation}
 In this proof and the following one, we assume  that the numbering of the vertices of $T$ follows a perfect elimination scheme ordering. We then say that the last vertex $p$ in that ordering is the root of the tree and we will write
 $$E_j=\{i,i<j\mid (i,j)\in E\}.$$
For convenience, we denote by $C$ the right-hand side of equation \eqref{pgtgiven}.

We show first that $P_{\cal G}^*\subset C.$ Let  $D=m(a',b')\in P_{\cal G}^*$. Using \eqref{CRR}, we have
\begin{equation}\label{CCR}\langle K, D\rangle=a_1a'_1+\cdots+a_pa'_p+2(p-1)bb'= t^2_p a'_p+Ab^2+2Bb+C>0,\end{equation} where
\begin{equation}\label{TR} A=\sum_{j=1}^p\left(\sum_{i\in E_j}\frac{1}{t_i^2}\right)a'_j,\  \ B=(p-1)b',\ \  C=\sum_{i=1}^{p-1}t_i^2a'_i.\end{equation}
Now observe that for fixed $i=1,\ldots,p$ then either $i=p$ and the set $\{j; i\in E_j\}$ is empty since $p$ is the root of the tree, or the set $\{j; i\in E_j\}$ is reduced to one point, say $j_i.$  Therefore we have $\sum_{i\in E_j}a'_j=a_{j_i}$ for $i<p$ and zero for $i=p.$ (For the graph in Figure \ref{fig:1} (a), we have  $j_1=j_2=j_5=j_6=7$ and $j_3=j_4=6$) and it follows that
\begin{equation}\label{AR}A=\sum_{i=1}^{p-1}\frac{1}{t_i^2}a'_{j_i}.\end{equation}

Let us prove that $a'_j>0$ for all $j=1,\ldots,p.$ Take  $(a_1,\ldots,a_p)\in [0,\infty)^p\setminus \{0,\ldots,0\}.$ Then
$K(a,0)\in \overline{P_{\cal G}}\setminus\{0\}$ and $\langle K, D\rangle=a_1a'_1+\cdots+a_pa'_p>0$ implies that  $a'_j>0$ for all $j.$
Let us  now prove that $Ab^2+2Bb+C\geq 0$ for all $b.$ If not, there exists $b_0$ such that $Ab_0^2+2Bb_0+C< 0.$ Since $a'_p>0$ taking $t_p$ very small and $b=b_0$ in (\ref{CCR}) gives a contradiction.

Let us prove that  \begin{equation}\label{NI}|b'|\leq \frac{1}{p-1}\sum_{(i,j)\in E} \sqrt{a'_ia'_j}.\end{equation} Since $\forall b,\;Ab^2+2Bb+C\geq 0$, we have $B^2\leq AC.$ Now consider the function
$$(t_1,\ldots,t_{p-1})\mapsto AC$$  and let us compute its minimum $A^*C^*$ on $(0,\infty)^{p}.$ This function $AC$ is homogeneous of degree 0 and therefore if its minimum is reached at $t^*=(t_1^*,\ldots,t_{p-1}^*)$ it  will also be reached on $\kappa t^*$ for any $\kappa>0.$
We have for $i=1,\ldots,p-1$  $$\frac{\partial}{\partial t_i}AC=2t_ia'_iA-\frac{2}{t_i^3}a'_{j_i}C=0$$ and we therefore have
$$t_i^*=\kappa\left(\frac{a'_{j_i}}{a'_i}\right)^{1/4}, \ A^*=C^*=\sum_{i=1}^{p-1}\sqrt{a'_ia'_{j_i}}=\sum_{(i,j)\in E} \sqrt{a'_ia'_j}.$$
Since $B^2\leq AC$ for all $(t_1,\ldots,t_{p-1})\in (0,\infty)^{p-1}$, we can claim that $B^2\leq A^*C^*$  or equivalently (\ref{NI}).

Let us prove that inequality (\ref{NI}) is strict, that is $B^2= A^*C^*$ is impossible. Suppose that $B^2= A^*C^*$, i.e.  $|b'|=A^*/(p-1)>0.$ Then with $t_i=t_i^*$ we get
$Ab^2+2Bb+C=A^*(b+ \sign b')^2.$
Taking $b=-\sign b'$ and $t_i=t_i^*, i=1,\ldots,p-1$ yields $Ab^2+2Bb+C=0.$ Now, letting  also  $t_p=0$ in (\ref{CCR}), we see that the left hand side of \eqref{CCR} is zero for an $(a,b)\in  \overline{S}\setminus\{0\}$ which is not zero, since $b=\pm 1.$ But this cannot happen for $D(a',b')\in P^*_{\cal G}.$
Therefore (\ref{NI}) is strict and  the proof of $P^*_{\cal G}\subset C$ is completed.

\vspace{4mm}\noindent Let us now show that $C\subset P_{\cal G}^*.$ For  $D(a',b')\in C$ given, we want to show that
$\langle K, D\rangle$ is positive for all  $K(a,b)\in \overline{ P_{\cal G}}\setminus \{0\}$. We will do so first for $K(a,b)\in  P_{\cal G}$ and then for $K(a,b)\in \overline{ P_{\cal G}}\setminus (P_{\cal G}\cup \{0\})$. For $K(a,b)\in  P_{\cal G}$,  $t_k>0$ and $b\in R.$
From  (\ref{CCR}), we have
$$\langle K, D\rangle=t_p^2a'_p+Ab^2+2Bb+C=t_p^2a'_p+ A\left[\left(b+\frac{B}{A}\right)^2+\frac{1}{A^2}(AC-B^2)\right]\;.$$ We have checked above that $AC-B^2\geq 0.$ Moreover $a'_p>0$ since $D(a',b')\in C$. It follows immediately that $\langle K, D\rangle>0.$

Let us now show $\langle K, D\rangle>0$ for $K(a,b)\in \overline{ P_{\cal G}}\setminus (P_{\cal G}\cup \{0\})$ that is
for $t_1\ldots t_p=0$ and $(t_1,\ldots,t_p,b)\neq 0.$ We need only show that then $\langle K, D\rangle\neq 0.$ But
$0=a_1a'_1+\cdots+a_pa'_p+2(p-1)bb'=\sum_{i=1}^pt_i^2a'_i$ implies that $t_i=0$ for all $i=1,\ldots,p$ since $(a',b')\in C$ implies $a'_i>0$. But since $b=t_is_{ij}$, this implies $b=0$ but this is impossible since we exclude the zero matrix for $K$.

\vspace{5mm}

\noindent {\bf Proof of Theorem \ref{pgtnorm:theorem}}

In $I_{\cal G}(\delta, D)$ we make the  change of variable \eqref{CRR}. Switching to these Cholesky coordinates  leads to the Jacobian  $dadb=2^{p}t_1\ldots t_p\, dbdt.$ As seen before the new domain of integration is the product $$\{(b,t); t_k>0, b\in R\}=(0,\infty)^{p}\times R.$$ With the notation $A,B,C$ of (\ref{TR}), we have
$$\langle K(a,b),D(a',b')\rangle =2(p-1)bb'+a_1a'_1+\cdots+a_pa'_p=t_p^2a'_p+Ab^2+2Bb+C.$$ Using (\ref{AR}) for the expression of $A$, we obtain

\begin{eqnarray}\nonumber I_{G}(\delta, D)&=&2^p\int_{(0,\, \infty)^{p}\times R}(t_1\ldots t_p)^{\delta-1}e^{-(p-1)bb'}e^{-\frac{t_p^2a'_p}{2}}\prod_{i=1}^{p-1}e^{-\frac{t_i^2a'_i}{2}-\frac{b^2a'_{j_i}}{2t_i^2}}dt_1\ldots dt_p\, db\\\nonumber
&=&2^p\int_{0}^{\infty}e^{-\frac{t_p^2a'_p}{2}}t_p^{\delta-1}dt_p \int_{-\infty}^{\infty}e^{-(p-1)bb'}\prod_{i=1}^{p-1}\left(K_{\delta/2}(|b|(a'_ia'_{j_i})^{1/2})(|b|\sqrt{a'_{j_i}/a_i})^{\delta/2}\right)db
\\&=&
2^{p+\frac{\delta}{2}-1}\frac{\Gamma(\delta/2)}{(a'_p)^{\delta/2}}\left(\prod_{i=1}^{p-1}\frac{a'_{j_i}}{a'_i}\right)^{\delta/4} J_{\delta}(D)\end{eqnarray}
with the notation
\begin{equation}\label{JDE} J_{\delta}(D)=\int_{-\infty}^{\infty}e^{-(p-1)bb'}|b|^{(p-1)\delta/2}\prod_{i=1}^{p-1}K_{\delta/2}(|b|(a'_ia'_{j_i})^{1/2})db.\end{equation}
We now prove by induction that
\begin{equation}\label{COT}\frac{1}{(a'_p)^2}\times \prod_{i=1}^{p-1}\frac{a'_{j_i}}{a_i}=\prod_{i=1}^{p}(a'_i)^{d_i-2}.\end{equation}
Of course (\ref{COT}) is correct for $p=2$. Suppose that (\ref{COT}) is true for any rooted tree with size $p.$ Consider a  rooted tree $T^*$  with vertices $\{0,1,\ldots,p\} $ and root $p$ and numbered, as usual, such that
$i\prec j$ implies $i\leq j$.  Denote $T$ the induced tree with vertices $\{1,\ldots,p\}$. Finally denote $d^*=(d_0^*,\ldots,d_p^*)$ and $d=(d_1,\ldots,d_p)$
the number of neighbours in $T^*$ and $T.$ Then $d_0^*=1$, $d_{j_0}^*=1+d_{j_0}$ and $d^*_i=d_i$ if $i\neq 0$ and $i\neq j_0.$
This implies that
$$\frac{1}{(a'_p)^2}\times \prod_{i=0}^{p-1}\frac{a'_{j_i}}{a_i}=\frac{a'_{j_0}}{a_0}\frac{1}{(a'_p)^2}\times \prod_{i=1}^{p-1}\frac{a'_{j_i}}{a_i}\stackrel{(1)}{=}\frac{a'_{j_0}}{a_0}\prod_{i=1}^{p}(a'_i)^{d_i-2}\stackrel{(2)}{=}\prod_{i=0}^{p}(a'_i)^{d^*_i-2},$$ where (1) comes from the induction hypothesis and (2) from the link between $d$ and $d^*.$ The induction hypothesis is extended to $p+1$ and (\ref{COT}) is proved.

We now prove that $J_{\delta}(D)$ defined by (\ref{JDE}) converges if $D=m(a',b')\in P^*_{\cal G}$ where $P^*_{\cal G}$ is the convex cone defined in Proposition 3.
We write $J_{\delta}(D)$ as the sum
\begin{equation}
\label{both}
J_{\delta}(D)=\int_{-\infty}^0\ldots db+\int_0^{+\infty}\ldots db.
\end{equation}

When $b\rightarrow \pm \infty$, $|b|\rightarrow +\infty$. From \cite{Wat:1995} page 202, 7.23 (1)
we have
$$K_{\lambda}(s)\sim_{s\rightarrow \infty} \sqrt{\frac{\pi}{2}}\frac{e^{-s}}{s^{1/2}}.$$
We use this fact to analyse the convergence of $J_{\delta}(D).$ If $D=m(a',b')\in P^*_{\cal G},$ from the  asymptotic formula above, we see that the integrands in both integral on the RHS of \eqref{both}, when $|b|$ goes to infinity, behave like $|b|^ce^{-|b|H}$ where, since $m(a',b')\in P^*_{\cal G}$,
 $$H=\sum_{(i,j)\in E}^p\sqrt{a'_ia'_j}-(p-1)|b'|\mathrm{sign}(bb')>0$$
  and $c=(p-1)\frac{\delta-1}{2}$. Since the argument of (\ref{JDE}) is continuous, both integrals converge at infinity.

 To study the convergence of these integrals when $b\rightarrow 0$, we recall that
$$2K_{\lambda}(s)=\int_0^{+\infty}x^{\lambda-1}e^{-\frac{s}{2}(x+\frac{1}{x})}dx.$$
 Making the change of variable $u=sx$ in the expression of  $2K_{\lambda}(s)$ we see that
  $$K_{\lambda}(s)\sim_{s\rightarrow 0} s^{-\lambda}2^{\lambda-1}\Gamma(\lambda).$$
  Therefore, for both integrals in the RHS of \eqref{both}, the integrand   is equivalent to
  $$\Big(|b|^{-\frac{\delta}{2}}\Gamma(\frac{\delta}{2})\Big)^{p-1}|b|^{\frac{p\delta}{2}}e^{-(p-1)bb'}=|b|^{\frac{\delta}{2}}e^{-(p-1)bb'}$$
  and therefore both integrals converge at $0$.
   The expression \eqref{pgtnorm} of the normalizing constant is now proved.

\vspace{3mm}

By \eqref{pgtnorm}, $I_{G}(1, D)=2^{p-\frac{1}{2}}\Gamma(\frac{1}{2})(a'_{p})^{-\frac{1}{2}}(\prod_{i=1}^{p-1}\frac{a'_{j_{i}}}{a_{i}'})^{\frac{1}{4}}J_{1}(D)$, where
\begin{eqnarray*}
J_{1}(D)&=&\int_{-\infty}^{\infty}e^{-(p-1)bb'}|b|^{\frac{p-1}{2}}\prod_{i=1}^{p-1} K_{\frac{1}{2}}(|b|(a'_{i}a'_{j_{i}})^{\frac{1}{2}})db\\
&=&\int_{-\infty}^{\infty}e^{-(p-1)bb'}|b|^{\frac{p-1}{2}}(\prod_{i=1}^{p-1}\sqrt{\frac{\pi}{2}}|b|^{-\frac{1}{2}}(a'_{i}a'_{j_{i}})^{-\frac{1}{4}}e^{|b|(a'_{i}a'_{j_{i}})^{\frac{1}{2}}})db\\
&=&(\frac{\pi}{2})^{\frac{p-1}{2}}\prod_{i=1}^{p-1}(a'_{i}a'_{j_{i}})^{-\frac{1}{4}}\int_{-\infty}^{\infty}e^{-(p-1)bb'-|b|\sum\limits^{p-1}_{i=1}(a'_{i}a'_{j_{i}})^{\frac{1}{2}}}db.
\end{eqnarray*}
We compute the integral
\begin{eqnarray*}
\int_{-\infty}^{\infty}e^{-(p-1)bb'-|b|\sum\limits^{p-1}_{i=1}(a'_{i}a'_{j_{i}})^{\frac{1}{2}}}db
&=&\int_{-\infty}^{0}e^{-(p-1)bb'+b\sum\limits^{p-1}_{i=1}(a'_{i}a'_{j_{i}})^{\frac{1}{2}}}db+\int_{0}^{\infty}e^{-(p-1)bb'-b\sum\limits^{p-1}_{i=1}(a'_{i}a'_{j_{i}})^{\frac{1}{2}}}db\\
&=&\frac{1}{\sum\limits^{p-1}_{i=1}(a'_{i}a'_{j_{i}})-(p-1)b'}+\frac{1}{\sum\limits^{p-1}_{i=1}(a'_{i}a'_{j_{i}})+(p-1)b'}\;.
\end{eqnarray*}
Therefore
\begin{eqnarray*}
J_{1}(D)&=&(\frac{\pi}{2})^{\frac{p-1}{2}}\prod_{i=1}^{p-1}(a'_{i}a'_{j_{i}})^{-\frac{1}{4}}\Big[\Big(\sum\limits^{p-1}_{i=1}(a'_{i}a'_{j_{i}})^{\frac{1}{2}}-(p-1)b'\Big)^{-1}+\Big(\sum\limits^{p-1}_{i=1}(a'_{i}a'_{j_{i}})^{\frac{1}{2}}+(p-1)b'\Big)^{-1}\Big].\\
\end{eqnarray*}
Since $\sum\limits^{p-1}_{i=1}(a'_{i}a'_{j_{i}})=\sum_{(i,j)\in E}(a'_{i}a'_{j})$, this yields the expression of $I_{\cal G}(1,D)$.
\vspace{3mm}

Similarly, from \eqref{pgtnorm}, $I_{G}(3,D)=2^{p+\frac{1}{2}}\Gamma(\frac{3}{2})(a'_{p})^{-\frac{3}{2}}\prod_{i=1}^{p-1}(\frac{a'_{j_{i}}}{a'_{i}})^{-\frac{3}{4}}J_{3}(D)$ with
\begin{eqnarray*}
J_{3}(D)&=&\int_{-\infty}^{\infty}e^{-(p-1)bb'}|b|^{\frac{3}{2}(p-1)}\prod_{i=1}^{p-1} K_{\frac{3}{2}}(|b|(a'_{i}a'_{j_{i}})^{\frac{1}{2}})db\\
&=&\int_{-\infty}^{\infty}e^{-(p-1)bb'}|b|^{\frac{3}{2}(p-1)}\prod_{i=1}^{p-1}[\sqrt{\frac{\pi}{2}}(|b|^{-\frac{1}{2}}(a'_{i}a'_{j_{i}})^{-\frac{1}{4}}+|b|^{-\frac{3}{2}}(a'_{i}a'_{j_{i}})^{-\frac{3}{4}})e^{-|b|(a'_{i}a'_{j_{i}})^{\frac{1}{2}}}]db\\
&=&\int_{-\infty}^{\infty}e^{-(p-1)bb'}|b|^{\frac{3}{2}(p-1)}(\frac{\pi}{2})^{\frac{p-1}{2}}|b|^{-\frac{3}{2}(p-1)}\prod_{i=1}^{p-1}(a'_{i}a'_{j_{i}})^{-\frac{3}{4}}\prod_{i=1}^{p-1}[(|b|(a'_{i}a'_{j_{i}})^{\frac{1}{2}}+1)e^{-|b|(a'_{i}a'_{j_{i}})^{\frac{1}{2}}}]db\\
&=&(\frac{\pi}{2})^{\frac{p-1}{2}}\prod_{i=1}^{p-1}(a'_{i}a'_{j_{i}})^{-\frac{3}{4}}\int_{-\infty}^{\infty}e^{-(p-1)bb'-|b|\sum\limits^{p-1}_{i=1}(a'_{i}a'_{j_{i}})^{\frac{1}{2}}}\prod_{i=1}^{p-1}(1+|b|(a'_{i}a'_{j_{i}})^{\frac{1}{2}})db\\
&=&(\frac{\pi}{2})^{\frac{p-1}{2}}\prod_{i=1}^{p-1}(a'_{i}a'_{j_{i}})^{-\frac{3}{4}}\int_{-\infty}^{\infty}e^{-(p-1)bb'-|b|\sum\limits^{p-1}_{i=1}(a'_{i}a'_{j_{i}})^{\frac{1}{2}}}(1+|b|\sigma_{1}+|b|^2\sigma_{2}+\ldots+|b|^{p-1}\sigma_{p-1})db\\
&=&(\frac{\pi}{2})^{\frac{p-1}{2}}\prod_{i=1}^{p-1}(a'_{i}a'_{j_{i}})^{-\frac{3}{4}}\sum\limits^{p-1}_{k=0}\sigma_{k}\int_{-\infty}^{\infty}e^{-(p-1)bb'-|b|\sum\limits^{p-1}_{i=1}(a'_{i}a'_{j_{i}})^{\frac{1}{2}}}|b|^kdb\;,
\end{eqnarray*}
where the $\sigma_i=\sigma_i(\sqrt{a_i'a'_{j_{i}}},i=1,\ldots,p-1)$ are the symmetric functions of $\sqrt{a_i'a'_{j_{i}}},i=1,\ldots,p-1$.
Since
\footnotesize
\begin{eqnarray*}
\int_{-\infty}^{\infty}e^{-(p-1)bb'-|b|\sum\limits^{p-1}_{i=1}(a'_{i}a'_{j_{i}})^{\frac{1}{2}}}|b|^{m-1}db
&=&\Gamma(m)\left[\left(\sum\limits^{p-1}_{i=1}(a'_{i}a'_{j_{i}})^{\frac{1}{2}}-(p-1)b'\right)^{-m}-\left(\sum\limits^{p-1}_{i=1}(a'_{i}a'_{j_{i}})^{\frac{1}{2}}+(p-1)b'\right)^{-m}\right],
\end{eqnarray*}
\normalsize
then
\footnotesize
\begin{eqnarray*}
I_{\G}(3,D)=2^{\frac{p}{2}-1}\pi^{\frac{p}{2}}\prod_{i=1}^{p}(a'_{i})^{-\frac{3}{2}}\sum\limits^{p-1}_{k=0}\sigma_{k}\Gamma(k+1)\left[\left(\sum\limits^{p-1}_{i=1}(a'_{i}a'_{j_{i}})^{\frac{1}{2}}-(p-1)b'\right)^{-k-1}
-\left(\sum\limits^{p-1}_{i=1}(a'_{i}a'_{j_{i}})^{\frac{1}{2}}+(p-1)b'\right)^{-k-1}\right].
\end{eqnarray*}
\normalsize
This yields the expression of $I_{\cal G}(3,D)$.
\vspace{5mm}

\noindent {\bf Proof of Proposition \ref{pgdaisyleaves}}
\newline By definition $P_{\cal G}^*=\{D=l(a',b',c')\in M({\cal G})\mid \langle K, D\rangle>0,\;\;K\in \bar{P}_{\cal G}\setminus \{0\}\}.$ Let $\beta$ denote the angle between $b$ and $b'$. Then, since $\cos \beta> -1$
\begin{eqnarray*}
 \langle K, D\rangle&=&aa'+ncc'+2||b||||b'||\cos \beta> aa'+ncc'-2||b||||b'||.
\end{eqnarray*}
Therefore $2||b||||b'||<aa'+ncc'$ and since $ac>0$, $\frac{4||b||^2||b'||^2}{ac}<\frac{(aa'+ncc')^2}{ac}$. By differentiation with respect to $a$ and $c$, we see that $\frac{(aa'+ncc')^2}{ac}\geq 4na'c'$ and therefore $\langle K, D\rangle>0$ implies that
$||b'||^2<na'c'.$
\vspace{5mm}

\noindent {\bf Proof of Theorem \ref{igdaisyleaves}.}
Let us introduce the matrix
$$A(r,s,t)=\left[\begin{array}{ccccc}r&s_1&s_2&\ldots&s_n\\0&t&0&\ldots&0\\0&0&t&\ldots&0\\\ldots&\ldots&\ldots&\ldots&\ldots\\0&0&0&\ldots&t\end{array}\right].$$ If $(a,b,c)\in P_{\cal G}$ the only triple $(r,s,t)$ such that $t>0$  and $r>0$ and such that
$$K(a,b,c)=A(r,s,t)A^{T}(r,s,t)=\left[\begin{array}{cc}r^2+\|s\|^2&s't \\ts&t^2I_n\end{array}\right]$$ satisfies
$r=(a-\frac{\|b\|^2}{c})^{1/2},\ t=\sqrt{c},\ s=\frac{b}{\sqrt{c}}.$
A new parameterization of $P_{\cal G}$ is therefore given by  the change of variables $(a,b,c)$ into $(r,s,t)$ with
$a=r^2+\|s\|^2,b=ts,\ c=t^2,$
where $(r,s,t)$ belongs to $$\{(r,s,t); r>0, s\in R^n,t>0\}=(0,\infty)\times R^{n}\times (0,\infty).$$ With this parameterization, from (\ref{DK}), we have $\det K=r^2t^{2n}$ and $dadbdc=4rt^{n+1}drdsdt.$
Then
\begin{eqnarray*}I_{\G}(\delta, D)&=&4\int_0^{\infty}\int_0^{\infty}\left(\int_{R^n}e^{\frac{-\|s\|^2a'-2t\<s,\, b'\>}{2}}ds\right)r^{\delta-1}t^{(\delta-1)n+1}e^{\frac{-r^2a'-nt^2c'}{2}}drdt\\&=&4\left(\frac{\pi}{a'}\right)^{n/2}\int_0^{\infty}e^{\frac{-nt^2c'}{2}+\frac{t^2\|b'\|^2}{2a'}}t^{(\delta-1)n+1}dt\times \int_0^{\infty}r^{\delta-1}e^{\frac{-r^2a'}{2}}dr\\&=&\left(\frac{\pi}{a'}\right)^{n/2}\int_0^{\infty}e^{\frac{-nvc'}{2}+\frac{v\|b'\|^2}{2a'}}v^{(\delta-1)\frac{n}{2}}dv\times \int_0^{\infty}v^{\frac{\delta}{2}-1}e^{\frac{-va'}{2}}dv\\&=&2^{\frac{\delta+n\delta+2}{2}}\pi^{n/2} a'^{(\frac{\delta}{2}-1)(n-1)} \frac{1}{(na'c'-\|b'\|^2)^{(\delta-1)\frac{n}{2}+1}}\Gamma((\delta-1)\frac{n}{2}+1)\Gamma(\frac{\delta}{2}).\end{eqnarray*}
\vspace{5mm}

\noindent {\bf Proof of Theorem \ref{igdaisyall}.}

\begin{eqnarray*}
I_{\cal G}(\delta,D)&=&\int_{R^n}\left(\int_{\|b\|}^{\infty}a^{n\frac{\delta-2}{2}}(a-\frac{||b||^2}{a})^{\frac{\delta-2}{2}}\exp -\frac{1}{2}\{(n+1)aa'+2 \langle b,b'\rangle\}da\right) db\\
&=&\int_{R^n}\left(\int_{\|b\|}^{\infty}a^{(n-1)\frac{\delta-2}{2}}(a^2-||b||^2)^{\frac{\delta-2}{2}}\exp -\frac{1}{2}\{(n+1)aa'+2 \langle b,b'\rangle\}da\right) db
\end{eqnarray*}
Let us make the change of variable
$$(a,b)\in (||b||,+\infty)\times R^n\mapsto (u, R, \theta)\in (0,1)\times (0, +\infty)\times S$$
 where $b=R\theta$ and $S$ is the unit sphere in $R^n$ and $a=\frac{R}{\sqrt{u}}$. We have $da db=-\frac{1}{2u^{3/2}}RC_nR^{n-1}da dR d \theta$ where $C_n$ is the surface area of $S$.
Then
\tiny
\begin{eqnarray*}
I_{\cal G}(\delta,D)&=&\frac{C_n}{2}\int_S\left[\int_{0}^{+\infty}\left(\int_0^1
R^{(n-1)\frac{\delta-2}{2}}u^{-(n-1)\frac{\delta-2}{4}}R^{\delta-2}(\frac{1}{u}-1)^{\frac{\delta-2}{2}}\exp -\{\frac{(n+1)Ra'}{2\sqrt{u}}+R\langle \theta,b'\rangle\}u^{-3/2}du\right)R^ndR\right]d\theta\\
&=&\frac{C_n}{2}\int_S\left[\int_{0}^{+\infty}\left(\int_0^1
R^{(n-1)\frac{\delta-2}{2}}u^{-(n+1)\frac{\delta-2}{4}}R^{\delta-2}(1-u)^{\frac{\delta-2}{2}}\exp -\{\frac{(n+1)Ra'}{2\sqrt{u}}+R\langle \theta,b'\rangle\}u^{-3/2}du\right)R^ndR\right]d\theta\\
&=&\frac{C_n}{2}\int_S\left[\int_{0}^{+\infty}\left(\int_0^1
R^{(n+1)\frac{\delta}{2}-1}u^{-(n+1)\frac{\delta-2}{4}-\frac{3}{2}}(1-u)^{\frac{\delta-2}{2}}\exp -R\{\frac{(n+1)a'}{2\sqrt{u}}+\langle \theta,b'\rangle\}du\right)dR\right]d\theta\\
&=&\frac{C_n}{2}\int_S\left[\int_0^1u^{-(n+1)\frac{\delta-2}{4}-\frac{3}{2}}(1-u)^{\frac{\delta-2}{2}}
\left(\int_{0}^{+\infty}R^{(n+1)\frac{\delta}{2}-1}\exp -R\{\frac{(n+1)a'}{2\sqrt{u}}+\langle \theta,b'\rangle\}dR\right)du\right]d \theta\\
&=&\frac{C_n\Gamma((n+1)\frac{\delta}{2})}{2}\int_S\left[\int_0^1u^{-(n+1)\frac{\delta-2}{4}-\frac{3}{2}}(1-u)^{\frac{\delta-2}{2}}\Big(\frac{(n+1)a'}{2\sqrt{u}}+\langle \theta,b'\rangle\Big)^{-(n+1)\frac{\delta}{2}}du\right]d \theta\\
&=&\frac{C_n\Gamma((n+1)\frac{\delta}{2})}{2}(\frac{(n+1)a'}{2})^{-(n+1)\frac{\delta}{2}}\int_S\left[\int_0^1u^{-(n+1)\frac{\delta-2}{4}-\frac{3}{2}}(1-u)^{\frac{\delta-2}{2}}u^{(n+1)\frac{\delta}{4}}\Big(1+\frac{2}{(n+1)a'}\sqrt{u}\langle \theta,b'\rangle\Big)^{-(n+1)\frac{\delta}{2}}du\right]d \theta\\
&=&K_{n,\delta}(a')\int_S\left[\int_0^1u^{\frac{n}{2}-1}(1-u)^{\frac{\delta}{2}-1}\sum_{k=0}^{\infty}(-1)^k\Big(\frac{2\langle \theta,b'\rangle}{(n+1)a'}\Big)^ku^{\frac{k}{2}}\frac{\Big((n+1)\frac{\delta}{2}\Big)_k}{k!}\;\;
du\right]d \theta
\end{eqnarray*}
\normalsize
where $K_{n,\delta}(a')=\frac{2^{\frac{(n+1)\delta}{2}-1}C_n\Gamma((n+1)\frac{\delta}{2})}{(n+1)^{\frac{(n+1)\delta}{2}}(a')^{\frac{(n+1)\delta}{2}}}$. Therefore
\begin{eqnarray*}
I_{\cal G}(\delta,D)&=&K_{n,\delta}(a')
\sum_{k=0}^{\infty}(-1)^k\Big(\frac{2}{(n+1)a'}\Big)^k\frac{\Big((n+1)\frac{\delta}{2}\Big)_k}{k!}
\int_0^1u^{\frac{k+n}{2}-1}(1-u)^{\frac{\delta}{2}-1}du\int_S\langle \theta,b'\rangle^k d\theta\\
&=&K_{n,\delta}(a')
\sum_{k=0}^{\infty}\Big(\frac{2}{(n+1)a'}\Big)^{2k}\frac{\Big((n+1)\frac{\delta}{2}\Big)_{2k}}{(2k)!}
\int_0^1u^{\frac{2k+n}{2}-1}(1-u)^{\frac{\delta}{2}-1}du\int_S\langle \theta,b'\rangle^{2k} d\theta\\
&=&K_{n,\delta}(a')
\sum_{k=0}^{\infty}\Big(\frac{2}{(n+1)a'}\Big)^{2k}\frac{\Big((n+1)\frac{\delta}{2}\Big)_{2k}}{(2k)!}
\frac{\Gamma(k+\frac{n}{2})\Gamma(\frac{\delta}{2})}{\Gamma(k+\frac{\delta+n}{2})}\|b'\|^{2k}\frac{(1/2)_k}{(n/2)_k}
\end{eqnarray*}
We now use the fact that $(\alpha)_{2k}=2^{2k}\Big(\frac{\alpha}{2}\Big)_{k}\Big(\frac{\alpha+1}{2}\Big)_{k}$ and $\Gamma(\alpha+k)=\Gamma(\alpha)(\alpha)_k$.
We also use the fact that
\footnotesize
$$(2k)!=(1 3 5...(2k-1))(2 4 6...2k)=2^kk!2^k\frac{1}{2}\frac{3}{2}...\frac{2k-1}{2}=2^{2k}k!\frac{1}{2}(\frac{1}{2}+1)(\frac{1}{2}+2)...(\frac{1}{2}+(k-1))=2^{2k}k!\Big(\frac{1}{2}\Big)_k.$$
\normalsize
Finally, since the integral is rotational symmetric, we take $b'=||b'||e_1$ so that  $\<\theta,b'\>=\theta_1||b'||$ and recalling that $d\theta$ is the distribution of $\frac{Z}{||Z||}$ when $Z\sim N(0,1)$ so that $\theta_1=\frac{Z_1}{\sqrt{Z_1^2+\ldots+Z_n^2}}$ which is then such that $\theta_1^2\sim \mbox{Beta}(\frac{1}{2}, \frac{n-1}{2})$, for $v=\theta_1$, we have
$$\int_S\<\theta,b'\>^{2k}d\theta=\frac{\|b'\|^{2k}}{B(\frac{1}{2},\frac{n-1}{2})}\int_0^1v^{k-\frac{1}{2}}(1-v)^{\frac{n-1}{2}-1}dv=\|b'\|^{2k}\frac{(1/2)_k}{(n/2)_k}.$$

Writing $B(\alpha, \beta)$ for the Beta function with argument $(\alpha,\beta)$, we obtain
\footnotesize
\begin{eqnarray*}
I_{\cal G}(\delta,D)&=&K_{n,\delta}(a')B(\frac{\delta}{2},\frac{n}{2})
\sum_{k=0}^{\infty}\Big(\frac{2}{(n+1)a'}\Big)^{2k}\frac{2^{2k}}{(2k)!}\Big((n+1)\frac{\delta}{4}\Big)_k\Big((n+1)\frac{\delta}{4}+\frac{1}{2}\Big)_k\frac{\Big(\frac{n}{2}\Big)_k}{\Big(\frac{n+\delta}{2}\Big)_k}\|b'\|^{2k}\frac{(1/2)_k}{(n/2)_k}\\
&=&K_{n,\delta}(a')B(\frac{\delta}{2},\frac{n}{2})
\sum_{k=0}^{\infty}\Big(\frac{2}{(n+1)a'}\Big)^{2k}\frac{2^{2k}}{2^{2k}k!(\frac{1}{2})_k}\Big((n+1)\frac{\delta}{4}\Big)_k\Big((n+1)\frac{\delta}{4}+\frac{1}{2}\Big)_k\frac{\Big(\frac{n}{2}\Big)_k}{\Big(\frac{n+\delta}{2}\Big)_k}\|b'\|^{2k}\frac{(1/2)_k}{(n/2)_k}\;.
\end{eqnarray*}
\normalsize
Let $u=\Big(\frac{2||b'||}{(n+1)a'}\Big)^2$.  We note that since $D=l(a',b',a')\in P_{\G}^*$, then $u\le 1$.
After obvious simplifications in the expression above, we have
\begin{eqnarray*}
I_{\cal G}(\delta,D)&=&K_{n,\delta}(a')B(\frac{\delta}{2},\frac{n}{2})\sum_{k=0}^{\infty}\frac{u^k}{k!}\frac{\Big((n+1)\frac{\delta}{4}\Big)_k\Big((n+1)\frac{\delta}{4}+\frac{1}{2}\Big)_k}{\Big(\frac{n+\delta}{2}\Big)_k}\\
&=&K_{n,\delta}(a')B(\frac{\delta}{2},\frac{n}{2}) \;_2F_1\Big((n+1)\frac{\delta}{4}, (n+1)\frac{\delta}{4}+\frac{1}{2},\frac{n+\delta}{2};u\Big)\;.
\end{eqnarray*}
\vspace{5mm}

\noindent {\bf Proof of Proposition \ref{pgtriangle}}
\newline We write the Cholesky decomposition of $K$ under the form $K=AA^t$ with
$$A=\left(\begin{array}{ccc}a_{11}&a_{12}&a_{13}\\0&a_{22}&a_{23}\\0&0&a_{33}\end{array}\right).$$
Expressing the $k_{ij}$ in terms of the $a_{ij}$ and imposing $k_{13}=k_{23}$ immediately shows that we must have $a_{13}=a_{23}$. Then, let $D=(d_{ij})_{1\leq i,j\leq 3}$ with $d_{13}=d_{23}$ since the dual of $P_{\cal G}$ must be in the same linear space as $P_{\cal G}$.
\begin{eqnarray*}
\langle K,D\rangle&=&(a_{11}^1+a_{12}^2+a_{13}^2)d_{11}+(a_{22}^2+a_{13}^2)d_{22}+a_{33}^2d_{33}+2(a_{22}a_{12}+a_{13}^2)d_{12}+4a_{13}a_{33}d_{13}\\
&=&a_{13}^2(d_{11}+d_{22}+2d_{12})+4a_{13}a_{33}d_{13}+a_{12}^2d_{11}+2a_{12}a_{22}d_{12}+a_{11}^2d_{11}+a_{22}^2d_{22}+a_{33}^2d_{33}\;,
\end{eqnarray*}
which we view as a quadratic form $a^tMa$ with $a^t=(a_{13},a_{33},a_{12},a_{22},a_{33})$ and
$$M=\left(\begin{array}{ccrrr}d_{11}+d_{22}+2d_{12}&2d_{13}&0&0&0\\2d_{13}&d_{33}&0&0&0\\
0&0&d_{11}&d_{12}&0\\0&0&d_{12}&d_{22}&0\\0&0&0&0&d_{11}\end{array}\right).$$
Since $AA^t$ is the Cholesky parametrization of $P_{\cal G}$, clearly $K\in P_{\cal G}$ if and only if $a_{ii}>0,i=1,2,3$. If we can prove the following lemma, the condition $M>0$ will yield the dual cone $P^*_{\cal G}$.

\text{\emph{Lemma A1.}} The trace $\langle K,D\rangle$ is positive for all $K\in \bar{P}_{\cal G}\setminus \{0\}$ if and only if the matrix $M$ of the quadratic form $\langle K, D\rangle=a^tMa$ is positive definite

Let us now prove the lemma. Clearly if $M>0$ then $\langle K,D\rangle=a^tMa>0$ for all $a\in R^5$ and in particular for all $a$ with $a_{ii}>0, i=1,2,3$. Conversely let $a\in R^5$. Then $a$ can be written as
$$a=(\epsilon_1a_{11}, \epsilon_2a_{22}, \epsilon_3a_{33},a_{12},a_{13})^t$$
where $\epsilon_i$ is the sign of $a_{ii},i=1,2,3$ and we have
$$a^tMa=(a_{11}^1+a_{12}^2+a_{13}^2)d_{11}+(a_{22}^2+a_{13}^2)d_{22}+a_{33}^2d_{33}+2(\epsilon_2a_{22}a_{12}+a_{13}^2)d_{12}+4\epsilon_3a_{13}a_{33}d_{13}.$$
But this is also equal to $\tilde{a}^tM\tilde{a}$ where
$$\tilde{a}^t=(|a_{11}|, |a_{22|}, |a_{33}|,\epsilon_2a_{12},\epsilon_3a_{13})$$
which is in $P_{\cal G}$. Therefore $\langle K,D\rangle>0$ for all $K\in P_{\cal G}$ if and only if $M$ is positive definite which translates immediately into the conditions defining $P^*_{\cal G}$ in Proposition \ref{pgtriangle}.
\vspace{5mm}

\noindent {\bf Proof of Theorem \ref{igtriangle}}
For the proof of the theorem, it will be convenient to adopt a slightly different form of the parametrization of the Cholesky decomposition of $K=AA^t$ in $P_{\cal G}$.
Let
\[ A_{ij} = \left\{
   \begin{array}{l l}
     \sqrt{a_{ii}} & \quad \text{if $i=j$,}\\
     -a_{ij} & \quad \text{if $i < j$.}
   \end{array} \right.\]
so that
\[ (AA^{T})_{ij} = \left\{
   \begin{array}{l l}
     a_{ii}+\sum\limits_{l>i}a^2_{il} & \quad \text{if $i=j$,}\\
     -a_{ij}\sqrt{a_{jj}}+\sum\limits_{l>max(i,j)}a_{il}a_{jl} & \quad \text{if $i < j$.}
   \end{array} \right.\]

 Equating each entry $k_{ij}$ of $K$ to the corresponding entry of $AA^{T}$ with the constraint that $k_{13}=k_{23}$ shows that

$k_{11}=a_{11}+a^2_{12}+a^2_{13}$, \ \ \ \ $k_{12}=-\sqrt{a_{22}}a_{12}+a_{13}a_{23}$,

$k_{22}=a_{22}+a^2_{23}$, \ \ \ \ \ \ \ \ \ \ \ \ $k_{13}=-\sqrt{a_{33}}a_{13}$,

$k_{33}=a_{33}$,\ \ \ \ \ \ \ \ \ \ \ \ \ \ \ \ \ \ \ \ \ $k_{23}=-\sqrt{a_{33}}a_{23}$.

In particular, we find that since $a_{33}>0$, $a_{13}=a_{23}$ and $k_{12}=-\sqrt{a_{22}}a_{12}+a^2_{13}$.
The Jacobian of the transformation from $K$ to $A$ is

$$J=\bordermatrix{~& k_{11} & k_{12} & k_{13} & k_{22} &k_{33} \cr
                  a_{11} & 1 & 0& 0&0&0\cr
                  a_{12} & *&-\sqrt{a_{22}} & 0&0 &0\cr
                  a_{13} & *&* &-\sqrt{a_{33}} & 2a_{13}&0\cr
                  a_{22} & *&*&* &1 &0\cr
                  a_{33} & *& *&*& *&1\cr}
$$

It is easy to see $|J|=|diag(J)| = a^{1/2}_{22}a^{1/2}_{33}$.

We now have all the ingredients necessary to calculate the normalizing constant $I_{\cal G}(\delta, D)$.
We have
$|K|=a_{11}a_{22}a_{33}$ and
\begin{eqnarray*}
\langle K,D \rangle&=&d_{11}k_{11}+d_{22}k_{22}+d_{33}k_{33}+2d_{12}k_{12}+2d_{13}k_{13}+2d_{23}k_{23}\\
&=&d_{11}(a_{11}+a^2_{12}+a^2_{13})+d_{22}(a_{22}+a^2_{23})+d_{33}a_{33}\\
&&\hspace{.5cm}+ 2d_{12}(-\sqrt{a_{22}}a_{12}+a_{13}a_{23})+2d_{13}(-a_{13}\sqrt{a_{33}})+2d_{23}(-a_{23}\sqrt{a_{33}}).
\end{eqnarray*}
and so the normalizing constant is
\begin{eqnarray*}
I_{\cal G}(\delta, D)&=&\int_A a^{\frac{\delta-2}{2}}_{11}a^{\frac{\delta-1}{2}}_{22}a^{\frac{\delta-1}{2}}_{33}\exp(-\frac{1}{2}d_{11}a_{11}-\frac{1}{2}d_{22}a_{22}-\frac{1}{2}d_{33}a_{33}
-\frac{1}{2}d_{11}a^2_{12}\\
&&\hspace{.5cm}-\frac{1}{2}(d_{11}+d_{22}+2d_{12})a^2_{13}+d_{12}\sqrt{a_{22}}a_{12}+2d_{13}a_{13}\sqrt{a_{33}})dA.
\end{eqnarray*}
where $a_{ii}>0$; $a_{ij}\in R,i<j$; and $dA$ denotes the product of all differentials.
The integral with respect to $a_{11}$ is a gamma integral  with
\begin{eqnarray*}
\int^{\infty}_0 a^{\frac{\delta-2}{2}}_{11}\exp(-\frac{1}{2}d_{11}a_{11})da_{11}=2^{\frac{\delta}{2}}\Gamma(\frac{\delta}{2})d^{-\frac{\delta}{2}}_{11}.
\end{eqnarray*}
The integral with respect to $a_{12}$ and $a_{13}$ are Gaussian integrals with
\begin{eqnarray*}
\int^{\infty}_{-\infty} \exp(-\frac{1}{2}d_{11}a^2_{12}+d_{12}\sqrt{a_{22}}a_{12})da_{12}=\frac{\sqrt{2\pi}}{\sqrt{d_{11}}}\exp(\frac{d^2_{12}a_{22}}{2d_{11}}),
\end{eqnarray*}
and
\footnotesize
\begin{eqnarray*}
\int^{\infty}_{-\infty} \exp(-\frac{1}{2}(d_{11}+d_{22}+2d_{12})a^2_{13}+2d_{13}\sqrt{a_{33}}a_{13})da_{13}=\frac{\sqrt{2\pi}}{\sqrt{d_{11}+d_{22}+2d_{12}}}\exp(\frac{2d^2_{13}a_{33}}{d_{11}+d_{22}+2d_{12}}).
\end{eqnarray*}
\normalsize
Therefore
\footnotesize
\begin{eqnarray*}
I_{\cal G}(\delta, D)&=&\Gamma(\frac{\delta}{2})2^{\frac{\delta}{2}}d^{-\frac{\delta+1}{2}}_{11}2\pi(d_{11}+d_{22}+2d_{12})^{-\frac{1}{2}}\\
&&\hspace{.5cm}\int^{\infty}_{0}a^{\frac{\delta-1}{2}}_{22}a^{\frac{\delta-1}{2}}_{33}\exp\{(-\frac{1}{2}d_{22}+\frac{d^2_{12}}{2d_{11}})a_{22}+
(-\frac{1}{2}d_{33}+\frac{2d^2_{13}}{d_{11}+d_{22}+2d_{12}})a_{33}\}da_{22}da_{33}\\
&=&\Gamma(\frac{\delta}{2})2^{\frac{\delta}{2}}d^{-\frac{\delta+1}{2}}_{11}2\pi(d_{11}+d_{22}+2d_{12})^{-\frac{1}{2}}\\
&&\hspace{.5cm}\times \Gamma(\frac{\delta+1}{2})(\frac{2d_{11}}{d_{11}d_{22}-d^2_{12}})^{\frac{\delta+1}{2}}
\Gamma(\frac{\delta+1}{2})(\frac{2(d_{11}+d_{22}+2d_{12})}{d_{33}(d_{11}+d_{22}+2d_{12})-4d^2_{13}})^{\frac{\delta+1}{2}}\\
&=&\Gamma(\frac{\delta}{2})\Gamma^2(\frac{\delta+1}{2})\pi 2^{\frac{3\delta+4}{2}}(d_{11}+d_{22}+2d_{12})^{\frac{\delta}{2}}[d_{33}(d_{11}+d_{22}+2d_{12})-4d^2_{13}]^{-\frac{\delta+1}{2}}\\
&&\hspace{.5cm}\times (d_{11}d_{22}-d^2_{12})^{-\frac{\delta+1}{2}}\;.
\end{eqnarray*}
\normalsize
\vspace{5mm}

\noindent {\bf Proof of Proposition \ref{pgdecomp}}
\newline We proceed as in the proof of Proposition \ref{pgtriangle}. That is, we let $K=AA^t$ be the Cholesky decomposition of $K$ with $A$ upper triangular. Equating the entries of $K$ and $AA^t$ yields
$$a_{23}=a_{13},\;a_{24}=a_{14},\;a_{44}=a_{33}$$
with then
\begin{align*}
&k_{11}=a_{11}^2+a_{12}^2+a_{12}^2+a_{14}^2,\;&k_{12}=a_{12}a_{22}+a_{13}^2+a_{14}^2,\;&k_{13}=a_{13}a_{33}.\;&k_{14}=a_{14}a_{33}\\
& &k_{22}=a_{22}^2+a_{13}^2+a_{14}^2,\;& k_{23}=a_{13}a_{33},&k_{14}=a_{14}a_{33}\\
&&&k_{33}=a_{33}^2&k_{34}=0\\
&&&&k_{44}=a_{33}^2
\end{align*}
Then, ordering $\langle K, D\rangle$ as a polynomial in $a_{ij}$, we see that
\begin{eqnarray*}
\langle K, D\rangle&=&d_{11}a_{11}^2+d_{22}a_{22}^2+2d_{33}a_{33}^2+d_{11}a_{12}^2+2d_{12}a_{22}a_{12}+
a_{13}^2(d_{11}+2d_{12}+d_{22})\\
&&\hspace{2cm}+4d_{13}a_{13}a_{33}+a_{14}^2(d_{11}+2d_{12}+d_{22})+4d_{14}a_{14}a_{33}
\end{eqnarray*}
is a quadratic form and the matrix of this quadratic form is
$$M=\left(\begin{array}{cccccc}d_{11}&0&0&0&0&0\\0&d_{22}&d_{12}&0&0&0\\0&d_{12}&d_{11}&0&0&0\\
0&0&0&2d_{33}&2d_{13}&2d_{14}\\0&0&0&2d_{13}&d_{11}+2d_{12}+d_{22}&0\\0&0&0&2d_{14}&0&d_{11}+2d_{12}+d_{22}\end{array}\right).$$
With exactly the same argument as in Proposition \ref{pgtriangle}, we can show that $\langle K, D\rangle>0$ for all $K\in \bar{P}_{\cal G}$ if and only if $M>0$, i.e. $D$ satisfies the conditions of Proposition \ref{pgdecomp}.
\vspace{5mm}

\noindent {\bf Proof of Theorem \ref{igdecomp}}
As in the proof of Theorem \ref{igtriangle}, it will be convenient to adopt a slightly different parametrization of the Cholesky decomposition of $K$. Let
\[ A_{ij} = \left\{
   \begin{array}{l l}
     \sqrt{a_{ii}} & \quad \text{if $i=j$,}\\
     -a_{ij} & \quad \text{if $i < j$.}
   \end{array} \right.\]
so that the entries of $AA^t$ are given by
\[ (AA^{T})_{ij} = \left\{
   \begin{array}{l l}
     a_{ii}+\sum\limits_{l>i}a^2_{il} & \quad \text{if $i=j$,}\\
     -a_{ij}\sqrt{a_{jj}}+\sum\limits_{l>\mbox{max}(i,j)}a_{il}a_{jl} & \quad \text{if $i < j$.}
   \end{array} \right.\]
Equating  each entry $k_{ij}$ of $K$ to the corresponding entry of $AA^{T}$, we find that

$k_{11}=a_{11}+a^2_{12}+a^2_{13}+a^2_{14}$, \ \ \ \ $k_{12}=-\sqrt{a_{22}}a_{12}+a_{13}a_{23}+a_{14}a_{24}$,

$k_{13}=-\sqrt{a_{33}}a_{13}+a_{14}a_{34}$,   \ \ \ \ \ \ $k_{14}=-\sqrt{a_{44}}a_{14}$,

$k_{22}=a_{22}+a^2_{23}+a^2_{24}$, \ \ \ \  \ \ \ \ \ \ \ $k_{23}=-\sqrt{a_{33}}a_{23}+a_{24}a_{34}$,

$k_{24}=-\sqrt{a_{44}}a_{24}$ \ \ \ \ \ \ \ \ \ \ \ \ \  \ \ \ \  \ $k_{33}=a_{33}+a^2_{34}$,

$k_{34}=-\sqrt{a_{44}}a_{34}$, \ \ \ \ \ \ \ \ \ \ \ \ \ \ \ \ \ $k_{44}=a_{44}$.

This shows that $a_{44}>0$ and  $a_{34}=0$. Since $a_{33}>0$ and $k_{13}=k_{23}$, then $a_{13}=a_{23}$. Since $a_{44}>0$ and $k_{14}=k_{24}$, then $a_{14}=a_{24}$. Since $k_{34}=0$, then $a_{33}=a_{44}$. Therefore, we obtain that

$k_{11}=a_{11}+a^2_{12}+a^2_{13}+a^2_{14}$, \ \ \ \ $k_{12}=-\sqrt{a_{22}}a_{12}+a^2_{13}+a^2_{14}$,

$k_{13}=k_{23}=-\sqrt{a_{33}}a_{13}$, \ \ \ \ \ \ \ \ \ \ \ $k_{14}=k_{24}=-\sqrt{a_{33}}a_{14}$,

$k_{22}=a_{22}+a^2_{13}+a^2_{14}$, \ \ \ \ \ \ \ \ \ \ \ \ $k_{33}=k_{44}=a_{33}$.

The Jacobian of the transformation from $K$ to $A$ is

$$J=\bordermatrix{~& k_{11} & k_{12} & k_{13} & k_{14} & k_{22} &k_{33} \cr
                  a_{11} & 1 & 0& 0& 0&0&0\cr
                  a_{12} & *&-\sqrt{a_{22}} &0& 0&0 &0\cr
                  a_{13} & *&* &-\sqrt{a_{33}} & 0& 2a_{13}&0\cr
                  a_{14} & *&*&* &-\sqrt{a_{33}}&2a_{14} &0\cr
                  a_{22} & *& *&*& *&1&0\cr
                  a_{33} & *&* &*&* &*&1\cr}
$$

It is easy to see $|J|=|diag(J)|= a^{1/2}_{22}a_{33}$.
We now have all the ingredients necessary to calculate the normalizing constant $I_{G_{2}}(\delta, D)$.
Through the change of variables, $K=AA^{T}$. Then
$|K|=a_{11}a_{22}a^2_{33}$,
\footnotesize
\begin{eqnarray*}
\langle K,D \rangle&=&d_{11}k_{11}+d_{22}k_{22}+d_{33}k_{33}+d_{44}k_{44}+2d_{12}k_{12}+2d_{13}k_{13}+2d_{14}k_{14}+2d_{23}k_{23}+2d_{24}k_{24}+2d_{34}k_{34}\\
&=&d_{11}(a_{11}+a^2_{12}+a^2_{13}+a^2_{14})+d_{22}(a_{22}+a^2_{13}+a^2_{14})+2d_{33}a_{33}\\
&&\hspace{.5cm}+ 2d_{12}(-\sqrt{a_{22}}a_{12}+a^2_{13}+a^2_{14})+4d_{13}(-a_{13}\sqrt{a_{33}})+4d_{14}(-a_{14}\sqrt{a_{33}}).
\end{eqnarray*}
\normalsize
and so the integral equals
\footnotesize
\begin{eqnarray*}
I_{\cal G}(\delta, D)&=&\int_A a^{\frac{\delta-2}{2}}_{11}a^{\frac{\delta-1}{2}}_{22}a^{\delta-1}_{33}\exp\{-\frac{1}{2}d_{11}a_{11}-\frac{1}{2}d_{11}a^2_{12}
+d_{12}\sqrt{a_{22}}a_{12}-\frac{1}{2}(d_{11}+d_{22}+2d_{12})a^2_{13}\\
&&\hspace{.5cm}+2d_{13}a_{13}\sqrt{a_{33}}-\frac{1}{2}(d_{11}+d_{22}+2d_{12})a^2_{14}+2d_{14}a_{14}\sqrt{a_{33}}-\frac{1}{2}d_{22}a_{22}-d_{33}a_{33}\}dA.
\end{eqnarray*}
\normalsize
where $a_{ii}>0$; $a_{ij}\in R,i<j$; and $dA$ denotes the product of all differentials.
The integral with respect to $a_{11}$ is gamma integrals, then
\footnotesize
\begin{eqnarray*}
\int^{\infty}_0 a^{\frac{\delta-2}{2}}_{11}\exp(-\frac{1}{2}d_{11}a_{11})da_{11}=2^{\frac{\delta}{2}}\Gamma(\frac{\delta}{2})d^{-\frac{\delta}{2}}_{11}.
\end{eqnarray*}
\normalsize
The integral with respect to $a_{12}$, $a_{13}$ and $a_{14}$ are normal integrals, then

\begin{eqnarray*}
\int^{\infty}_{-\infty} \exp(-\frac{1}{2}d_{11}a^2_{12}+d_{12}\sqrt{a_{22}}a_{12})da_{12}=\frac{\sqrt{2\pi}}{\sqrt{d_{11}}}\exp(\frac{d^2_{12}a_{22}}{2d_{11}}),
\end{eqnarray*}

\footnotesize
\begin{eqnarray*}
\int^{\infty}_{-\infty} \exp(-\frac{1}{2}(d_{11}+d_{22}+2d_{12})a^2_{13}+2d_{13}\sqrt{a_{33}}a_{13})da_{13}=\frac{\sqrt{2\pi}}{\sqrt{d_{11}+d_{22}+2d_{12}}}\exp(\frac{2d^2_{13}a_{33}}{d_{11}+d_{22}+2d_{12}}),
\end{eqnarray*}
\normalsize
and
\footnotesize
\begin{eqnarray*}
\int^{\infty}_{-\infty} \exp(-\frac{1}{2}(d_{11}+d_{22}+2d_{12})a^2_{14}+2d_{14}\sqrt{a_{33}}a_{14})da_{14}=\frac{\sqrt{2\pi}}{\sqrt{d_{11}+d_{22}+2d_{12}}}\exp(\frac{2d^2_{14}a_{33}}{d_{11}+d_{22}+2d_{12}}).
\end{eqnarray*}
\normalsize
Therefore, the integral becomes
\footnotesize
\begin{eqnarray*}
I_{G_{1}}(\delta, D)&=&\Gamma(\frac{\delta}{2})2^{\frac{\delta}{2}}d^{-\frac{\delta+1}{2}}_{11}(2\pi)^{\frac{3}{2}}(d_{11}+d_{22}+2d_{12})^{-1}\\
&&\hspace{.5cm}\int^{\infty}_{0}a^{\frac{\delta-1}{2}}_{22}a^{\frac{\delta-1}{2}}_{33}\exp\{(-\frac{1}{2}d_{22}+\frac{d^2_{12}}{2d_{11}})a_{22}+
(-d_{33}+\frac{2d^2_{13}+2d^2_{14}}{d_{11}+d_{22}+2d_{12}})a_{33}\}da_{22}da_{33}\\
&=&\Gamma(\frac{\delta}{2})2^{\frac{\delta+3}{2}}d^{-\frac{\delta+1}{2}}_{11}\pi^{\frac{3}{2}}(d_{11}+d_{22}+2d_{12})^{-1}\\
&&\hspace{.5cm}\Gamma(\frac{\delta+1}{2})(\frac{2d_{11}}{d_{11}d_{22}-d^2_{12}})^{\frac{\delta+1}{2}}
\Gamma(\delta)(\frac{d_{11}+d_{22}+2d_{12}}{d_{33}(d_{11}+d_{22}+2d_{12})-2(d^2_{13}+d^2_{14})})^{\delta}\\
&=&\Gamma(\frac{\delta}{2})\Gamma(\frac{\delta+1}{2})\Gamma(\delta)\pi^{\frac{3}{2}} 2^{\delta+2}(d_{11}+d_{22}+2d_{12})^{\delta-1}[d_{33}(d_{11}+d_{22}+2d_{12})-2(d^2_{13}+d^2_{14})]^{-\delta}\\
&&\hspace{.5cm}(d_{11}d_{22}-d^2_{12})^{-\frac{\delta+1}{2}}\;.
\end{eqnarray*}
\normalsize

\section*{Appendix 2}
\subsection*{Numerical values for $D, E(K)$ and $\hat{K}$ and plots for Section 5}
We give here the matrices $D$, $E(K)$ and $\hat{K}$ as well as the traceplot and autocorrelation plot of $\log (|K|)$ for all graphs  in Fig. \ref{fig:1}. Here $\hat{K}$  has been computed with 5000 iterations after a 1000 iterations burn in and averaged over 100 simulations


\vspace{3mm}

\noindent {\bf Graph in Fig. \ref{fig:1}(a)}
\vspace{3mm}

$D=\left(\begin{array}{ccccccc}1&0&0&2&0&0&0\\0&2&0&2&0&0&0\\0&0&5&2&0&0&0\\2&2&2&25&2&0&0\\0&0&0&2&6&2&2\\
0&0&0&0&2&3&0\\0&0&0&0&2&0&4\end{array}\right)$,\;\;

$E(K)=\left(\begin{array}{ccccccc}1.1294&0&0&-0.0129&0&0&0\\0&0.5915&0&-0.0129&0&0&0\\0&0&0.2578&-0.0129&0&0&0\\-0.0129&-0.0129&-0.0129&0.0767&-0.0129&0&0\\
0&0&0&-0.0129&0.2589&-0.0129&-0.0129\\0&0&0&0&-0.0129&0.3699&0\\0&0&0&0&-0.0129&0&0.2817\end{array}\right),$

$\hat{K}=\left(\begin{array}{ccccccc}1.1274&0&0&-0.0127&0&0&0\\0&0.5961&0&-0.0127&0&0&0\\0&0&0.2563&-0.0127&0&0&0\\-0.0127&-0.0127&-0.0127&0.0767&-0.0127&0&0\\
0&0&0&-0.0127&0.2594&-0.0127&-0.0127\\0&0&0&0&-0.0127&0.3708&0\\0&0&0&0&-0.0127&0&0.2818\end{array}\right).$

\begin{figure}
         \centering
         \begin{subfigure}[b]{0.3\textwidth}
                 \includegraphics[width=\textwidth]{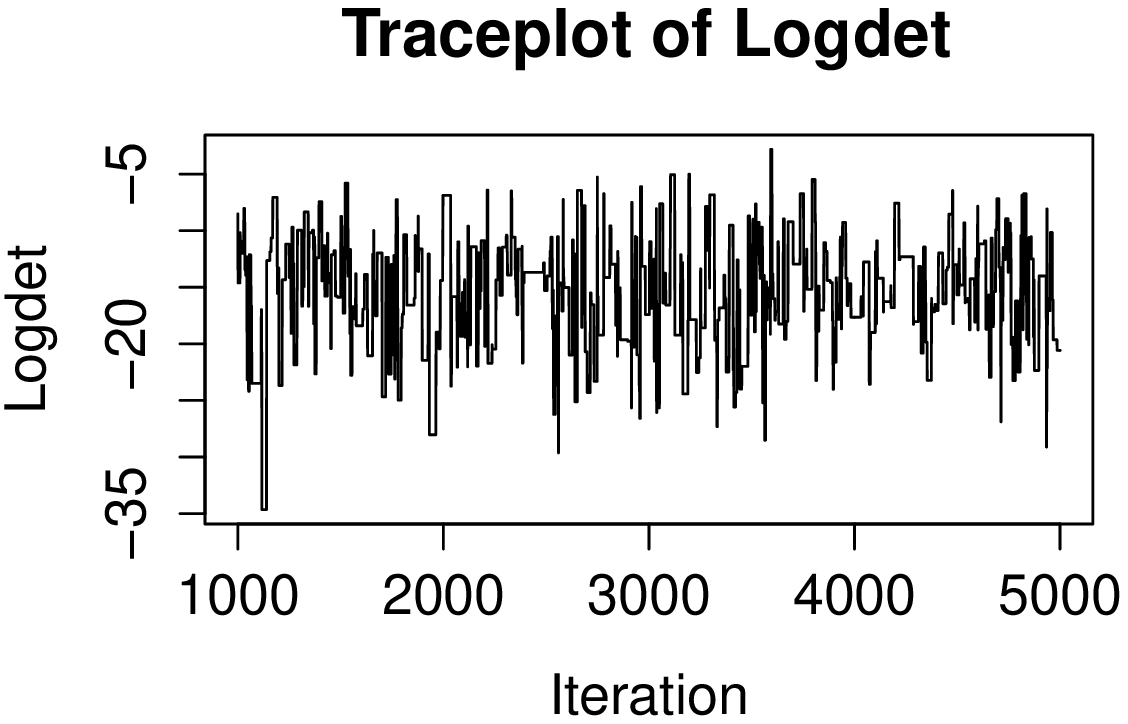}
                 \caption{Traceplot}
                 \label{fig:(a)trace}
         \end{subfigure}%
         ~ 
         \begin{subfigure}[b]{0.3\textwidth}
                 \includegraphics[width=\textwidth]{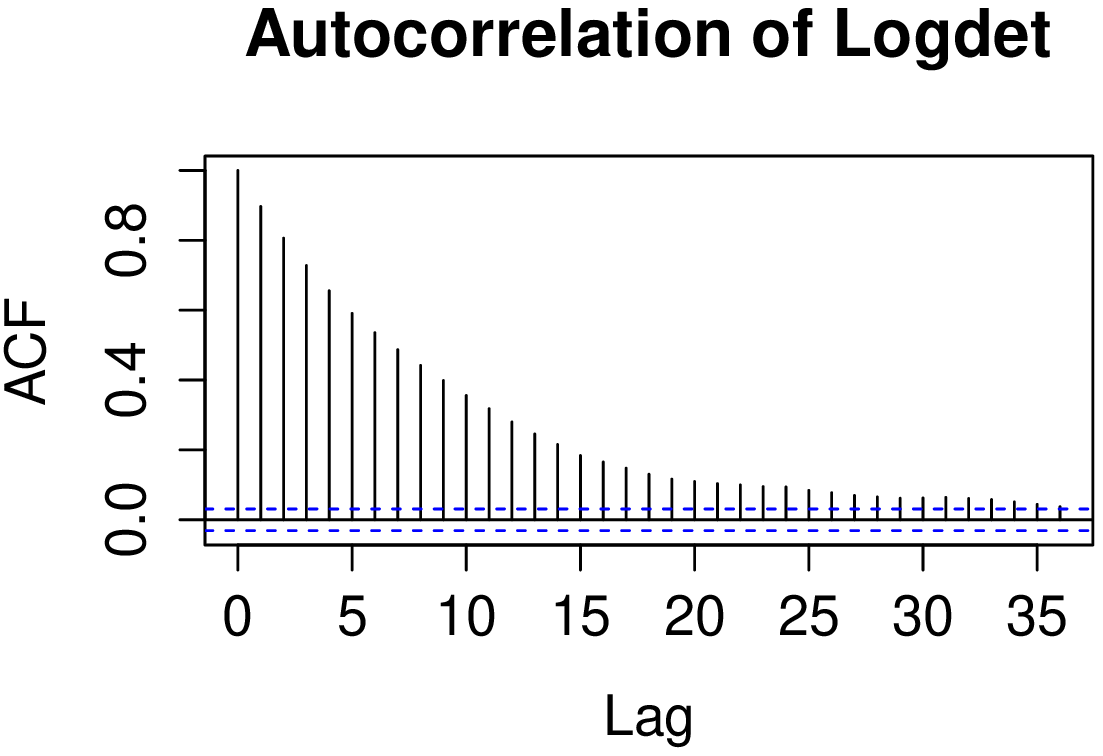}
                 \caption{ACF plot}
                 \label{fig:(a)auto}
         \end{subfigure}
~
         \caption{(a) Traceplot of $\log (|K|)$ v.s. the number of iterations. (b) Autocorrelation plot of $\log (|K|)$ for Fig. \ref{fig:1}(a). }\label{fig:TA1}
\end{figure}

\newpage

\noindent {\bf Graph in Fig. \ref{fig:1}(b)}
\vspace{3mm}

$D=\left(\begin{array}{ccccccccc}9&1&2&3&4&5&6&7&8\\1&25&0&0&0&0&0&0&0\\2&0&25&0&0&0&0&0&0\\
3&0&0&25&0&0&0&0&0\\4&0&0&0&25&0&0&0&0\\
5&0&0&0&0&25&0&0&0\\6&0&0&0&0&0&25&0&0\\7&0&0&0&0&0&0&25&0\\8&0&0&0&0&0&0&0&25\end{array}\right).$
\vspace{2mm}

\scriptsize

$E(K)=\left(\begin{array}{ccccccccc}1.4778&-0.0112&-0.0225&- 0.0338&-0.0451&-0.0563&-0.0676&-0.0789&-0.0902\\-0.0112&0.1015&0&0&0&0&0&0&0\\-0.0225&0&0.1015&0&0&0&0&0&0\\
- 0.0338&0&0&0.1015&0&0&0&0&0\\-0.0451&0&0&0&0.1015&0&0&0&0\\
-0.0563&0&0&0&0&0.1015&0&0&0\\-0.0676&0&0&0&0&0&0.1015&0&0\\-0.0789&0&0&0&0&0&0&0.1015&0\\-0.0902&0&0&0&0&0&0&0&0.1015\end{array}\right).$
\normalsize
\vspace{2mm}

\scriptsize

$\hat{K}=\left(\begin{array}{ccccccccc}1.4690&-0.0113&-0.0223&-0.0341&-0.0455&-0.0569&-0.0677&-0.0796&-0.0905\\-0.0113&0.1016&0&0&0&0&0&0&0\\-0.0223&0&0.1016&0&0&0&0&0&0\\
-0.0341&0&0&0.1016&0&0&0&0&0\\-0.0455&0&0&0&0.1016&0&0&0&0\\
-0.0569&0&0&0&0&0.1016&0&0&0\\-0.0677&0&0&0&0&0&0.1016&0&0\\-0.0796&0&0&0&0&0&0&-0.1016&0\\-0.0905&0&0&0&0&0&0&0&-0.1016\end{array}\right).$
\normalsize

\begin{figure}
         \centering
         \begin{subfigure}[b]{0.3\textwidth}
                 \includegraphics[width=\textwidth]{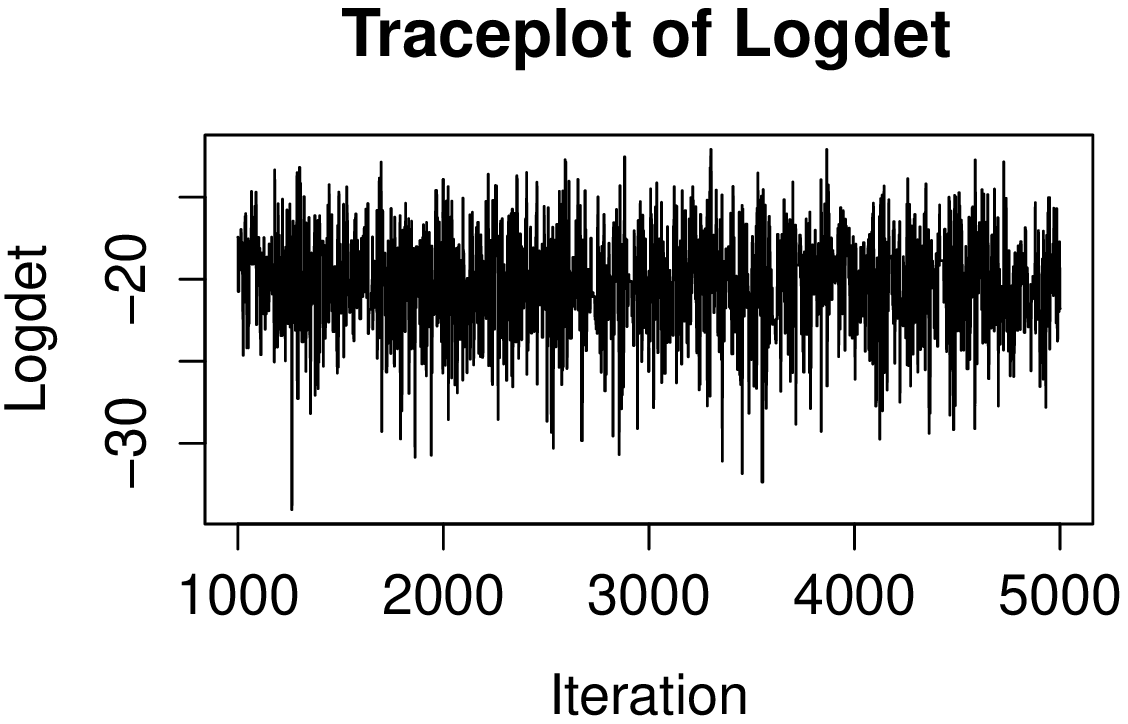}
                 \caption{Traceplot}
                 \label{fig:(b)trace}
         \end{subfigure}%
         ~ 
         \begin{subfigure}[b]{0.3\textwidth}
                 \includegraphics[width=\textwidth]{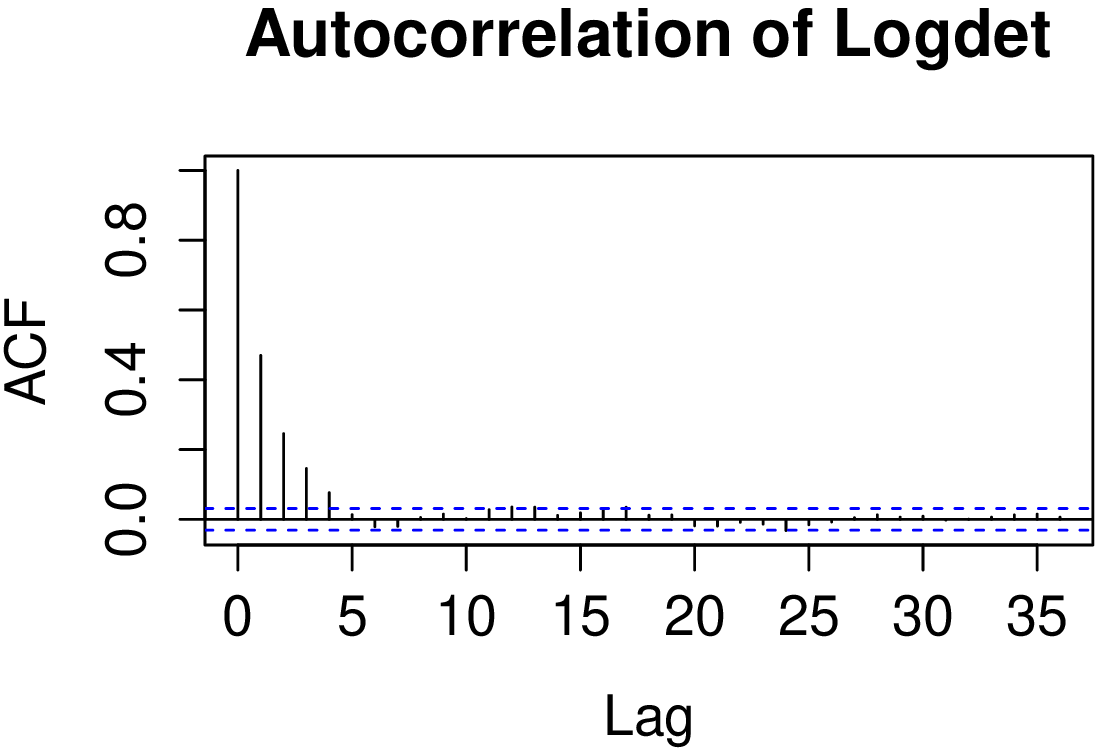}
                 \caption{ACF plot}
                 \label{fig:(b)auto}
         \end{subfigure}

         \caption{Traceplot and Autocorrelation plot of $\log (|K|)$ for Graph in Fig. \ref{fig:1}(b). }\label{fig:TA2}
\end{figure}
\vspace{3mm}

\newpage
\noindent {\bf Graph in Fig.  \ref{fig:1}(c)}
\vspace{3mm}

$D=\left(\begin{array}{cccccccccc}25&9&8&7&6&5&4&3&2&1\\9&25&0&0&0&0&0&0&0&0\\8&0&25&0&0&0&0&0&0&0\\
7&0&0&25&0&0&0&0&0&0\\6&0&0&0&25&0&0&0&0&0\\
5&0&0&0&0&25&0&0&0&0\\4&0&0&0&0&0&25&0&0&0\\3&0&0&0&0&0&0&25&0&0\\2&0&0&0&0&0&0&0&25&0\\1&0&0&0&0&0&0&0&0&25\end{array}\right).$
\vspace{2mm}

\scriptsize

$E(K)=\left(\begin{array}{cccccccccc}0.1229&-0.0013&-0.0026&-0.0039&-0.0052&-0.0065&-0.0078&-0.0091&-0.0104&-0.0117\\-0.0013&0.1229&0&0&0&0&0&0&0&0\\
-0.0026&0&0.1229&0&0&0&0&0&0&0\\
-0.0039&0&0&0.1229&0&0&0&0&0&0\\-0.0052&0&0&0&0.1229&0&0&0&0&0\\
-0.0065&0&0&0&0&0.1229&0&0&0&0\\-0.0078&0&0&0&0&0&0.1229&0&0&0\\-0.0091&0&0&0&0&0&0&0.1229&0&0\\-0.0104&0&0&0&0&0&0&0&0.1229&0\\-0.0117&0&0&0&0&0&0&0&0&0.1229\end{array}\right).$
\normalsize
\vspace{2mm}

\scriptsize
$\hat{K}=\left(\begin{array}{cccccccccc}0.1223&-0.0012&-0.0027&-0.0041&-0.0055&-0.0064&-0.0077&-0.0090&-0.0102&-0.0115\\-0.0012&0.1223&0&0&0&0&0&0&0&0\\
-0.0027&0&0.1223&0&0&0&0&0&0&0\\-0.0041&0&0&0.1223&0&0&0&0&0&0\\-0.0055&0&0&0&0.1223&0&0&0&0&0\\
-0.0064&0&0&0&0&0.1223&0&0&0&0\\-0.0077&0&0&0&0&0&0.1223&0&0&0\\-0.0090&0&0&0&0&0&0&0.1223&0&0\\-0.0102&0&0&0&0&0&0&0&0.1223&0\\-0.0115&0&0&0&0&0&0&0&0&0.1223
\end{array}\right).$

\begin{figure}
         \centering
         \begin{subfigure}[b]{0.3\textwidth}
                 \includegraphics[width=\textwidth]{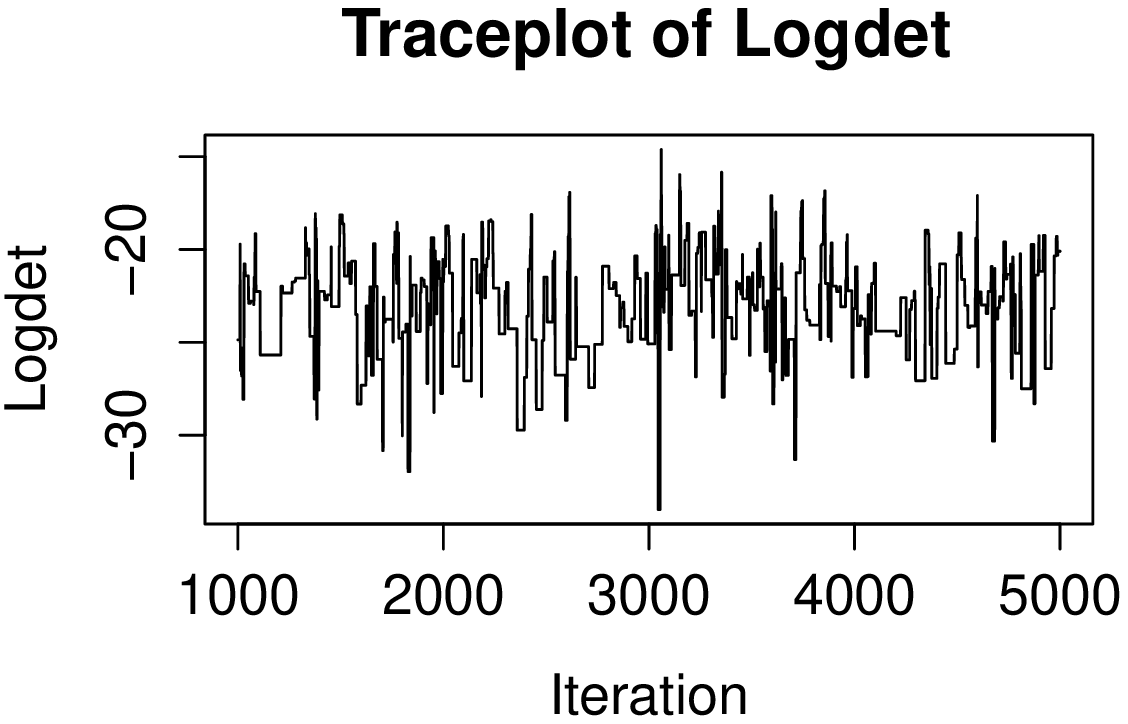}
                 \caption{Traceplot}
                 \label{fig:(c)trace}
         \end{subfigure}%
         ~ 
         \begin{subfigure}[b]{0.3\textwidth}
                 \includegraphics[width=\textwidth]{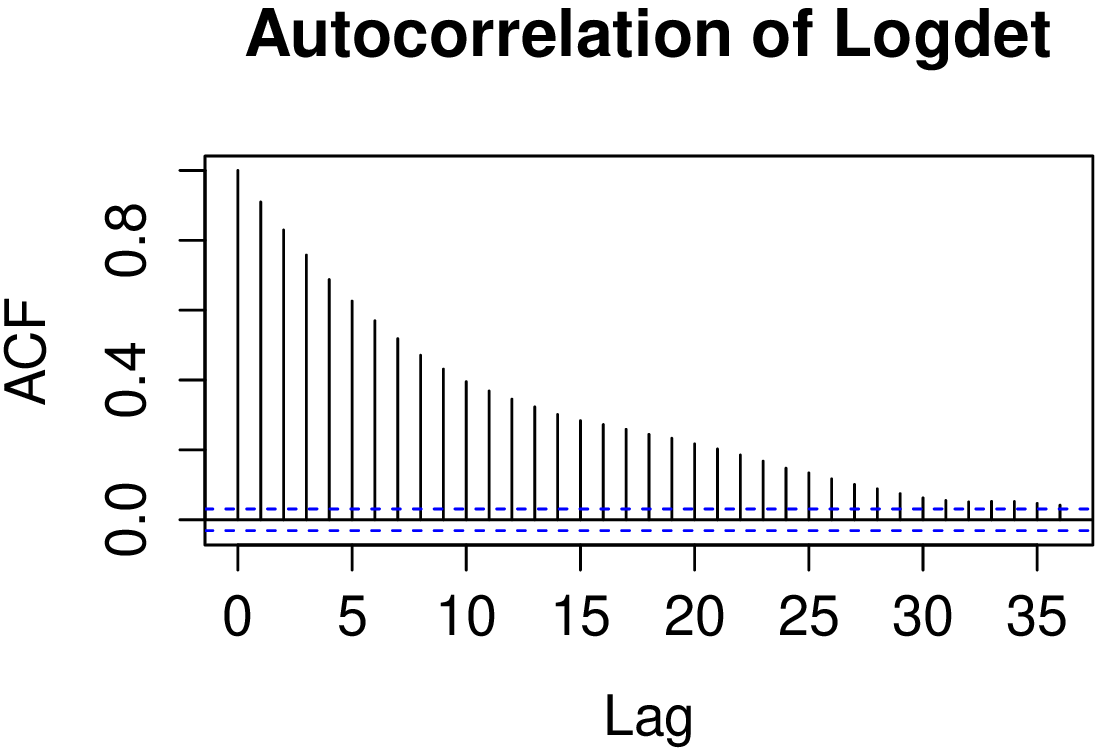}
                 \caption{ACF plot}
                 \label{fig:(c)auto}
         \end{subfigure}

         \caption{Traceplot and Autocorrelation plot of $\log (|K|)$ for Graph in Fig. \ref{fig:1}(c). }\label{fig:TA3}
\end{figure}
\normalsize
\vspace{3mm}

\noindent {\bf Graph  in Fig. \ref{fig:1}(d)}
\vspace{3mm}

$D=\left(\begin{array}{ccc}3&1&2\\1&4&2\\2&2&5\end{array}\right),\;\;\;E(K)=\left(\begin{array}{ccc}1.8108&-0.0073&-0.5517\\-0.0073&1.4472&-0.5517\\-0.5517&-0.5517&1.2413\end{array}\right)$
\newline and
\vspace{2mm}

$\hat{K}=\left(\begin{array}{ccc}1.8097&-0.0075&-0.5514\\-0.0075&1.4485&-0.5514\\-0.5514&-0.5514&1.2442\end{array}\right)$.

\begin{figure}
         \centering
         \begin{subfigure}[b]{0.3\textwidth}
                 \includegraphics[width=\textwidth]{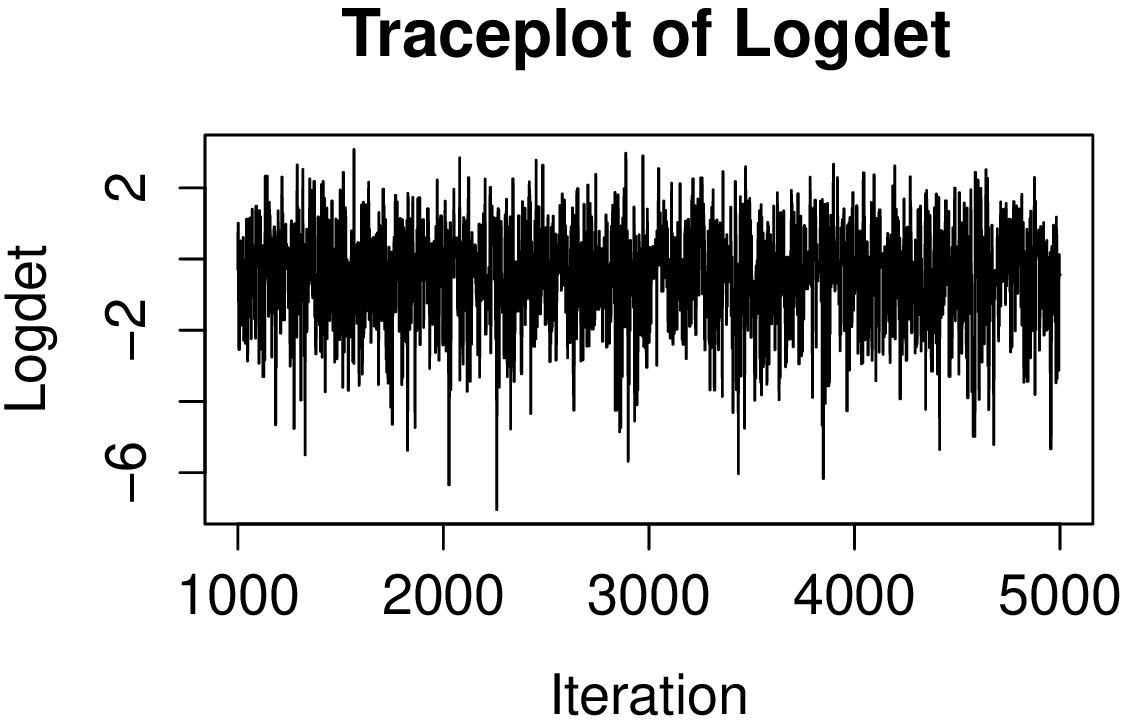}
                 \caption{Traceplot}
                 \label{fig:(d)trace}
         \end{subfigure}%
         ~ 
         \begin{subfigure}[b]{0.3\textwidth}
                 \includegraphics[width=\textwidth]{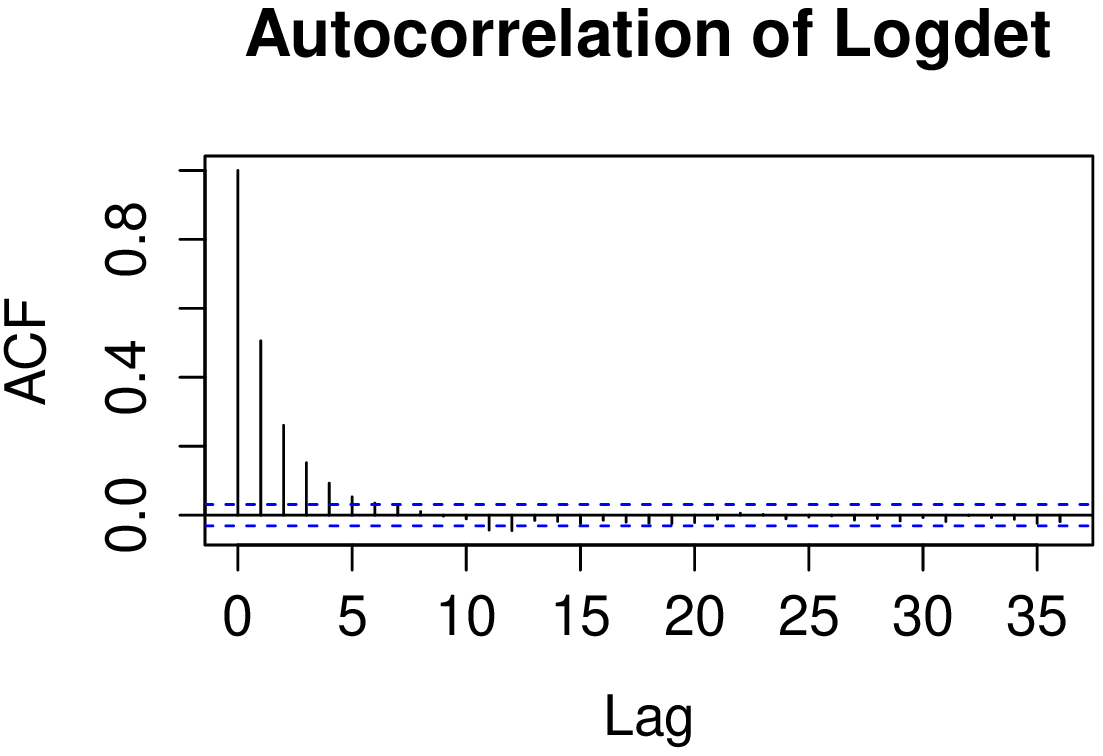}
                 \caption{ACF plot}
                 \label{fig:(d)auto}
         \end{subfigure}

         \caption{Traceplot and Autocorrelation plot of $\log (|K|)$ for Graph in Fig. \ref{fig:1}(d). }\label{fig:TA4}
\end{figure}
\newpage

\noindent {\bf Graph in Fig. \ref{fig:1} (e)}
\vspace{3mm}

$D=\left(\begin{array}{cccc}2&1&3&4\\1&1&3&4\\3&3&200&0\\4&4&0&200\end{array}\right),\;\;\;E(K)=
\left(\begin{array}{cccc}4.4631&-3.5368&-0.0189&-0.0252\\-3.5368&8.4631&-0.0189&-0.0252\\-0.0189&-0.0189&0.0157&0\\-0.0252&-0.0252&0&0.0157\end{array}\right)$
\newline and
\vspace{2mm}

$\hat{K}=\left(\begin{array}{cccc}4.4714&-3.5386&-0.0192&-0.0256\\-3.5386&8.4658&-0.0192&-0.0256\\-0.0192&-0.0192&0.0158&0\\-0.0256&-0.0256&0&0.0158\end{array}\right)$.

\begin{figure}
         \centering
         \begin{subfigure}[b]{0.3\textwidth}
                 \includegraphics[width=\textwidth]{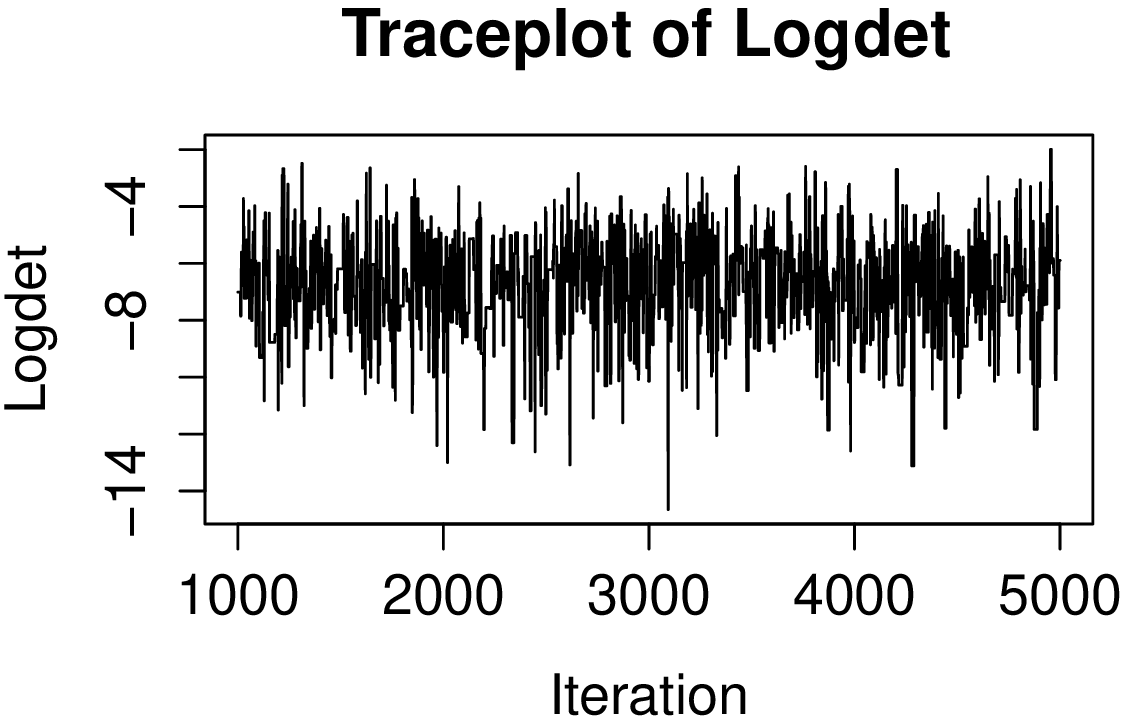}
                 \caption{Traceplot}
                 \label{fig:(e)trace}
         \end{subfigure}%
         ~ 
         \begin{subfigure}[b]{0.3\textwidth}
                 \includegraphics[width=\textwidth]{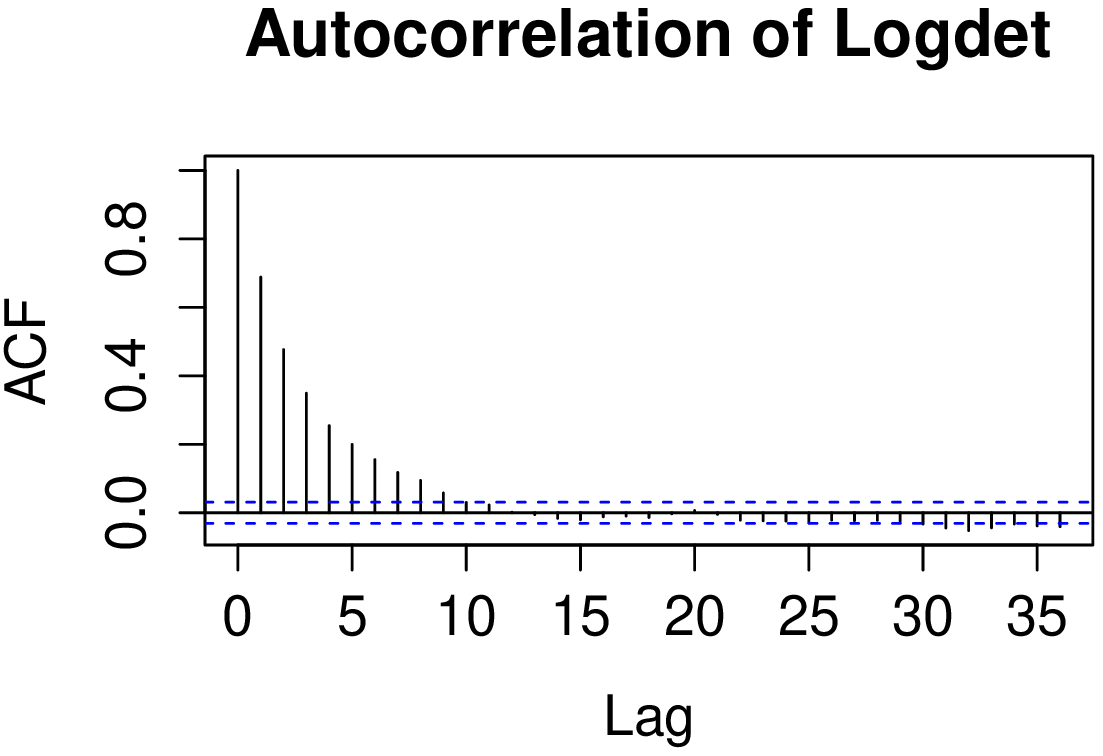}
                 \caption{ACF plot}
                 \label{fig:(e)auto}
         \end{subfigure}

         \caption{Traceplot and Autocorrelation plot of $\log (|K|)$ for Graph in Fig. \ref{fig:1}(e). }\label{fig:TA5}
\end{figure}

\section*{Appendix 3}
\subsection*{Estimates and batch standard errors for entries of $K$ for the models of Section 6}
The estimates  and batch standard errors are given below for the entries of $K$ listed in lexicographic order.

\begin{table}[htbp!]
\centering
\caption{The average estimates for entries of $K$ for Fig.  \ref{fig:7} (b) when $p=20$}
\begin{tabular}{ccccccccc}
         0.1072 &0.0100& 0.0096& 0.0322 &0.0109& 0.0093& 0.0102& 0.0105& 0.0101\\
         0.0103& 0.0099& 0.0106&0.0100 &0.0099 &0.0109& 0.0104 &0.0111& 0.0116\\
         0.0104& 0.0100 &0.0113& 0.0115
   \end{tabular}
\label{table:5}
\end{table}

\begin{table}[h!]
\centering
\caption{The batch standard errors for Fig.  \ref{fig:7} (b) when $p=20$}
\begin{tabular}{ccccccccc}
         0.0004& 0.0005& 0.0005& 0.0001 &0.0005& 0.0005& 0.0005& 0.0005\\
         0.0005& 0.0005& 0.0005 &0.0004& 0.0005& 0.0004 &0.0005& 0.0004\\
         0.0005& 0.0005& 0.0005 &0.0005& 0.0004&0.0005
   \end{tabular}
\label{table:6}
\end{table}

\begin{table}[h!]
\centering
\caption{The average estimates for entries of $K$ for Fig.  \ref{fig:7} (b) when $p=30$}
\begin{tabular}{cccccccccc}
         0.1217& 0.0109& 0.0126& 0.0366& 0.0109& 0.0118& 0.0120& 0.0120& 0.0115& 0.0122\\
         0.0108& 0.0121& 0.0113& 0.0119& 0.0125& 0.0114& 0.0120& 0.0112& 0.0119& 0.0131\\
         0.0115& 0.0125& 0.0116& 0.0132& 0.0110& 0.0119& 0.0119& 0.0107& 0.0129& 0.0119\\
         0.0124& 0.0119\\
   \end{tabular}
\label{table:7}
\end{table}

\begin{table}[h!]
\centering
\caption{The batch standard errors for Fig.  \ref{fig:7} (b) when $p=30$}
\begin{tabular}{cccccccc}
         0.0008& 0.0006& 0.0006 &0.0003& 0.0006& 0.0006& 0.0006 &0.0006\\
         0.0006& 0.0006& 0.0006& 0.0006& 0.0006& 0.0006& 0.0006& 0.0006\\
         0.0006& 0.0006& 0.0006& 0.0006& 0.0006& 0.0006& 0.0006& 0.0006\\
         0.0006& 0.0005& 0.0006& 0.0005& 0.0006& 0.0006& 0.0005& 0.0005\\
   \end{tabular}
\label{table:8}
\end{table}

\begin{table}[h!]
\centering
\caption{The average estimates for entries of $K$ for Fig.  \ref{fig:7} (c) when $p=20$}
\begin{tabular}{cccccccccc}
         0.1102& 0.0106& 0.0104& 0.0347& 0.1135& 0.0329& 0.1104& 0.0335& 0.1113& 0.0332\\
         0.1103& 0.0326& 0.1157& 0.0330& 0.1082& 0.0333& 0.1083& 0.0318& 0.1096& 0.0326\\
         0.1059& 0.0311\\
   \end{tabular}
\label{table:10}
\end{table}

\begin{table}[h!]
\centering
\caption{The batch standard errors for Fig.  \ref{fig:7} (c) when $p=20$}
\begin{tabular}{cccccccc}
         0.0011& 0.0002& 0.0002& 0.0004& 0.0012& 0.0003& 0.0011& 0.0004\\
         0.0012& 0.0004& 0.0012& 0.0003& 0.0013& 0.0003& 0.0012& 0.0003\\
         0.0012& 0.0004& 0.0011& 0.0003& 0.0012& 0.0003\\
   \end{tabular}
\label{table:9}
\end{table}

\begin{table}[h!]
\centering
\caption{The average estimates for entries of $K$ for Fig.  \ref{fig:7} (c) when $p=30$}
\begin{tabular}{cccccccccc}
         0.1295& 0.0117& 0.0111& 0.0384& 0.1253& 0.0386& 0.1266& 0.0376& 0.1248& 0.0357\\
         0.1214& 0.0358& 0.1209& 0.0357& 0.1181& 0.0358& 0.1161& 0.0349& 0.1126& 0.0345\\
         0.1123& 0.0339& 0.1126& 0.0338& 0.1136& 0.0330& 0.1143& 0.0323& 0.1083& 0.0324\\
         0.1077& 0.0318\\
   \end{tabular}
\label{table:11}
\end{table}

\begin{table}[h!]
\centering
\caption{The batch standard errors for Fig. \ref{fig:7} (c) when $p=30$}
\begin{tabular}{cccccccccc}
         0.0013& 0.0002& 0.0002& 0.0003& 0.0012& 0.0004& 0.0011& 0.0004\\
         0.0011& 0.0003& 0.0011& 0.0003& 0.0013& 0.0004& 0.0010& 0.0004\\
         0.0011& 0.0003& 0.0010& 0.0003& 0.0010& 0.0003& 0.0011& 0.0003\\
         0.0010& 0.0003& 0.0012& 0.0003& 0.0010& 0.0003& 0.0010& 0.0003\\
   \end{tabular}
\label{table:11}
\end{table}

\end{document}